\begin{document}

\title{Weak Values and Modular Variables From a Quantum Phase Space
Perspective }
\author{Augusto C. Lobo        \and
		Yakir Aharonov		\and
		Jeff Tollaksen		\and        
		Elizabeth M. Berrigan \and        
        Clyffe A. Ribeiro   
        }

\institute{Augusto C. Lobo \at
              Departamento de F\'{i}sica, Instituto de Ci\^{e}ncias Exatas e Biol\'{o}gicas, Universidade Federal de
Ouro Preto, Ouro Preto, MG 35400-000, Brazil. \\
              %Tel.: +55-31-3559-1683\\
              %Fax: +55-31-3559-1667\\
              \email{lobo@iceb.ufop.br}             \\
              \emph{Present address:} Institute for Quantum Studies, Chapman University, 1 University Dr, Orange, CA 92866, USA.              \and
			Yakir Aharonov   \at
              Tel Aviv University, School of Physics and Astronomy, Tel Aviv 69978, Israel.    \\
			  Schmid College of Science and Technology, Chapman University, 1 University Dr, Orange, CA 92866, USA.  \\
              \email{aharonov@chapman.edu}  \and          
			Jeff Tollaksen \at
              Institute for Quantum Studies, Chapman University, 1 University Dr, Orange, CA 92866, USA.  \\     
              \email{tollakse@chapman.edu}   \and 
            Elizabeth M. Berrigan \at
              Institute for Quantum Studies, Chapman University, 1 University Dr, Orange, CA 92866, USA.  \\     
              \email{berri104@mail.chapman.edu}   \and                        
           Clyffe A. Ribeiro \at
              Departamento de F\'{i}sica, Instituto de Ci\^{e}ncias Exatas, Universidade Federal de Minas Gerais, Belo Horizonte, MG 31270-901, Brazil. \\
              %Tel.: +55-31-3409-5633\\
              %Fax.: +55-31-3409-5600\\
              \email{clyffe@fisica.ufmg.br} \\
              \emph{Present address:} Institute for Quantum Studies, Chapman University, 1 University Dr, Orange, CA 92866, USA.   
}

\date{Received: date / Accepted: date}

\dedication{In memory of our colleague and friend, Maria Carolina Nemes, who was a shining example of excellence and team effort.}

\maketitle

\begin{abstract}
We address two major conceptual developments introduced by Aharonov and
collaborators through a \textit{quantum phase space} approach: the concept
of \textit{modular variables} devised to explain the phenomena of quantum
dynamical non-locality and the \textit{two-state formalism} for Quantum
Mechanics which is a retrocausal time-symmetric interpretation of quantum
physics which led to the discovery of \textit{weak values.} We propose that
a quantum phase space structure underlies these profound physical insights
in a unifying manner. For this, we briefly review the Weyl-Wigner and the
coherent state formalisms as well as the inherent symplectic structures of
quantum projective spaces in order to gain a deeper understanding of the
weak value concept.

We also review Schwinger's finite quantum kinematics so that we may apply
this discrete formalism to understand Aharonov's modular variable concept in
a different manner that has been proposed before in the literature. We
discuss why we believe that this\ is indeed the correct kinematic framework
for the modular variable concept and how this may shine some light on the
physical distinction between quantum dynamical non-locality and the
kinematic non-locality, generally associated with entangled quantum systems.

 \PACS{03.65.Ta \and 42.50.-p \and 03.65.Vf \and 03.65.Aa }
% \subclass{MSC code1 \and MSC code2 \and more}
\end{abstract}

\keywords{weak values \and modular variables \and phase space}

%\subtitle{Do you have a subtitle?\\ If so, write it here}

%\titlerunning{Short form of title}        % if too long for running head

%\authorrunning{Short form of author list} % if too long for running head

% The correct dates will be entered by the editor

\section{Introduction}

\label{intro}

There are two major conceptual developments introduced by Aharonov and
collaborators which we believe provide the underpinnings for a more
fundamental understanding of Quantum Mechanics (QM). The \textit{first}
is the concept of \textit{modular variables} which was devised to
characterize the kind of dynamical non-locality that become widely
recognized when he, together with David Bohm, shocked the world of physics
in the late fifties with the introduction of the so called Aharonov-Bohm
(AB) effect \cite{aharonov_bohm1959}. Their paper discussed a situation
where the wave functions that represent the electrons suffer a phase shift from the magnetic field of a
solenoid even without having had any direct contact with the field. In \cite%
{aharonovpendletonpetersen}, Aharonov, Pendleton and Petersen introduced the
concept of \textit{modular variables} to explain this kind of \textit{%
non-local} quantum effect as the modular momentum exchange between particles
and fields. This is contrary to the usual view where the AB phenomenon is
explained by a \textit{local} interaction between the particles and field
potentials, even if the potentials are considered somewhat unphysical
because they are defined only up to some gauge transformation. Even the
interference effects of a single particle passing through a slit must be
described by the this non-local dynamical effect. Imagine a particle beam
diffracting through a two-slit apparatus (with a distance $L$ between the
slits). The Schr\"{o}dinger wave-like picture is capable of explaining the
interference by analogy with classical wave-like interferometry. But this
picture is misleading and it is not capable of giving a full quantum
mechanical explanation of the underlying phenomena. One can lower the
intensity of the beam until only \textit{one} single particle arrives at the
apparatus at each moment. One may choose to open or close one of the slits
until the last instant. How does one of the slits \textquotedblleft know"
about the other one instantaneously? To answer this question, one can devise
the following simplified mathematical model: Consider the problem as a
two-dimensional setting in the $x$-$y$ plane with the beam in the positive $y
$ direction and the apparatus with the slits in the $x$ direction. We
consider only the $x$ direction where the particle \textquotedblleft feels"
a potential $\phi (x)$.  It is not difficult to see that the translation
operator $\hat{V}_{L}=e^{i\hat{P}L}$ obeys a \textit{non-local} equation of
motion in the Heisenberg picture:%
\begin{equation}
\frac{d}{dt}\left( \hat{V}_{L}\right) =i\left[ \phi (\hat{Q}),\hat{V}_{L}%
\right] =i\left( \phi (\hat{Q})-\phi (\hat{Q}+L)\right) \hat{V}_{L}
\label{dynamical non-local equation}
\end{equation}

This kind of \textit{dynamical} non-locality seems to be of a fundamentally
different physical nature from that of the \textit{kinematic} type of
non-locality that occurs for entangled systems that are spatially appart as
for instance, the paradigmatic EPR pair of particles. One of the main
results of this paper is to advance our understanding of the differences
between these two types of non-locality by using Schwinger%
%TCIMACRO{\U{b4}}%
%BeginExpansion
\'{}%
%EndExpansion
s finite kinematic formalism in order to present an alternative approach for
the construction of the Hilbert spaces for modular variables. Though the AB
effect is nowadays discussed in any undergraduate text-book of QM, according
to Aharonov, the \textit{true} reason for these strange phenomena, where
electric and magnetic fields seem to be capable of exerting
\textquotedblleft action at a distance" upon charged quantum particles, has
not been generally appreciated. Aharonov developed an intuition on this
concept through the \textit{Heisenberg picture} of QM\ and he has argued in
a convincing manner that it is easier to think of this quite subtle issue
through Heisenberg's picture\ rather than Schr\"{o}dinger's wave-like
picture \cite{aharonov2003book}. In fact, in the early seventies, Heisenberg
himself was very pleased to learn from Aharonov how to describe the concept
of quantum interference based on his own formulation of quantum physics
instead of the usual particle-wave (Schr\"{o}dinger-like) approach \cite%
{aharonov2013}.

The \textit{second} concept is the advancement of the \textit{two-state
formalism} for QM \cite{AharonovBergmannLebowitz}. This is a radically
different (time-symmetric) view of QM\ that has also been developed since
the sixties and implies a notion of retrocausality in quantum physics which
led to the discovery of \textit{weak values} \cite{AharonovAlbertVaidman}. The development of the concept
of weak values and weak measurements has spanned a number of important
theoretical and experimental breakthroughs from quantum counterfactual
investigations to high precision metrology and even led to some purely
mathematical ideas such as the concept of superoscillations \cite%
{aharonov2002revisiting,dixon2009,hosten2008,berry2010typical}.

We recall that new points of view concerning a certain subject are always
welcome. It is common historical knowledge that the establishment of bridges
between distinct disciplines is usually a very fruitful enterprise for both
subjects. This interplay has brought us (from at least Newton and Galileo to
Einstein and Minkowski, passing through Euler, Lagrange, Hamilton, Maxwell
and many others) a wonderful multitude of results where mathematical
structures are discovered by contemplating natures wonders and physical
theories are guessed from deep and beautiful mathematics. In this paper, we
approach the many deep physical insights of Aharonov with an eye towards the
quantum phase space structure which we propose may underlie them in a
unifying \ manner.

The structure of this paper is as follows: in section 2, we review the
mathematical structure of classical phase space as the geometric notion that
\textquotedblleft classical mechanics is symplectic
geometry\textquotedblright\ \cite{arnold,abraham,sudarshan2006} so that the
focus is the natural symplectic structure of cotangent bundles of
configuration manifolds of classical particles restricted to holonomic
constraints. This symplectic geometric structure appears consistently in
almost all the topics covered in this paper. In section 3, we present an
intrinsic formulation for the operator algebra of the WW Transform (similar
to Dirac's Bra and Ket notation) that is necessary to construct this
formalism in a more compact and elegant way than the usual manner. In
section 4, we discuss the natural symplectic structure that projective
spaces of finite dimensional Hilbert spaces inherit from their Hermitian
structure and for this reason are somewhat structurally analogous to
classical phase spaces. In section 5, we briefly review the coherent state
concept in terms of the WW basis mainly to set the stage for a discussion of
the quantum transforms that implement linear area preserving maps on the
phase plane. In section 6, we apply some of these ideas to a phase space
study of the weak value concept, which has become an important theoretical
and experimental tool in modern investigations of quantum physics. We look
at the phase space of the \textit{measuring apparatus} that performs the
weak value measurement of some quantum system. In section 7, we approach the
weak measurements and von Neumann pre-measurements by the opposite approach:
in particular, we look instead to the geometric structure of the \textit{%
measured system}. In section 8, we review the \textit{geometry of
deterministic measurements} and of \textit{completely uncertain operators}.
We also discuss some examples that can be seen as \textit{partially}
deterministic operators. This is the starting point in order to one fully
appreciate the Heisenberg picture approach that Aharonov has championed for
the description of quantum interference phenomena. Finally, in section 9, we
introduce the modular variable concept. We review the \textit{original}
approach due to Aharonov and collaborators for the construction of an
explicit Hilbert space (an appropriate \textquotedblleft qudit space" for
modular variables). We then review Schwinger's finite quantum kinematics in
order to apply this discrete formalism to understand Aharonov's modular
variable concept in a different manner that has been proposed until now in
the literature. In the original approach, the total quantum space can be
thought as the \textit{direct orthogonal sum} between the modular variable
qudit space and \textquotedblleft the rest" of the infinite dimensional
quantum space. Within the Schwinger formalism, we introduce a new proposal
based on the \textit{tensor product} between the modular variable qudit
space and the rest of the quantum space. While there are some
number-theoretic subtleties in order to perform this construction, we
discuss some of the advantages of considering this Schwinger finite quantum
kinematic-based proposal as the correct \textquotedblleft kinematic arena"
for the modular variable concept. In section 10 we finalize with some
concluding remarks and we also set stage for further research.

\section{The Mathematical Structure of Classical Phase Space}

(The main references for this section are \cite{arnold} and \cite{abraham})

What is meant here by a \textquotedblleft classical
structure\textquotedblright\ is the totality of the mathematical formalism
associated with conservative (Hamiltonian) dynamical systems, \textit{i.e.}:

\begin{enumerate}
\item An even-dimensional differential manifold $M$ \newline
$(\dim M=2n)$;

\item A canonical coordinate system $(q^{i},p_{i})$ \newline
$(i=1,2,\ldots,n)$ over $M$, where $n$ is the number of degrees of freedom
of the system;

\item A non-degenerate closed 2-form $%
%TCIMACRO{\U{3a9} }%
%BeginExpansion
\Omega
%EndExpansion
$ on $M$.
\end{enumerate}

In other words: $d\Omega=0$ and the non-degeneracy meaning that: $%
\Omega(V,W)=0\Rightarrow V=0$ or $W=0$, $\forall\text{ } V, W \in T_{p}M$
and $\forall\text{ } p\in M$ and where $TM$ is the tangent bundle of $M$ and 
$T_{p}M$ is the tangent space (fiber) of $M$ at $p\in M$. A differential
manifold with the above structure is called symplectic. A theorem due to
Darboux guarantees that given a point $p\in M$, there is always an open set
of $M$ that contains $p$ and that admits a pair of canonical coordinates ($%
q^{i},p_{i}$ ) such that $\Omega=-dp_{i}\wedge dq^{i}$ (we use henceforth
the sum convention unless we explicitly state the contrary). In Classical
Newtonian Mechanics, the above structure arises naturally as the cotangent
bundle of a configuration manifold $Q$ (of dimension $n$) of $m$ particles
submitted to $3m-n$ holonomic constraints: $M=T^{\ast}Q$. In fact, any
cotangent bundle has a natural 1-form $\theta=p_{i}dq^{i}$, where $q^{i}$
are the coordinates of $Q$ and $p^{i}$ are the coordinates of the co-vectors
of $T^{\ast}Q$. It is easy to verify that indeed the 2-form defined by $%
\Omega=-d\theta$ satisfies automatically the conditions (1-3). Yet not all
symplectic manifolds are cotangent bundles.

An example that will be of great importance for us later is that of quantum
projective spaces. We may define a dynamical structure on a symplectic
manifold (the phase space) by introducing a real-value Hamiltonian function $%
H:M\rightarrow\mathbb{R}$ which maps each point in $M$ to a definite energy
value. We can also define a symplectic gradient $X_{H}$ of $H$ (through $%
%TCIMACRO{\U{3a9} }%
%BeginExpansion
\Omega
%EndExpansion
$) by the following relation $%
%TCIMACRO{\U{3a9} }%
%BeginExpansion
\Omega
%EndExpansion
(X_{H},Y)=dH(Y)$ for any vector field $Y$ defined over $M$. In canonical
coordinates, the vector field $X_{H}$ is given by%
\begin{equation}
X_{H}=\frac{\partial H}{\partial p_{i}}\frac{\partial}{\partial q^{i}}-\frac{%
\partial H}{\partial q^{i}}\frac{\partial}{\partial p_{i}}
\end{equation}

The motion of the system is given by the integral curves of $X_{H}$ and the
first order ODE's associated to them are the well-known Hamilton equations:%
\begin{equation}
\dot{q}^{i}=\frac{\partial H}{\partial p_{i}}\qquad\text{and}\qquad\dot{p}%
_{i}=-\frac{\partial H}{\partial q^{i}}
\end{equation}

A classical observable is any well-behaved real function defined on $M$. For
example, the observables $Q^{i}$ and $P_{i}$ are respectively the $i$-th
components of the generalized position and momentum%
\begin{equation*}
\begin{array}{c}
Q^{i}:M\rightarrow\mathbb{R} \\ 
(q^{i},p_{i})\longrightarrow q^{i}%
\end{array}
\qquad\text{and}\qquad%
\begin{array}{c}
P_{i}:M\rightarrow\mathbb{R} \\ 
(q^{i},p_{i})\longrightarrow p_{i}%
\end{array}%
\end{equation*}

The Poisson brackets of two arbitrary observables $f$ and $g$ can then be
defined as:%
\begin{equation}
\{f,g\}=%
%TCIMACRO{\U{3a9} }%
%BeginExpansion
\Omega
%EndExpansion
(X_{f},X_{g})=\frac{\partial g}{\partial p_{i}}\frac{\partial f}{\partial
q^{i}}-\frac{\partial f}{\partial p_{i}}\frac{\partial g}{\partial q^{i}},
\end{equation}
and the temporal evolution of a general observable \newline
$O(q^{i},p_{i})$ can be written as%
\begin{equation}
\frac{dO}{dt}=\{O,H\}  \label{classical time evolution of observables}
\end{equation}

The different formulations of Quantum Mechanics have shown from the very
beginning a close relation with the structure of classical analytical
mechanics as one can infer from common denominations of quantum physics as
like, for example, the \textquotedblleft Hamiltonian
operator\textquotedblright\ for the generator of time displacements and the
formal analogies of equations like the Heisenberg equation and its classical
analog given by \eqref{classical time evolution of observables} (we use
henceforth $\hbar=1$ units)%
\begin{equation}
i\frac{d\hat{O}}{dt}=[\hat{O},\hat{H}]
\end{equation}
for the evolution of a quantum observable $\hat{O}$ and a classical
observable $O\left( q^{i},p_{i}\right) $. Since then, some different
formalisms that relate both structures have been introduced with a number of
applications in physics. In the next sections, we review some of these
formalisms such as the Weyl-Wigner transform theory and coherent state
theory. In the final sections we show how these structures are intimately
related to the concepts of weak values and modular variables.

One of the most fundamental differences between quantum physics and
classical physics is that the first admits the possibilities of both
discrete and continuous observables while the second admits only continuous
quantities. In the next section we will discuss only continuous phase spaces
and in the following sections we will address \textquotedblleft
finite\textquotedblright\ or \textquotedblleft discrete\textquotedblright\
phase spaces when we introduce the Schwinger formalism as a tool towards a
deeper understanding of the concept of modular variables.

\section{The Weyl-Wigner Formalism}

(The main references for this section are \cite{degroot} and \cite{sakurai})

\subsection{The Fourier Transform Operator}

Consider a quantum system defined by the motion of a single non-relativistic
particle in one dimension. Let $\left\{ \left\vert q(x)\right\rangle
\right\} $ and $\left\{ \left\vert p(x)\right\rangle \right\} $ $%
(-\infty<x<\infty)$ represent respectively the quantum eigenkets of the
position and momentum observables. In other words, we have:%
\begin{equation}
\hat{Q}\left\vert q(x)\right\rangle =x\left\vert q(x)\right\rangle \qquad%
\text{and}\qquad\hat{P}\left\vert p(x)\right\rangle =x\left\vert
p(x)\right\rangle
\end{equation}
where $\hat{Q}$ and $\hat{P}$ are the position and momentum operators,
respectively. It is important to notice here that we use a slightly
different notation than the usual choice. For instance, the ket $\left\vert
q(x)\right\rangle $ is an eigenvector of $\hat{Q}$ with eigenvalue $x$. That
is, we distinguish the \textquotedblleft kind\textquotedblright\ of
eigenvector (position or momentum) from its eigenvalue. This is different
from the more common notation where $\left\vert q\right\rangle $ and $%
\left\vert p\right\rangle $ represent both the \textit{type} of eigenket
(respectively position and momentum in this case) and also the \textit{%
eigenvalue} in the sense that $\hat{Q}\left\vert q\right\rangle =q\left\vert
q\right\rangle $ and $\hat{P}\left\vert p\right\rangle =p\left\vert
p\right\rangle $. As we shall see shortly, our choice of notation will allow
us to write some equations in a more compact and elegant form. We shall
designate the vector space generated by these basis as $W^{(\infty)}$. This
is clearly not a Hilbert space since $W^{(\infty)}$ accommodates
\textquotedblleft generalized vectors\textquotedblright\ as $\left\vert
q(x)\right\rangle $ and $\left\vert p(x)\right\rangle $ that have
\textquotedblleft infinite norm\textquotedblright. A rigorous foundation for
this construction can be given within the so called rigged vector space
formalism (see \cite{arnobohm} for more details on this issue). Later on, we
will present a natural construction of these spaces as a continuous
heuristic limit of analogous well-defined finite dimensional spaces
originally due to Schwinger.

Let $\hat{V}_{\xi}$ and $\hat{U}_{\eta}$ $(-\infty<\xi,\eta<\infty)$ be a
pair of unitary operators that implement the one parameter abelian group of
translations on respectively the position and momentum basis in the sense
that%
\begin{equation}
\hat{V}_{\xi}\left\vert q(x)\right\rangle =\left\vert q(x-\xi)\right\rangle
\quad\text{and}\quad\hat{U}_{\eta}\left\vert p(x)\right\rangle =\left\vert
p(x+\eta)\right\rangle
\end{equation}
The hermitian generators of $\hat{V}_{\xi}$ and $\hat{U}_{\eta}$ are the
momentum and position observables, respectively:%
\begin{equation}
\hat{V}_{\xi}=e^{i\xi\hat{P}}\qquad\text{and}\qquad\hat{U}_{\eta}=e^{i\eta 
\hat{Q}}  \label{translation operators}
\end{equation}
so that 
\begin{equation}
\hat{V}_{\xi}\left\vert p(x)\right\rangle =e^{i\xi x}\left\vert
p(x)\right\rangle \quad\text{and}\quad\hat{U}_{\eta}\left\vert
q(x)\right\rangle =e^{i\eta x}\left\vert q(x)\right\rangle
\label{eigenvalue eq. for translators}
\end{equation}
The translation operators also obey the so called Weyl relation:%
\begin{equation}
\hat{V}_{\xi}\hat{U}_{\eta}=\hat{U}_{\eta}\hat{V}_{\xi}e^{i\eta\xi}
\label{continuous weyl commutation relation}
\end{equation}
which can be thought as an exponentiated version of the familiar Heisenberg
relation:%
\begin{equation}
\lbrack\hat{Q},\hat{P}]=i\hat{I}
\end{equation}

The basis $\left\vert q(x)\right\rangle $ and $\left\vert p(x)\right\rangle $
are both complete and normalized in the sense that%
\begin{equation}
\hat{I}=\int_{-\infty}^{+\infty}dx\left\vert q(x)\right\rangle \left\langle
q(x)\right\vert =\int_{-\infty}^{+\infty}dx\left\vert p(x)\right\rangle
\left\langle p(x)\right\vert
\end{equation}
and%
\begin{equation}
\left\langle q(x)\right\vert q(x\prime)\rangle=\left\langle p(x)\right\vert
p(x\prime)\rangle=\delta(x-x\prime)
\end{equation}
The overlap between the position and momentum eigenstate is given by the
well-known plane wave function:%
\begin{equation}
\left\langle q(x)\right\vert p(x\prime)\rangle=\frac{1}{\sqrt{2\pi}}%
e^{ixx\prime}  \label{plane wave function}
\end{equation}

Any arbitrary abstract state $\left\vert \psi\right\rangle $ has a
\textquotedblleft position basis wave function\textquotedblright given by $%
\left\langle q(x)\right\vert \psi\rangle$ and a \textquotedblleft momentum
basis wave function\textquotedblright\ given by $\left\langle
p(x)\right\vert \psi\rangle$. The relation between both is given by the
Fourier transform operator which is a unitary operator that takes one basis
to another as%
\begin{equation}
\hat{F}\left\vert q(x)\right\rangle =\left\vert p(x)\right\rangle
\label{def. 1 of the fourier operator}
\end{equation}
This operator can be defined in a more elegant and compact manner as:%
\begin{equation}
\hat{F}=\int_{-\infty}^{+\infty}dx\left\vert p(x)\right\rangle \left\langle
q(x)\right\vert
\end{equation}
Note that the above equation could \textit{not} have been written in such
way within the conventional notation. We also have that%
\begin{equation*}
\hat{F}^{\dag}\left\vert p(x)\right\rangle =\left\vert q(x)\right\rangle
\end{equation*}
and that $\hat{F}$ is unitary, which means that 
\begin{equation*}
\hat{F}^{\dag}\hat{F}=\hat{I}
\end{equation*}
The Fourier transform of $\psi(x)=\left\langle q(x)\right\vert \psi\rangle$
is \text{ }then $\left\langle q(x)\right\vert \hat{F}^{\dag}\left\vert
\psi\right\rangle =\left\langle p(x)\right\vert \psi\rangle=\tilde{\psi}(x)$%
, where we introduced here the tilde psi-function $\tilde{\psi}(x)$ as the
Fourier transform of $\psi(x)$. The effect of the inverse transform Fourier
operator over the position basis is given by%
\begin{equation*}
\hat{F}^{\dag}\left\vert q(x)\right\rangle
=\int_{-\infty}^{+\infty}dx\prime\left\vert q(x\prime)\right\rangle
\left\langle p(x\prime)\right\vert q(x)\rangle
\end{equation*}
From the symmetries of \eqref{plane wave function} it is easy to see that $%
\left\langle p(x\prime)\right\vert q(x)\rangle=\left\langle q(x\prime
)\right\vert p(-x)\rangle$ so that:%
\begin{equation*}
\hat{F}^{\dag}|q(x)\rangle=\int_{-\infty}^{+\infty}dx\prime\left\vert
q(x\prime)\right\rangle \left\langle q(x\prime)\right\vert p(-x)\rangle
=\left\vert p(-x)\right\rangle
\end{equation*}
Analogously, we may write%
\begin{eqnarray*}
\hat{F}^{2}\left\vert q(x)\right\rangle &=&\hat{F}\left\vert
p(x)\right\rangle =\int_{-\infty}^{+\infty}dx\prime\left\vert
p(x\prime)\right\rangle \left\langle q(x\prime)\right\vert p(x)\rangle 
\notag \\
&=&\int_{-\infty}^{+\infty }dx\prime\left\vert p(x\prime)\right\rangle
\left\langle p(x\prime)\right\vert q(-x)\rangle=\left\vert q(-x)\right\rangle
\end{eqnarray*}
%\[
%\hat{F}^{2}\left\vert q(x)\right\rangle =\hat{F}\left\vert p(x)\right\rangle
%=\int_{-\infty}^{+\infty}dx\prime\left\vert p(x\prime)\right\rangle
%\left\langle q(x\prime)\right\vert p(x)\rangle=\int_{-\infty}^{+\infty
%}dx\prime\left\vert p(x\prime)\right\rangle \left\langle p(x\prime)\right\vert
%q(-x)\rangle=\left\vert q(-x)\right\rangle
%\]
This implies that $\hat{F}^{2}$ is nothing else but the \textit{space
inversion operator} in $W^{(\infty)}$. Analogously one can deduce in a
similar manner that 
\begin{equation*}
\hat{F}^{2}\left\vert p(x)\right\rangle =\left\vert p(-x)\right\rangle
\end{equation*}
This also implies that%
\begin{equation*}
\hat{F}^{4}=\hat{I}
\end{equation*}
which means that the spectrum of the Fourier operator $\hat{F}$ is
constituted by the \textit{fourth roots of unity}. We shall return to this
issue in a later section. In a similar manner it is not difficult to show
that%
\begin{equation}
\hat{F}^{\dag}\hat{U}_{\xi}\hat{F}=\hat{V}_{\xi}^{\dag}\qquad\text{and}\qquad%
\hat{F}^{\dag}\hat{V}_{\xi}\hat{F}=\hat{U}_{\xi}  \label{relations i}
\end{equation}
\ and 
\begin{equation}
\hat{F}\hat{V}_{\xi}\hat{F}^{\dag}=\hat{U}_{\xi}^{\dag}\qquad\text{and}\qquad%
\hat{F}\hat{U}_{\xi}\hat{F}^{\dag}=\hat{V}_{\xi}  \label{relations ii}
\end{equation}
which implies the following two very important relations%
\begin{equation}
\hat{F}^{2}\hat{U}_{\xi}\hat{F}^{2}=\hat{U}_{\xi}^{\dag}\qquad\text{and}%
\qquad\hat{F}^{2}\hat{V}_{\xi}\hat{F}^{2}=\hat{V}_{\xi}^{\dag}
\label{relations iii}
\end{equation}

\subsection{An Intrinsic Formulation for the Weyl-Wigner Operator and
Transform}

We shall designate the space of linear operators over $W^{(\infty)}$ as $%
T_{1}^{1}(W^{(\infty)})$ and denote the \textquotedblleft
vectors\textquotedblright (the elements) $\hat{A},\hat{B},\hat{C}$,\ldots\
of $T_{1}^{1}(W^{(\infty)})$ respectively by $|\hat{A})$, \newline
$|\hat{B})$,$|\hat{C})$,\ldots Following Schwinger \cite{schwinger}, we
introduce the well-known hermitian inner product, also known as the
Hilbert-Schmidt inner product, in $T_{1}^{1}(W^{(\infty)})$ given by:%
\begin{equation}
(\hat{A}|\hat{B})=tr(\hat{A}^{\dag}\hat{B})
\label{Hilbert-Schmidt inner product}
\end{equation}
The space $T_{1}^{1}(W^{(\infty)})$ has an additional algebraic structure
which turns it into an operator algebra. In fact, we have the following
product $|\hat{A}).|\hat{B})=|\hat{A}.\hat{B})$, where $\hat{A}.\hat{B}$ is
the usual operator product between $|\hat{A})$ and $|\hat{B})$. In this way
we may identify the elements of $T_{1}^{1}(W^{(\infty)})$ with the elements
of $T_{1}^{1}(T_{1}^{1}(W^{(\infty)}))$ in an unique manner through the
obvious inclusion map%
\begin{align}
i:T_{1}^{1}(W^{(\infty)})&\rightarrow T_{1}^{1}(T_{1}^{1}(W^{(\infty)})) 
\notag \\
\hat{A}\equiv|\hat{A})&\mapsto\breve{A},  \notag
\end{align}
such that $\breve{A}|\hat{B})=|\hat{A}.\hat{B})$. We will usually allow
ourselves a slight abuse of language by dismissing any explicit mention to
this inclusion map. For instance, consider the following identifications of
the Identity Operator: $\hat{I}\equiv|\hat{I})\overset{i}{\equiv}\breve{I}$
. In this way one may define a set of operators $\hat{X}(\alpha)\equiv|\hat {%
X}(\alpha))$ to be \textit{complete} in $T_{1}^{1}(W^{(\infty)})$, where $%
\alpha$ is a variable that takes values in some appropriate
\textquotedblleft index set\textquotedblright, if%
\begin{equation}
\int d\alpha|\hat{X}(\alpha))(\hat{X}(\alpha)|=\breve{I}\equiv\hat{I}\equiv|%
\hat{I}),  \label{completeness of basis in operator space}
\end{equation}
where $d\alpha$ is an appropriate measure in the index set. Orthonormality
can be written as%
\begin{equation}
(\hat{X}(\alpha)|\hat{X}(\beta))=tr(\hat{X}^{\dag}(\alpha)\hat{X}%
(\beta))=\delta(\alpha-\beta)  \label{basis orthonormality in operator space}
\end{equation}

As an example, consider $|q(x)\rangle\langle p(y)|\equiv|x,y)$ with $x,y\in%
%TCIMACRO{\U{211d} }%
%BeginExpansion
\mathbb{R}
%EndExpansion
^{2}$. This set of $%
%TCIMACRO{\U{211d} }%
%BeginExpansion
\mathbb{R}
%EndExpansion
^{2}$-valued operators is indeed orthonormal in the sense that 
%%TCIMACRO{\TEXTsymbol{\vert}}%
%%BeginExpansion
%$\vert$%
%%EndExpansion
%^2$. This set of $%
%%TCIMACRO{\U{211d} }%
%%BeginExpansion
%$\mathbb{R}$
%%EndExpansion
%^2$-valued operators is indeed orthonormal in the sense that%
\begin{equation}
(x,y|x\prime,y\prime)=\delta(x-x\prime)\delta(y-y\prime)
\end{equation}
It is not difficult to prove that $(x,y|\hat{A})=\langle q(x)|\hat {A}%
|p(y)\rangle$ and $\int\int dxdy|x,y)(x,y|=\breve{I}$. These results can be
used to prove another very convenient completeness relation for a set $\{%
\hat{X}(\alpha)\}$ as%
\begin{equation*}
\int d\alpha\hat{X}^{\dag}(\alpha)\hat{A}\hat{X}(\alpha)=(tr\hat{A})\hat {I}%
,\quad\forall\text{ }\hat{A}\in T_{1}^{1}(W^{(\infty)})
\end{equation*}

We can now define a set of operators (the so called Weyl-Wigner operators)
also parameterized by the phase space plane $(x,y)\in%
%TCIMACRO{\U{211d} }%
%BeginExpansion
\mathbb{R}
%EndExpansion
^{2}$ in the following intrinsic manner%
\begin{equation}
\hat{\Delta}(x,y)=2\hat{V}_{y}^{\dag}\hat{U}_{2x}\hat{V}_{y}^{\dag}\hat{F}%
^{2}  \label{definition of the WW  operator basis}
\end{equation}
With aid of \eqref{relations iii}, one can easily prove the following
important properties:

\begin{itemize}
\item hermiticity: 
\begin{equation}
\hat{\Delta}^{\dag}(x,y)=\hat{\Delta}(x,y)
\label{hermiticity of the WW basis}
\end{equation}

\item unit trace: 
\begin{equation}
tr(\hat{\Delta}(x,y))=(\hat{\Delta}(x,y)|\hat{I})=1
\label{unit trace of the WW operator basis}
\end{equation}

\item orthonormality of ${\frac{1}{\sqrt{2\pi}}\hat{\Delta}(x,y)}$: 
\begin{equation}
(\hat{\Delta}(x,y)|\hat{\Delta}(x\prime,y\prime))=2\pi\delta(x-x\prime
)\delta(y-y\prime)  \label{orthonormality of the WW operator basis}
\end{equation}

\item 1st form for the completeness of $\frac{1}{\sqrt{2\pi}}\hat{\Delta }%
(x,y)$: 
\begin{equation}
\frac{1}{2\pi}\int\int dxdy|\hat{\Delta}(x,y))(\hat{\Delta}(x,y)|=\hat{I}
\label{1st form of completeness of WW basis}
\end{equation}

\item 2nd form of the completeness of $\frac{1}{\sqrt{2\pi}}\hat{\Delta}%
(x,y) $: 
\begin{equation}
\frac{1}{2\pi}\int\int dxdy\hat{\Delta}(x,y)\hat{A}\hat{\Delta}(x,y)=(tr%
\hat {A})\hat{I},\quad\forall\hat{A}
\label{2nd form of completeness of WW basis}
\end{equation}
\end{itemize}

%\begin{equation}
%\hat{\Delta}^{\dag}(x,y)=\hat{\Delta}(x,y)\qquad(\text{hermiticity})
%\label{hermiticity of the WW basis}%
%\end{equation}
%\begin{equation}
%tr(\hat{\Delta}(x,y))=(\hat{\Delta}(x,y)|\hat{I})=1\qquad(\text{unit trace})
%\label{unit trace of the WW operator basis}%
%\end{equation}
%\begin{equation}
%(\hat{\Delta}(x,y)|\hat{\Delta}(x\prime,y\prime))=2\pi \delta
%(x-x\prime)\delta (y-y\prime)\qquad(\text{orthonormality of }\{\frac{1}%
%{\sqrt{2\pi}}\hat{\Delta}(x,y)\})
%\label{orthonormality of the WW operator basis}%
%\end{equation}
%\begin{equation}
%\begin{align*}
%& \frac{1}{2\pi}\int\int dxdy|\hat{\Delta}(x,y))(\hat{\Delta}(x,y)|=\hat{I} \\  (\textrm{1st form} & \textrm{ for the completeness of }\{\frac{1}{\sqrt{2\pi}}%
%\hat{\Delta}(x,y)\}) \label{1st form of completeness of WW basis}%
%\end{align*}
%\end{equation}
%\begin{equation}
%\frac{1}{2\pi}\int\int dxdy\hat{\Delta}(x,y)\hat{A}\hat{\Delta}%
%(x,y)=(tr\hat{A})\hat{I}\qquad\text{for all }\hat{A}\qquad(\text{2nd form of
%the completeness of }\{\frac{1}{\sqrt{2\pi}}\hat{\Delta}(x,y)\})
%\label{2nd form of completeness of WW basis}%
%\end{equation}
Some other properties (not so well-known) for the Weyl-Wigner operators are
the following:%
\begin{equation}
\hat{\Delta}^{2}(x,y)=4\hat{I}  \label{p1}
\end{equation}%
\begin{equation}
\hat{F}^{2}\hat{\Delta}(x,y)\hat{F}^{2}=\hat{\Delta}(-x,-y)  \label{p2}
\end{equation}%
\begin{equation}
\hat{\Delta}(0,0)=2\hat{F}^{2}  \label{p3}
\end{equation}

At this point, one can define the Weyl-Wigner Transform of an arbitrary
Operator as%
\begin{equation}
W\{\hat{A}\}(x,y)=(\hat{\Delta}(x,y)|\hat{A})=a(x,y)
\label{definition of the WW transform}
\end{equation}%
The transform of $\hat{A}$ is in general a \textit{complex} function $a(x,y)$
of the phase plane, but if $\hat{A}$ is \textit{hermitian}, then $a(x,y)$ is
clearly \textit{real}-valued. One is tempted to see this transform as a map
between quantum observables to \textquotedblleft classical
observables\textquotedblright\ in some sense. In fact, as we shall soon see,
there is a certain sense where in the $\hbar \rightarrow 0$ limit, one can
see that $a(x,y)$ goes indeed to the expected classical observable. Wigner
introduced this transform to map density operators of mixed states to
classical probability densities over phase space, but the fact is that these
densities obey all axioms for a true probability distribution on phase space
with the exception of \textit{positivity}. In this way, the negativity of
the transform of a density operator signals for a departure of classicality
of the mixed state or some kind of measure of \textquotedblleft
quanticity\textquotedblright\ of the state.

\subsection{The Classical Limit}

The inner product between two operators $\hat{A}$ and $\hat{B}$ can be
written in terms of their WW transforms as%
\begin{equation}
(\hat{A}| \hat{B})=\frac{1}{2\pi}\int dxdy\overline{a}(x,y)b(x,y),
\end{equation}
where $\overline{a}(x,y)$ is the c.c. of $a(x,y)$. The transform of a
product of two arbitrary operators can be seen, after a tedious calculation,
as%
\begin{equation}
(\hat{\Delta}(x,y)| \hat{A}\hat{B})=a(x,y)\exp\Big [-\frac{i}{2}(\frac{%
\overleftarrow{\partial}}{\partial x}\frac{\overrightarrow{\partial}}{%
\partial y}-\frac{\overleftarrow{\partial}}{\partial y}\frac {%
\overrightarrow{\partial}}{\partial x})\Big ]b(x,y),  \label{star-product}
\end{equation}
where the arrows above the partial derivative operators indicate
\textquotedblleft which\ function\textquotedblright\ it operates on. With
this result, it is not difficult to guess the very suggestive form of the WW
transform of the commutator of two operators and (we momentarily restore $%
\hbar$ here explicitly) the following fact that the \textquotedblleft
classical limit\textquotedblright\ can be understood as 
\begin{align}
&-2i\underset{\hbar\rightarrow0}{\lim}\Big ( \frac{1}{\hbar}(\hat{\Delta }%
(x,y)|[\hat{A},\hat{B}])\Big )  \notag \\
& =2\underset{\hbar\rightarrow0}{\lim }\Big(\frac{1}{\hbar}a(x,y)\sin\Big [ 
\frac{\hbar}{2}\Big (\frac {\overleftarrow{\partial}}{\partial x}\frac{%
\overrightarrow{\partial}}{\partial y}-\frac{\overleftarrow{\partial}}{%
\partial y}\frac {\overrightarrow{\partial}}{\partial x}\Big )\Big ]b(x,y)%
\Big )  \notag \\
& =\{a(x,y),b(x,y)\}  \label{classical limit}
\end{align}
One recognizes the last term of the above equation as the Poisson bracket of
the classical observables $a(x,y)$ and $b(x,y)$. The operation defined by %
\eqref{star-product} establishes a \textit{non-commutative algebra} over the
real functions defined on phase space. It is also known as the star-product
and it forms the basis of the non-commutative geometric point of view for
quantization.

\subsection{The Issue of the Number of Degrees of Freedom}

The extension of the continuous WW formalism to two or more degrees of
freedom is straightforward: For instance, let $|q(x)\rangle$ and $%
|q(y)\rangle$ be complete position eigenbasis for the $x$ and $y$
directions, respectively. Then, a point $\vec{r}$ of the plane is clearly
represented by the tensor product state $|q(\vec{r})\rangle=|q(x)\rangle%
\otimes|q(y)\rangle$. The 2D translation operator 
\begin{equation}
\hat{V}_{\vec{\xi}}=e^{i\vec{P}.\vec{\xi}} =e^{i\hat{P}_{x}\xi_{x}}\otimes
e^{i\hat{P}_{y}\xi_{y}}
\end{equation}
acts upon the $|q(\vec{r})\rangle$ basis by the expected manner as $\hat {V}%
_{\vec{\xi}}|q(\vec{r})\rangle=|q(\vec{r}-\vec{\xi})\rangle$. This can be
clearly carried out for any number of degrees of freedom in the same way.
For finite dimensional quantum spaces, things are \textit{not} as simple as
we shall see further ahead.

\section{The Geometry of Quantum Mechanics}

(The main references for this section are \cite{arnold}, \cite{page} and 
\cite{anandan1990})

\subsection{The Projective Space $\mathbb{CP}(N)$}

Let us consider the $(N+1)$-dimensional quantum space $W^{(N+1)}$. This is a
complex vector space together with an \textit{anti-linear} map between $%
W^{(N+1)}$ and its dual $\overline{W}^{(N+1)}$:%
\begin{align*}
\langle\text{ }\rangle:W^{(N+1)} & \rightarrow\overline{W}^{(N+1)} \\
|\psi\rangle & \mapsto\langle\psi|=(|\psi\rangle)^{\dag}
\end{align*}
that takes each ket state to its associated bra state through the familiar
\textquotedblleft dagger\textquotedblright\ operation. The inner product
between $|\psi_{1}\rangle$ and $|\psi_{2}\rangle$ can then be defined as the
natural action of $\langle\psi_{1}|$ over $|\psi_{2}\rangle$ in the usual
Dirac notation $\left\langle \psi_{1}\right\vert \psi_{2}\rangle$. Following 
\cite{arnold}, one can define Euclidean and symplectic metric structures on $%
W^{(N+1)}$ through the relations%
\begin{align}
&G\left( |\psi_{1}\rangle,|\psi_{2}\rangle\right) =\operatorname{Re}\left\langle
\psi_{1}\right\vert \psi_{2}\rangle \quad\text{and}\quad  \notag \\
%TCIMACRO{\U{3a9} }%
%BeginExpansion
&\Omega\left( |\psi_{1}\rangle,|\psi_{2}\rangle\right) =\operatorname{Im}\left\langle
\psi_{1}\right\vert \psi_{2}\rangle
\end{align}

From fundamental postulates of Quantum Mechanics, it is clear that a state
vector $|\psi\rangle\in W^{(N+1)}$ is physically indistinguishable from the
state vector $\lambda|\psi\rangle$ (for arbitrary $\lambda\in\mathbb{C}$).
Thus, the true physical space of states of the theory is the so called
\textquotedblleft space of rays\textquotedblright\ or the complex projective
space $\mathbb{CP}(N)$ defined by the quotient of $W^{(N+1)}$ by the above
physically motivated equivalence relation.

Given an orthonormal basis $\{|u_{\sigma}\rangle\}$ $\sigma=0,1,2$,\ldots,
then a general state vector can be expanded as $|\psi\rangle=|u_{\sigma
}\rangle z^{\sigma}$. One can map this state to a sphere $S^{2N+1}$ with
squared radius given by $r^{2}=\bar{z}_{\sigma}z^{\sigma}$. We introduce
projective coordinates 
\begin{equation}
\xi^{i}=z^{i}/z^{0}  \label{projective coordinates}
\end{equation}
on $\mathbb{CP}(N)$ so that 
\begin{equation}
z^{0}=re^{i\phi}/(1+\bar{\xi}_{j}\xi^{j})^{1/2}\qquad\text{with}\qquad
i=1,\ldots,n,
\end{equation}
where $\phi$ is an arbitrary phase factor. The Euclidean metric in $%
W^{(N+1)} $ (seen here as a $(2n+2)$-dimensional \textit{real} vector space)
can be written as 
\begin{equation}
dl^{2}\left( W^{(N+1)}\right) =d\bar{z}_{\sigma}dz^{%
\sigma}=dr^{2}+dS^{2}(S^{2N+1})
\end{equation}
where%
\begin{equation*}
dS^{2}(S^{2N+1})=(d\phi-A)^{2}+ds^{2}(\mathbb{CP}(N))
\end{equation*}
is the squared distance element over the space of normalized vectors and the
one-form 
\begin{equation*}
A=\frac{i}{2}\frac{(\bar{\xi}_{i}d\xi^{i}-d\bar{\xi}_{i}\xi^{i})}{(1+\bar{%
\xi }_{i}\xi^{i})}
\end{equation*}
is the well known abelian connection of the $U(1)$ bundle over $\mathbb{CP}%
(N)$ \cite{page}. The metric $ds^{2}(\mathbb{CP}(N))$ over the space of rays
in projective coordinates is given explicitly by:%
\begin{equation*}
ds^{2}(\mathbb{CP}(N))=\frac{i}{2}\frac{[(1+\bar{\xi}_{i}\xi^{i})\delta
_{k}^{j}-\bar{\xi}_{k}\xi^{j})]}{(1+\bar{\xi}_{i}\xi^{i})^{2}}d\bar{\xi}%
_{j}d\xi^{k}
\end{equation*}
and $%
%TCIMACRO{\U{3a9} }%
%BeginExpansion
\Omega
%EndExpansion
=dA$ is a 2-form defined over $\mathbb{CP}(N)$ which makes it a symplectic
manifold. These quantum symplectic spaces are physically very different in
origin from their Newtonian counterparts.

The Newtonian phase spaces are cotangent bundles over some configuration
manifold and as such they are always non-compact manifolds, while projective
spaces are compact for finite $N$. For instance, $\mathbb{CP}(1)$ can be
identified with a 2-sphere. The quantum projective spaces also have an
additional structure given by their Riemannian metric, which is absent in
the Newtonian case. Both these structures are compatible in a precise sense
that makes these complex projective spaces examples of what is called a 
\textit{Kahler} manifold \cite{arnold}. A more natural and intuitive
pictorial representation of these structures can be seen easily in figure %
\ref{figsspaceofstates}. 
\begin{figure}[H]%
\centering
\includegraphics[scale=.4]%
{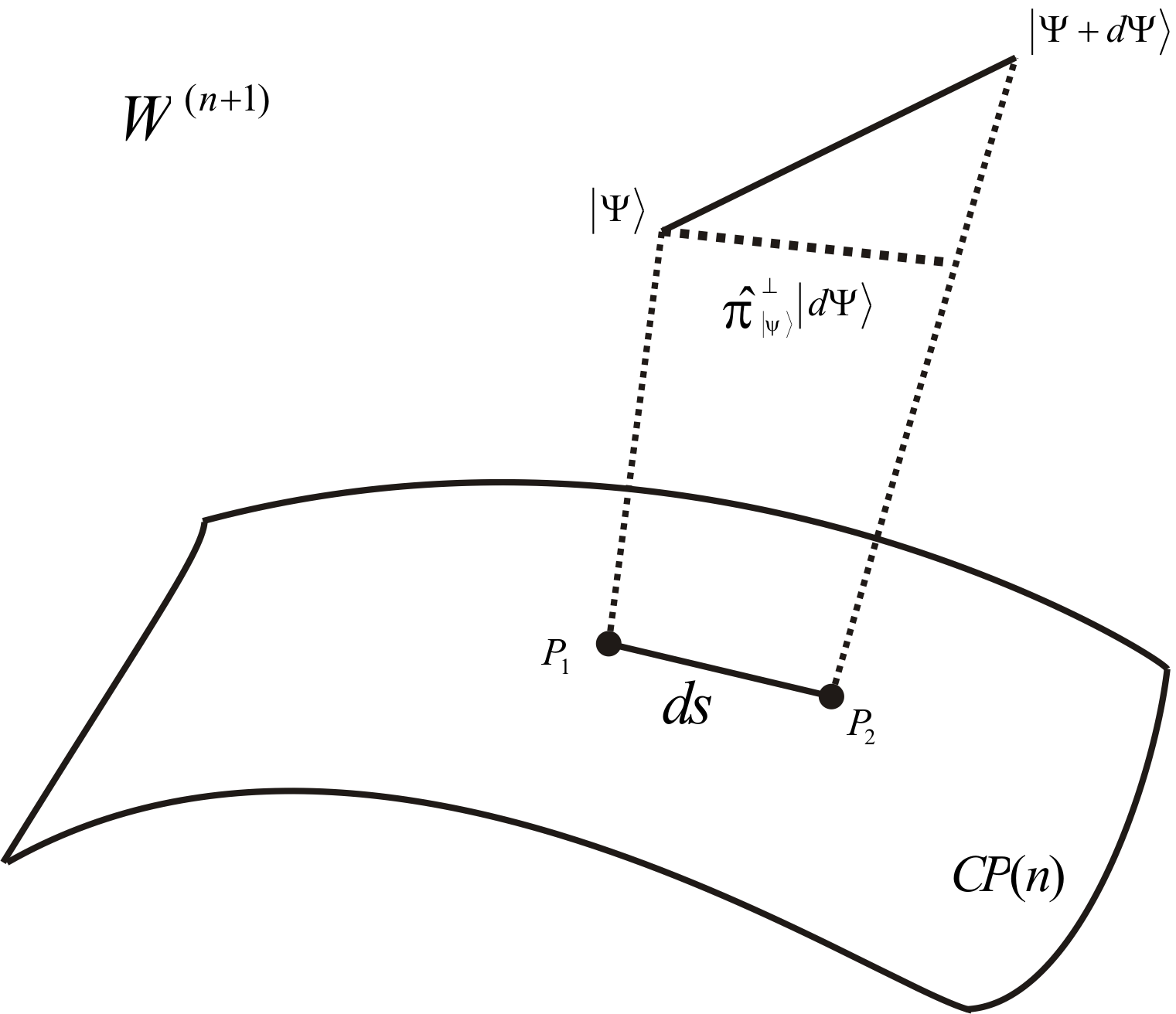}%
\caption{{ Pictorial representation of the quantum space
of states.}}\label{figsspaceofstates}
\end{figure}

%\FRAME{ftbpFU}{2.4794in}{1.9969in}{0pt}{\Qcb{%
%{\protect\tiny Fig. 1 - Pictorial representation of the quantum space of
%states}}}{}{Figure}{\special{ language "Scientific Word"; type "GRAPHIC";
%maintain-aspect-ratio TRUE; display "USEDEF"; valid_file "T"; width
%2.4794in; height 1.9969in; depth 0pt; original-width 2.4379in;
%original-height 1.9579in; cropleft "0"; croptop "1"; cropright "1";
%cropbottom "0"; tempfilename 'MX3V1B03.wmf';tempfile-properties "PR";}} 

The points $P_{1}$ and $P_{2}\in\mathbb{CP}(N)$ are the projections
respectively from two infinitesimally nearby normalized state vectors $%
\left\vert \psi\right\rangle $ and $\left\vert \psi+d\psi\right\rangle $. It
is natural to define then, the squared distance between $P_{1}$ and $P_{2}$
as the projection of $\left\vert d\psi\right\rangle $ in the subspace
orthogonal to $\left\vert \psi\right\rangle $, that is, the projection given
by the projective operator $\hat{\pi}_{\left\vert \psi\right\rangle }=\hat {I%
}-\left\vert \psi\right\rangle \left\langle \psi\right\vert $. It is then
easy to see that%
\begin{equation}
ds^{2}(\mathbb{CP}(N))=\left\langle d\psi\right\vert
d\psi\rangle-\left\langle d\psi\right\vert \psi\rangle\left\langle
\psi\right\vert d\psi\rangle  \label{distance element in CP(N)}
\end{equation}
which is an elegant and coordinate independent manner to express the metric
over $\mathbb{CP}(N)$. The time-evolution of a physical state in $\mathbb{CP}%
(N)$ is given by the projection of the linear Schr\"{o}dinger evolution in $%
W^{(N+1)}$ which results in a Hamiltonian (classical-like) evolution in the
space of rays.

\section{Coherent States and the Quantum Symplectic Group}

(The main references for this section are \cite{gilmore,perelomov,klauder,combescure})

\subsection{The Spectrum of the Fourier Transform Operator}

Let $\hat{a}$ and $\hat{a}^{\dag}$\ be respectively the usual annihilation
and creation operators defined as \cite{sakurai}%
\begin{equation}
\hat{a}=\frac{1}{\sqrt{2}}(\hat{Q}+i\hat{P})\qquad\text{and}\qquad\hat {a}%
^{\dag}=\frac{1}{\sqrt{2}}(\hat{Q}-i\hat{P})
\end{equation}
It follows immediately from the Heisenberg commutation relations that $[\hat{%
a},\hat{a}^{\dag}]=\hat{I}$, and if we define also the hermitian number
operator $\hat{N}=\hat{a}^{\dag}\hat{a}$, it is not difficult to derive the
well-known relations:%
\begin{equation}
\lbrack\hat{N},\hat{a}]=-\hat{a}\qquad\text{and}\qquad\lbrack\hat{N},\hat {a}%
^{\dag}\ ]=\hat{a}^{\dag}
\end{equation}
which implies the also well-known spectrum $\hat{N}\left\vert n\right\rangle
=n\left\vert n\right\rangle$$(n=0,1,2,\ldots)$\ of the number operator and
the \textquotedblleft up and down-the-ladder\textquotedblright\ action for $%
\hat{a}$ and $\hat{a}^{\dag}$:%
\begin{equation}
\hat{a}\left\vert n\right\rangle =\sqrt{n}\left\vert n\right\rangle \quad%
\text{and}\quad\hat{a}^{\dag}\left\vert n\right\rangle =\sqrt {n+1}%
\left\vert n+1\right\rangle
\end{equation}
and also where the ground-state or vacuum state $\left\vert 0\right\rangle $
is annihilated by $\hat{a}$:%
\begin{equation}
\hat{a}\left\vert 0\right\rangle =0
\end{equation}
The above algebraic equation for the vacuum state can be written in the
position and momentum basis as the following differential equations: 
\begin{align}
\langle q(x)|\hat{a}|0\rangle&=\frac{1}{\sqrt{2}}(x+\frac{d}{dx}%
)\left\langle q(x)\right\vert 0\rangle=\langle p(x)|\hat{a}|0\rangle  \notag
\\
&=\frac{1}{\sqrt{2}}(x+\frac{d}{dx})\left\langle p(x)\right\vert 0\rangle=0
\end{align}
Which gives us the normalized Gaussian functions 
\begin{equation}
\left\langle q(x)\right\vert 0\rangle=\left\langle p(x)\right\vert
0\rangle=\pi^{-1/4}e^{-x^{2}/2}
\end{equation}
From equation \eqref{def. 1 of the fourier operator} we conclude that the
vacuum state is a fixed point of the Fourier Transform operator 
\begin{equation}
\hat{F}|0\rangle=|0\rangle  \label{the vacuum state as eigenstate of F}
\end{equation}
The infinitesimal versions of equations \eqref{relations i} and %
\eqref{relations ii} are simply%
\begin{equation*}
\hat{F}^{\dag}\hat{a}\hat{F}=i\hat{a}\qquad\text{and}\qquad\hat{F}^{\dag}%
\hat{a}^{\dag}\hat{F}=-i\hat{a}^{\dag}
\end{equation*}
Which implies that 
\begin{equation*}
\lbrack\hat{F},\hat{N}]=0
\end{equation*}
Thus, the Fourier transform operator \textit{commutes} with the number
operator so they necessarily share the same eigenstates. From the fixed
point condition of the vacuum state together with the equations above, it is
not difficult to see that%
\begin{equation*}
\hat{F}|n\rangle=(i)^{n}|n\rangle\qquad\text{and}\qquad\hat{F}=e^{i\frac{\pi 
}{2}\hat{N}},
\end{equation*}
which implies that $\hat{N}$ is the hermitian generator of the Fourier
transform. In fact, at this point, it is a quite obvious move to introduce a
complete rotation operator in phase space known as the \textit{Fractional
Fourier Operator} given by \cite{Namias} 
\begin{equation*}
\hat{F}_{\theta}=e^{i\theta\hat{N}}
\end{equation*}
The Fourier transform operator is then recognized as a special case for $%
\theta=\pi/2$.

\subsection{The Quantum Linear Symplectic Transforms}

It is also natural to extend the Fractional Fourier transform to a complete
set of quantum symplectic transforms, those unitary transformation in $%
W^{(\infty)}$ that implement a representation of the group of area
preserving linear maps of the classical phase plane. This is the non-abelian 
$SL(2,%
%TCIMACRO{\U{211d} }%
%BeginExpansion
\mathbb{R}
%EndExpansion
)$ group. Probably the best way to visualize this is through the
identification of the phase plane with the complex plane via the standard
complex-valued coherent states defined by the following change of variables: 
$z=(1/\surd2)(q+ip)$ and by defining the coherent states as 
\begin{equation}
|z\rangle=\hat{D}[z]|0\rangle\text{, with}\quad\hat{D}[z]=\frac{1}{2}\hat{%
\Delta}(z/2)\hat{F}^{2}=e^{(z\hat{a}^{\dag}-z\hat{a})}
\label{def. of coherent states}
\end{equation}
The $\hat{D}[z]$ operator is known as the displacement operator and the
well-known \textquotedblleft over-completeness\textquotedblright\ of the
coherent state representation follows from the completeness of the $\hat{%
\Delta}(z)$ basis. The overlap between two arbitrary coherent states is
given by%
\begin{equation}
\left\langle z\right\vert z\prime\rangle=\left\langle p,q\right\vert
p\prime,q\prime\rangle=e^{-1/4\left[ (p-p\prime)^{2}+(q-q\prime)^{2}\right]
}e^{i(p\prime q-pq\prime)/2}  \label{overlap of coherent states}
\end{equation}
Note above, the symplectic phase proportional to the \textit{area} in the
phase plane defined by the vectors $(p,q)$ and $(p\prime,q\prime)$. It is
also not difficult to show that indeed 
\begin{equation}
\hat{F}_{\theta}|z\rangle=e^{i\theta\hat{N}}|z\rangle=|e^{i\theta}z\rangle,
\label{rotation in Q.phase space with fractional fourier transform}
\end{equation}
which is a direct manner to represent the Fractional Fourier transform for
arbitrary $\theta$. Since the generator of rotations is quadratic in the
canonical observables $\hat{Q}$ and $\hat{P}$, one may try to write down 
\textit{all} possible quadratic operators in these variables: $\hat{Q}^{2},%
\hat{P}^{2},\hat{Q}\hat{P}$ and $\hat{P}\hat{Q}$, but the last two are
obviously \textit{non-Hermitian} so we could change them to the following
(Hermitian) linear combinations: $\hat{Q}\hat{P}+\hat{P}\hat{Q}$ and $i(\hat{%
Q}\hat{P}-\hat{P}\hat{Q})$. The last one is proportional to the identity
operator because of the Heisenberg commutation relation, so this leaves us
with \textit{three} linear independent operators that we choose as%
\begin{equation}
\hat{H}_{0}=\frac{1}{2}(\hat{Q}^{2}+\hat{P}^{2})=\hat{N}+\frac{1}{2}\hat {I}=%
\hat{a}^{\dag}\hat{a}+\frac{1}{2}\hat{I}
\label{generator of rotation in phase space}
\end{equation}%
\begin{equation}
\hat{g}=\frac{1}{2}\{\hat{Q},\hat{P}\}=\frac{1}{2}(\hat{Q}\hat{P}+\hat{P}%
\hat{Q})=\frac{i}{2}[(\hat{a}^{\dag})^{2}-\hat{a}^{2}]
\label{generator of scale transf. in phase space}
\end{equation}%
\begin{equation}
\hat{k}=\frac{1}{2}(\hat{Q}^{2}-\hat{P}^{2})=\frac{1}{2}[(\hat{a}%
^{\dag})^{2}+\hat{a}^{2}]
\label{generator of hyperbolic transf. in phase space}
\end{equation}
These three generators implement in $W^{(\infty)}$, the algebra $sl(2,%
%TCIMACRO{\U{211d} }%
%BeginExpansion
\mathbb{R}
%EndExpansion
)$ of $SL(2,%
%TCIMACRO{\U{211d} }%
%BeginExpansion
\mathbb{R}
%EndExpansion
)$. The $\hat{g}$ operator is nothing but the squeezing generator from
quantum optics \cite{vedral}. Indeed, the scale operator%
\begin{equation}
\hat{S}_{\xi}=e^{i\hat{g}\ln\xi}  \label{scale transform}
\end{equation}
generated by $\hat{g}$ act upon the position and momentum basis respectively
as 
\begin{equation*}
\hat{S}_{\xi}|q(x)\rangle=\sqrt{\xi}|q(\xi x)\rangle\quad\text{and}\quad 
\hat{S}_{\xi}|p(x)\rangle=\frac{1}{\sqrt{\xi}}|p(x/\xi)\rangle
\end{equation*}

The $\hat{k}$ operator generates \textit{hyperbolic rotations}, that is,
linear transformations of the plane that preserve an indefinite metric. It
takes the hyperbola $x^{2}-y^{2}=1$ into itself in an analogous way that the
Euclidean rotation takes the circle $x^{2}+y^{2}=1$ into itself. $SL(2,%
%TCIMACRO{\U{211d} }%
%BeginExpansion
\mathbb{R}
%EndExpansion
)$ is the Lie Group of all area preserving linear transformations of the
plane, so we can identify it with the $2\times2$ real matrices with unit
determinant. Since $\det e^{X}=e^{trX}$, we can also identify the algebra $%
sl(2,%
%TCIMACRO{\U{211d} }%
%BeginExpansion
\mathbb{R}
%EndExpansion
)$ with all $2\times2$ real matrices with null trace. Thus, it is natural to
make the following choice for a basis in this algebra:%
\begin{equation*}
\begin{array}{c}
\hat{X}_{1}=\hat{\sigma}_{1} \\ 
\hat{X}_{2}=i\hat{\sigma}_{2} \\ 
\hat{X}_{3}=\hat{\sigma}_{3},%
\end{array}%
\end{equation*}
where we have written (for practical purposes) the elements of the algebra
in terms of the well-known Pauli matrices. This is very adequate because
physicists are familiar with the commutation relations of the Pauli matrices
since they form a two-dimensional representation of the angular momentum
algebra and we can make use of these algebraic relations to completely
characterize the $sl(2,%
%TCIMACRO{\U{211d} }%
%BeginExpansion
\mathbb{R}
%EndExpansion
)$ algebra. In fact, the mapping described by the table below relates these
algebra elements directly to the algebra of their representation carried on $%
W^{(\infty)}$):%
\begin{equation}
\begin{array}{cc}
\text{{\small {Generators of }}}sl(2,\mathbb{R}) & \text{{\small {\
Generators of the representation}}} \\ 
\hat{X}_{1}=\hat{\sigma}_{1} & -i\hat{k} \\ 
\hat{X}_{2}=i\hat{\sigma}_{2} & -i\hat{H}_{0} \\ 
\hat{X}_{3}=\hat{\sigma}_{3} & -i\hat{g}%
\end{array}%
\end{equation}
With a bit of work, it is not difficult to convince oneself that these
mapped elements indeed obey identical commutation relations.

\section{Weak Values and Quantum Mechanics in Phase Space}

The weak value of a quantum system was introduced by Aharonov, Albert and
Vaidman (A.A.V.) in \cite{AharonovAlbertVaidman} based on the two-state
formalism for Quantum Mechanics \cite%
{AharonovBergmannLebowitz,AharonovVaidman} and it generalizes the concept of
an expectation value for a given observable. Let the initial state of a
product space $W=W_{S}\otimes W_{M}$ be given by the product state $%
|\Psi\rangle =|\alpha\rangle\otimes|\phi_{(I)}\rangle$ where $|\alpha\rangle$
is the pre-selected state of the system and $|\phi_{(I)}\rangle$ is the
initial state of the apparatus. Suppose further that a \textquotedblleft
weak Hamiltonian\textquotedblright\ governs the interaction between the
system and the measuring apparatus as:%
\begin{equation}
\hat{H}_{int}=\epsilon\delta(t-t_{0})\hat{O}\otimes\hat{P}\qquad
(\epsilon\rightarrow0),  \label{hamiltonian for weak measurement}
\end{equation}
where $\hat{O}$ is an arbitrary observable to be measured in the system.
After the ideal instantaneous interaction that models this von Neumann
(weak) measurement \cite{vonNeumann}, suppose we post-select a certain final
state $|\beta\rangle$ of the system after performing a strong measurement on
it. In this case, the final state of the apparatus is clearly given by%
\begin{align}
|\phi_{(F)}\rangle&=(\left\langle \beta\right\vert \otimes\hat{I}%
)e^{-i\epsilon\hat{O}\otimes\hat{P}}(|\alpha\rangle\otimes|\phi_{(I)}\rangle)
\notag \\
&\approx\left\langle \beta\right\vert \alpha\rangle(1-i\epsilon
O_{w})|\phi_{(I)}\rangle,
\end{align}
where%
\begin{equation}
O_{w}=\frac{\left\langle \beta\right\vert \hat{O}|\alpha\rangle}{%
\left\langle \beta\right\vert \alpha\rangle}  \label{def. of weak value}
\end{equation}
is the weak value of the observable $\hat{O}$ for these particular chosen
pre and post-selected states. Note that the weak value $O_{w}$ of the
observable is, in general, an arbitrary complex number. Note also that,
though $|\phi_{(I)}\rangle$ is a normalized state, the $|\phi_{(F)}\rangle$
state vector in general, is \textit{not} normalized. In the original
formulation of (A.A.V.), the momentum $\hat{P}$ acts upon the measuring
system, implementing a small translation of the initial wave function in the
position basis, but which can be measured from the mean value of the results
of a large series of identical experiments. That is, the expectation value
of the position operator $\hat{Q}$ over a large ensemble with the same pre
and post selected states. One can generalize this procedure \cite{jozsa} by
taking an arbitrary operator $\hat{M}$ in the place of $\hat{Q}$ as the
observable of $W_{(M)}$ to be measured. In this case, the usual expectation
values of $\hat{M}$ for the initial and final states $|\phi_{(I)}\rangle$
and $|\phi_{(F)}\rangle$ are respectively:%
\begin{align}
&\langle\hat{M}\rangle_{(I)}=\langle\phi_{(I)}|\hat{M}|\phi_{(I)}\rangle
\qquad\text{and}  \notag \\
&\langle\hat{M}\rangle_{(F)}=\frac{\langle\phi_{(F)}|\hat{M}%
|\phi_{(F)}\rangle}{\langle\phi_{(F)}|\phi_{(F)}\rangle}
\end{align}
and the difference between these expectation values (the shift of $\hat{M}$)
to first order in $\epsilon$ is given in general by \cite{jozsa}:%
\begin{align}
\Delta\hat{M}&=\langle\hat{M}\rangle_{(F)}-\langle\hat{M}\rangle_{(I)} 
\notag \\
&=\epsilon\lbrack(\operatorname{Im}(O_{w})(\langle\phi_{(I)}|\{\hat{M},\hat {P}%
\}|\phi_{(I)}\rangle-  \notag \\
& -2\langle\phi_{(I)}|\hat{P}|\phi_{(I)}\rangle\langle\phi_{(I)}|\hat{M}%
|\phi_{(I)}\rangle)-  \notag \\
&-i\operatorname{Re}(O_{w})\langle\phi_{(I)}|[\hat{M},\hat{P}]|\phi_{(I)}\rangle]
\end{align}

For the choice $\hat{M}=\hat{Q}$ and also by using the Heisenberg picture
for the time evolution and by choosing the most general Hamiltonian $\hat{H}=%
\frac{1}{2m}\hat{P}^{2}+V(\hat{Q})$ for the measuring system one can derive
the following shift (also using the $sl(2,%
%TCIMACRO{\U{211d} }%
%BeginExpansion
\mathbb{R}
%EndExpansion
)$ algebra and the Heisenberg commutation relation):%
\begin{equation}
\Delta\hat{Q}=\epsilon\lbrack\operatorname{Re}(O_{w})+m\operatorname{Im}(O_{w})\frac{d}{dt}%
(\delta_{|\phi_{(I)}\rangle}^{2}\hat{Q})],
\label{weak displacement of position}
\end{equation}
where $\delta_{|\phi_{(I)}\rangle}\hat{Q}$ is the uncertainty of the initial
state $|\phi_{(I)}\rangle$ and analogously for $\hat{M}=\hat{P}$, one
arrives at%
\begin{equation}
\Delta\hat{P}=2\epsilon\operatorname{Im}(O_{w})(\delta_{|\phi_{(I)}\rangle }^{2}\hat{%
P})
\end{equation}
This result is clearly \textit{asymmetric} because of the choice of the
translation generator $\hat{P}$ in the interaction Hamiltonian. Note also
that from the above equations one can see that it is impossible to extract
the real and imaginary values of the weak value with the measurement of $%
\Delta\hat{Q}$ only, because both of these numbers are absorbed in a same
real number. It is necessary to measure $\Delta\hat{P}$ (besides knowing the
values of $\frac {d}{dt}(\delta_{|\phi_{(I)}\rangle}^{2}\hat{Q})$ and $%
\delta_{|\phi _{(I)}\rangle}^{2}\hat{P}$. There is no reason why one should
need to choose $\hat{P}$ or $\hat{Q}$ in the weak measurement Hamiltonian.
One may choose any of the symplectic generators making use of the full
symmetry of the $SL(2,%
%TCIMACRO{\U{211d} }%
%BeginExpansion
\mathbb{R}
%EndExpansion
)$ group. The $\hat{P}$ and $\hat{Q}$ operators generate translations in
phase space, but one can implement any area preserving transformation in the
plane by also using observables that are quadratic in the momentum and
position observables. By making use of the freedom of choice of an arbitrary
initial state vector $|\phi_{(i)}\rangle$ one can choose also an interaction
Hamiltonian of the following form:%
\begin{equation}
\hat{H}_{int}=\epsilon\delta(t-t_{0})\hat{O}\otimes\hat{R}\qquad
(\epsilon\rightarrow0),
\end{equation}
where $\hat{R}$ is \textit{any} element of the algebra $sl(2,%
%TCIMACRO{\U{211d} }%
%BeginExpansion
\mathbb{R}
%EndExpansion
)$, so it is the generator of an arbitrary symplectic transform of the
measuring system. In this way the generalized $\Delta\hat{M}$ shift in these
conditions is given by:%
\begin{align*}
\Delta\hat{M}= & \epsilon\big [ (\operatorname{Im}(O_{w})(\langle\phi _{(I)}\vert\{%
\hat{M},\hat{R}\}|\phi_{_{(I)}}\rangle- \\
& -2\langle\phi_{_{(I)}}|\hat{R}|\phi_{_{(I)}}\rangle\langle\phi_{_{(I)}}|%
\hat{M}|\phi_{_{(I)}}\rangle)-  \notag \\
&-i\operatorname{Re}(O_{w})\langle\phi_{_{(I)}}| [\hat{M},\hat{R}]|\phi_{_{(I)}}%
\rangle\big ]
\end{align*}
By making the choice $\hat{M}=\hat{R}$, one arrives at:%
\begin{equation*}
\Delta\hat{R}=2\epsilon\operatorname{Im}(O_{w})(\delta_{|\phi_{_{(I)}}\rangle }^{2}%
\hat{R})
\end{equation*}

For the second observable, we could choose any observable that \textit{does
not} commute with $\hat{R}$. This is because the main idea is to choose a
``conjugate" variable to $\hat{R}$ in a similar way that occurs with the $(%
\hat{Q},\hat{P})$ pair. So one obvious choice is to pick the number operator 
$\hat{N}$ in the place of $\hat{R}$. Since $\hat{N}$ is the generator of
Euclidean rotations in phase space, the annihilator operator $\hat{a}$ seems
a natural candidate operator (though not hermitian) to go along with $\hat{N}
$. With this choice of $\hat{M}=\hat{a}$ it is not difficult to calculate
the shift for the annihilator operator:%
\begin{align}
\Delta\hat{a}&=\epsilon\lbrack-iO_{w}\langle\phi_{_{(I)}}\vert\hat{a}%
|\phi_{_{(I)}}\rangle +2\operatorname{Im}(O_{w})(\langle\phi_{_{(I)}}\vert \hat{N}%
\hat{a}|\phi_{_{(I)}}\rangle-  \notag \\
&-\langle\phi_{_{(I)}}\vert\hat{N}|\phi_{_{(I)}}\rangle\langle\phi_{_{(I)}}%
\vert\hat{a}|\phi_{_{(I)}}\rangle)]
\end{align}

In most models of weak measurements, the initial state of the measuring
system is chosen to be a Gaussian state and the weak interaction promotes a
small translation of its center. In realistic quantum optical
implementations of the measuring system, it is reasonable to choose the
initial state of the system as a coherent state $|\phi_{_{(I)}}\rangle=|z%
\rangle$. In this case, there is a dramatic simplification for the shift:%
\begin{equation*}
\Delta\hat{a}=i\epsilon
O_{w}=\epsilon|O_{w}|e^{i(\theta_{z}+\theta_{w}-\pi/2)}
\end{equation*}
where $z=|z|e^{i\theta_{z}}$ and $O_{w}=|O_{w}|e^{i\theta_{w}}$. If we make
the following convenient choice for the phase $\theta_{z}=\pi/2$ and rewrite
the above equation in terms of the canonical pair $(\hat{Q},\hat{P})$, we
arrive at a \textit{symmetric} pair of equations for $\Delta\hat{Q}$ and $%
\Delta\hat{P}$:%
\begin{equation}
\Delta\hat{Q}=\epsilon\sqrt{2}|z|\operatorname{Re}(O_{w})\quad\text{and}\quad\Delta%
\hat{P}=\epsilon\sqrt{2}|z|\operatorname{Im}(O_{w})
\end{equation}

These equations do not depend on the quadratic dispersion or the time
derivative of the quadratic dispersion of any observable for the initial
state of the measuring system and, in principle, one may ``tune" the size of
the $\epsilon|z|$ term despite how small $\epsilon$ may be by making $|z|$
large enough. This is of great practical importance for optical
implementations of weak value measurements since $|z|$ for a quantized mode
of an electromagnetic field is nothing else but the mean photon number in
this mode for the coherent state $|z\rangle$ \cite{vedral}.

\section{von Neumann's Pre-measurement, Weak Values and the Geometry of
Quantum Mechanics}

In the last section we discussed von Neumann's model for a pre-measurement
in the weak measurement limit in order to obtain a deeper understanding of
the concept of a weak value in terms of a quantum phase space analysis of
the measuring apparatus system. In this section we implement, in a certain
sense, the opposite approach: We shall discuss certain geometric structures
of the measured system based on previous work of Tamate \textit{et al} \cite%
{Tamate}. Let $W=W_{(S)}\otimes W_{(M)}$ be the state space formed by
composing the subsystem $W_{(S)}$ with the measuring subsystem $W_{(M)}$. We
will initially assume that the measured system is a discrete quantum
variable of $W_{(S)}$ defined by an observable $\hat{O}=|o_{k}\rangle
o_{k}\langle o^{k}\vert$ (we use henceforth the sum convention). The
measuring subsystem will be considered as a structure-less (no spin or
internal variables) quantum mechanical particle in one dimension (further
ahead we will also consider discrete measuring systems). Suppose the initial
state of the total system is given by the following unentangled product
state: $|\Psi_{(I)}\rangle=|\alpha \rangle\otimes|\phi_{(I)}\rangle$. After
performing an \textit{ideal} von Neumann measurement through the interaction
Hamiltonian, 
\begin{equation}
\hat{H}_{int}=\lambda\delta(t-t_{0})\hat{O}\otimes\hat{P}
\label{ideal  von Neumann measurement interaction hamiltonian}
\end{equation}
the final state will be%
\begin{equation*}
(\hat{I}\otimes\langle
q(x)|)|\Psi_{(F)}\rangle=|o_{j}\rangle\alpha^{j}\phi_{(I)}(x-\lambda o_{j}),
\end{equation*}
where $\phi_{(I)}(x)=\left\langle q(x)\right\vert \phi_{(I)}\rangle$ is the
initial wave-function in the position basis of the measuring system. Note
that a correlation in the final state of the total system is then
established between the variable to be measured $o_{j}$ with the continuous
position variable of the measuring particle. This step of the von Neumann
measurement prescription is called the pre-measurement of the system.

Consider now the measuring system as a finite dimensional quantum system $%
W_{(M)}$. In particular, if $n=2$, our measuring apparatus consists of a
single qubit so that we can treat this two-level measuring system making
explicit use of the $\mathbb{CP}(1)$ (Bloch sphere) geometry. A single qubit
can be written in the Bloch sphere standard form as 
\begin{equation}
|\theta,\phi\rangle=\cos(\theta/2)|u_{0}\rangle+e^{i\phi}\sin(\theta
/2)|u_{0}\rangle  \label{bloch sphere parametrization}
\end{equation}

The single projective coordinate in this case is $\xi=\tan(\theta/
2)e^{i\phi }$ and, remarkably, we shall see that this complex number can
actually be measured physically as a certain appropriate weak value for two
level systems. Suppose now that the interaction happens in an arbitrary
finite dimensional measuring system: $W=W_{(S)}\otimes W_{(M)}$, that is, $%
\dim W_{(M)}=m$. The initial non-entangled pure-state is $%
|\Psi_{(I)}\rangle=|\alpha\rangle \otimes|\phi_{(I)}\rangle$ and the finite
momentum basis is given by $\{|v_{\sigma}\rangle\},(\sigma=0,1,\ldots,m-1)$
so that the momentum observable can be expressed as $\hat{P}%
=|v_{\sigma}\rangle p_{\sigma }\left\langle v^{\sigma}\right\vert $. Again
we model our instantaneous interaction with the interaction Hamiltonian
given by \eqref{ideal von Neumann measurement interaction hamiltonian} so
that our final entangled state is given by%
\begin{equation}
|\Psi_{(F)}\rangle=|A_{\sigma}\rangle\otimes|v_{\sigma}\rangle\phi^{\sigma},
\end{equation}
where %%TCIMACRO{\TEXTsymbol{\vert}}%
%%BeginExpansion
%$\vert$%
%%EndExpansion%
\begin{equation}
|A_{\sigma}\rangle=e^{-i\lambda p_{\sigma}\hat{O}}|\alpha\rangle \qquad\text{%
and}\qquad|\phi_{(I)}\rangle=|v_{\sigma}\rangle\phi^{\sigma}
\end{equation}
The above entangled state clearly establishes a finite index correlation
between $|A_{\sigma}\rangle\in W_{(S)}$ and the finite momentum basis $%
|v_{\sigma}\rangle$. The total system is in the pure state $|\Psi
_{(F)}\rangle\langle\Psi_{(F)}\vert$ and by tracing out the first subsystem,
the measuring system will be:%
\begin{equation}
\hat{\rho}_{|\Psi_{(F)}\rangle}=|v_{\sigma}\rangle\phi^{\sigma}\left\langle
A^{\sigma}\right\vert A_{\tau}\rangle\phi_{\tau}\left\langle A^{\tau
}\right\vert
\end{equation}
For a single qubit, one has 
\begin{equation}
|\phi_{(I)}\rangle=\cos(\theta/2)|v_{0}\rangle+e^{i\phi}\sin(\theta
/2)|v_{1}\rangle,
\end{equation}
with 
\begin{equation}
\langle A^{0}\vert A_{1}\rangle=|\langle A^{0}\vert A_{1}\rangle|e^{-i\eta}
\end{equation}
so that we can compute the probability $P(\eta)$ of finding the second
subsystem in a certain reference state $|\theta=\pi/2,\varphi=0\rangle$ as%
\begin{align}
P(\eta)&=tr(\hat{\rho}_{|\Psi_{(F)}\rangle}|\pi/2,0\rangle\langle\pi
/2,0\vert)  \notag \\
&=\frac{1}{2}+\frac{1}{4}\left( |\langle A^{0}\vert
A_{1}\rangle|\sin\theta\cos(\phi-\eta)\right)
\end{align}
For a fixed $\theta$, this probability is clearly \textit{maximized} when $%
\mathit{\phi=\eta}$. This fact can be used to measure the so called \textit{%
geometric phase} 
\begin{equation*}
\eta=\arg\langle A^{0}\vert A_{1}\rangle
\end{equation*}
between the two indexed states $|A_{0}\rangle$ and $|A_{1}\rangle\in W_{(S)}$%
. This definition of a geometric phase was originally proposed in 1956 by
Pancharatnam \cite{pancharatnam} for optical states and rediscovered by
Berry in 1984 \cite{berry} in his study of the adiabatic cyclic evolution of
quantum states. In 1987, Anandan and Aharonov \cite{aharonovanandan1987}
gave a description of this phase in terms of natural geometric structures of
the $U(1)$ fiber-bundle structure over the space of rays and of the
symplectic and Riemannian structures in the projective space $\mathbb{CP}(N)$
inherited from the hermitian structure of $W_{(S)}$.

Given the final state $|\Psi_{(F)}\rangle$, one may then \textquotedblleft
post-select\textquotedblright\ a chosen state $|\beta\rangle\in W_{(S)}$.
The resulting state is then clearly%
\begin{equation}
|\Psi_{(F)}\rangle=C(\left\vert \beta\right\rangle \left\langle \beta
\right\vert \otimes\hat{I}))\left( \left\vert A_{\sigma}\right\rangle
\otimes\left\vert v_{\sigma}\right\rangle \phi^{\sigma}\right) ,
\end{equation}
where $C$ is an unimportant normalization factor. Because of the
post-selection, the system is now again in a non-entangled state so that the
partial trace of $|\Psi_{(F)}\rangle\langle\Psi_{(F)}|$ over the first
subsystem gives us 
\begin{equation}
|\phi_{(F)}\rangle=C\langle\beta|A^{0}\rangle\left\vert v_{\sigma
}\right\rangle \phi^{\sigma}
\end{equation}
By making the following phase choices 
\begin{equation}
\left\langle \beta\right\vert A^{0}\rangle=|\langle\beta|A^{0}\rangle
|e^{i\beta_{0}}\quad\text{and}\quad\left\langle \beta\right\vert
A^{1}\rangle=|\langle\beta|A^{1}\rangle|e^{-i\beta_{1}}
\end{equation}
we can again compute the probability of finding the second subsystem in
state $\left\vert \pi/2,0\right\rangle $ and again one finds that for a
fixed angle $\theta$, the maximum probability occurs for 
\begin{equation}
\phi\prime=\beta_{0}+\beta_{1}=\arg(\left\langle \beta\right\vert
A_{0}\rangle\left\langle A^{1}\right\vert \beta\rangle)
\end{equation}
This implies that there is an overall phase change $\Theta$ given by 
\begin{equation}
\Theta=\phi\prime-\phi=\arg(\left\langle A^{1}\right\vert \beta\rangle
\left\langle \beta\right\vert A^{0}\rangle\left\langle A^{0}\right\vert
A^{1}\rangle),
\end{equation}
which is a well-known geometric invariant in the sense that it depends only
on the projection of the state-vectors $\left\vert A_{0}\right\rangle $, $%
\left\vert A_{1}\right\rangle $ and $\left\vert \beta\right\rangle $ on the
Bloch sphere. In fact, this quantity is the intrinsic geometric phase picked
by a state-vector that is parallel transported through the closed geodesic
triangle defined by the projection of the three states on the space of rays.
For a single qubit, the geometric invariant is proportional to the area of
the geodesic triangle formed by the projection of the kets $\left\vert
A_{0}\right\rangle $, $\left\vert A_{1}\right\rangle $ and $\left\vert
\beta\right\rangle $ on the Bloch sphere and it is well known to be given by%
\begin{equation}
\Theta=\arg(\left\langle A^{1}\right\vert \beta\rangle\left\langle
\beta\right\vert A^{0}\rangle\left\langle A^{0}\right\vert A^{1}\rangle)=-%
%TCIMACRO{\U{3a9} }%
%BeginExpansion
\Omega
%EndExpansion
/2,
\end{equation}
where $\Omega$ is the oriented solid angle formed by the geodesic triangle. 
\begin{figure}[H]
\centering
\includegraphics[scale=.3]{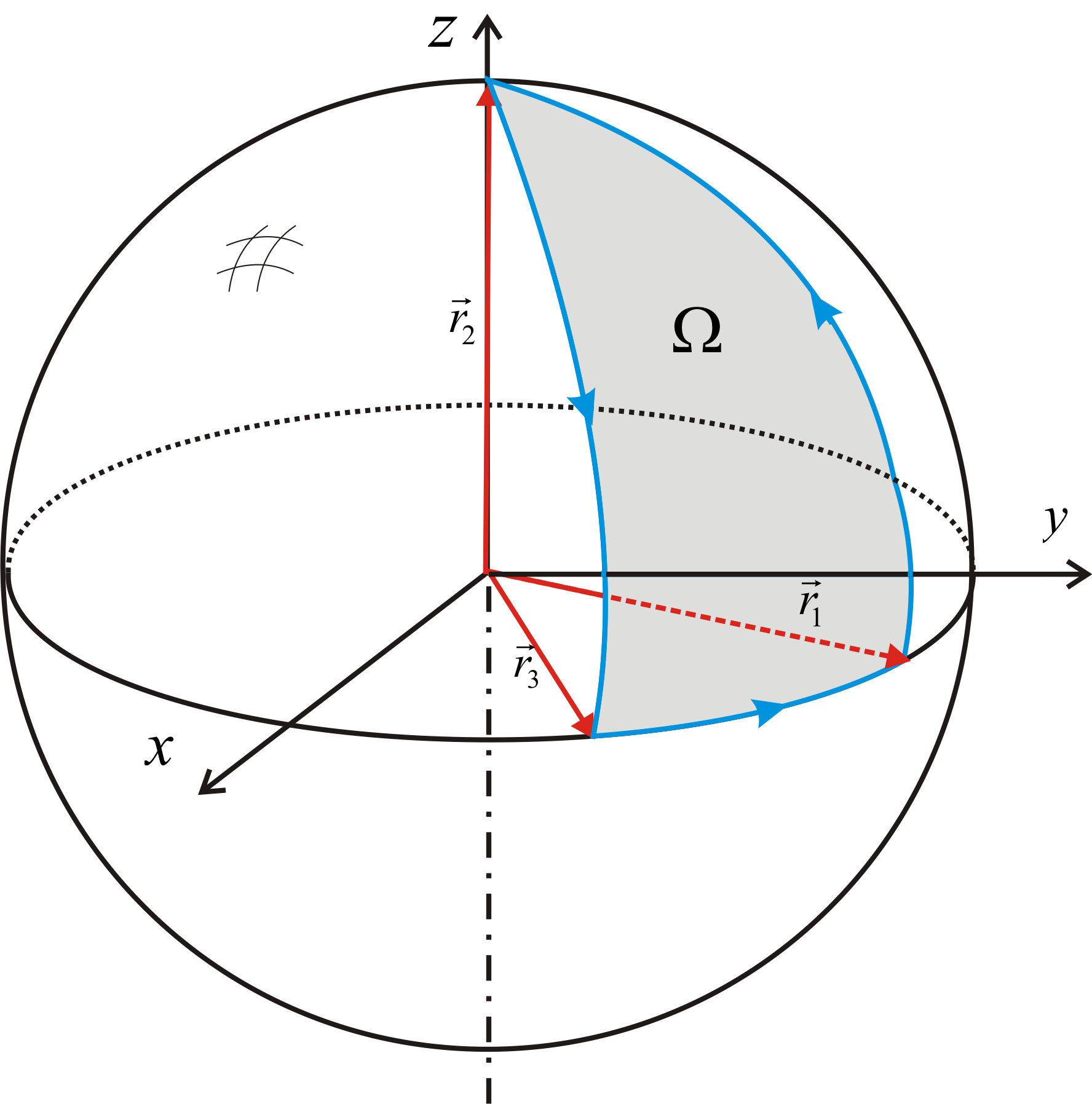}
\caption{{Solid angle formed by N. Pole and 2 Equator points.}}
\label{figsolidangle}
\end{figure}
%\begin{figure}[th]
%\centering
%\includegraphics[
%natheight=1.760800in,
%natwidth=1.728800in,
%height=1.7979in,
%width=1.7668in
%]{esfera_bloch10.png}\caption{{Solid angle formed by N. Pole and 2
%Equator points}}%
%\label{figsolidangle}%
%\end{figure}

Returning to the single qubit case, notice that if we choose the following
state $\left\vert \alpha\right\rangle =\left\vert u_{0}\right\rangle $ for
the pre-selected state (the \textquotedblleft north pole\textquotedblright\
of the Bloch sphere), and $|\theta,\phi\rangle$ for the post-selected state
and also 
\begin{equation}
\hat{O}=\hat{\sigma}_{1}=\left\vert u_{0}\right\rangle \left\langle
u^{1}\right\vert +\left\vert u_{1}\right\rangle \left\langle u^{0}\right\vert
\end{equation}
as the observable, then it is straightforward to compute the weak value as 
\begin{equation}
O_{w}=\tan(\theta/2)e^{i\phi},
\end{equation}
which is clearly complex-valued in general. What is curious about this
result is that the weak value gives a direct physical meaning to the complex
projective coordinate of the state vector in the Bloch sphere given by %
\eqref{projective coordinates} \cite{kobayashi2013}.
\begin{figure}[H]
\centering
\includegraphics[scale=.1]{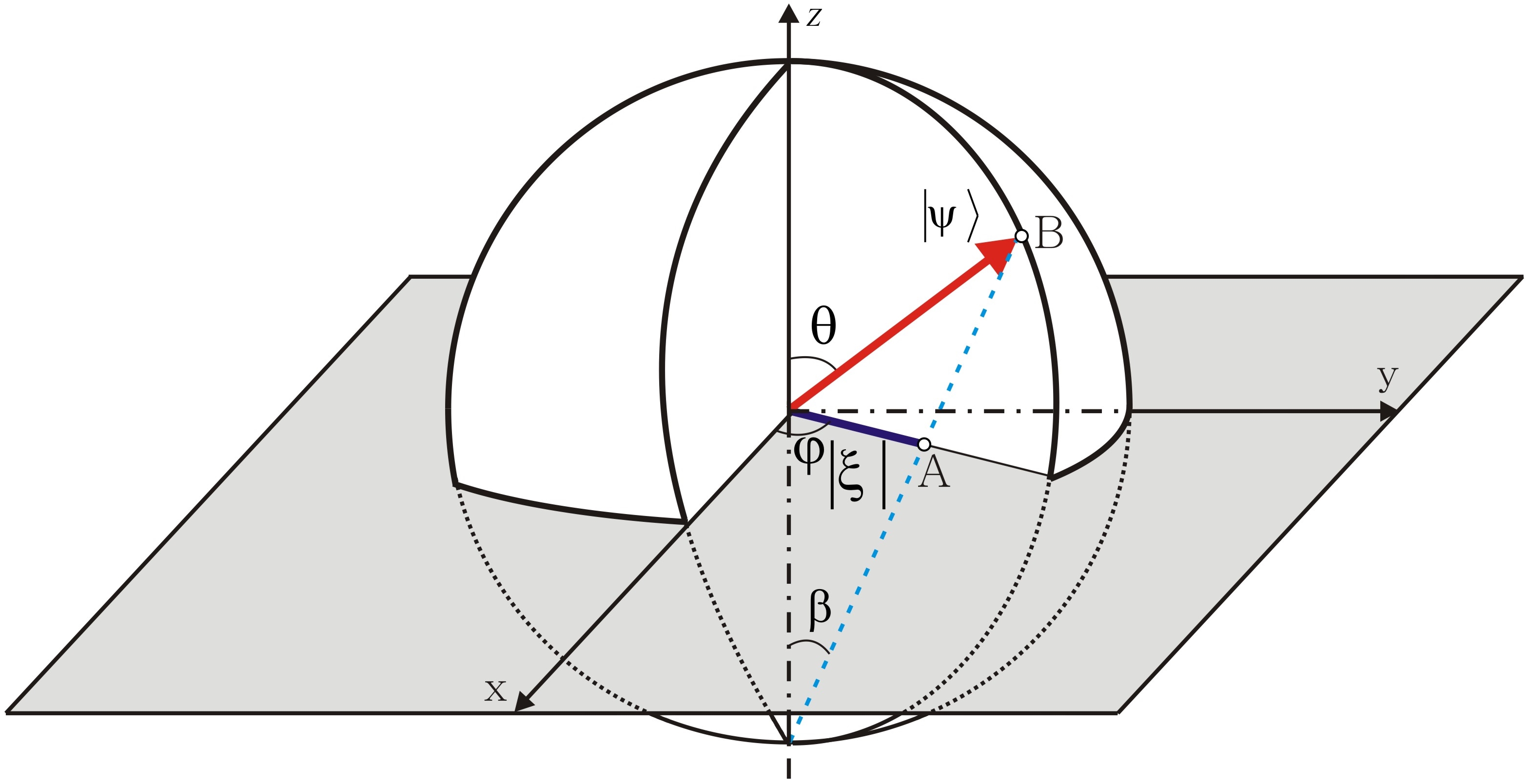}
\caption{{Stereographic projection.}}
\label{figstereographi}
\end{figure}

Suppose now that the physical system $W$ is composed by two subsystems $%
W_{(S)}\otimes W^{(\infty)}$ as before, but the measuring system $%
W^{(\infty)}$ is spanned by a complete basis of position kets $\{\left\vert
q(x)\right\rangle \}$ (momentum kets $\{\left\vert p(x)\right\rangle \}$),
with $-\infty<x,y<+\infty$. And again let us consider the initial state as
the product state vector $|\Psi_{(I)}\rangle=|\alpha\rangle\otimes|\phi
_{(I)}\rangle$ together with an instantaneous interaction coupling the
observable $\hat{O}$ of $W_{(S)}$ with the momentum observable $\hat{P}$ in $%
W^{(\infty)}$. The system evolves then to%
\begin{equation}
|\Psi_{(F)}\rangle=\int_{-\infty}^{+\infty}dy\left\vert A(y)\right\rangle
\otimes\left\vert p(y)\right\rangle \widetilde{\phi}_{(I)}(y),
\end{equation}
where%
\begin{equation}
\left\vert A(y)\right\rangle =e^{-i\lambda y\hat{O}}\left\vert \alpha
\right\rangle
\end{equation}
is the continuous indexed states that are correlated to the momentum basis
of the measuring apparatus and 
\begin{equation}
\widetilde{\phi}_{(I)}(y)=\left\langle p(y)\right\vert \phi_{(I)}\rangle
\end{equation}
is the wave function of the initial state of the apparatus in the \textit{%
momentum basis}. We may now compute (to first order in $dy$) the intrinsic
phase shift between $\left\vert A(y)\right\rangle $ and $\left\vert
A(y+dy)\right\rangle $ in a similar manner that was carried out before with
the discretely parameterized states:%
\begin{equation}
\arg(\left\langle A(y)\right\vert A(y+dy)\rangle)\approx-\lambda dy\langle 
\hat{O}\rangle_{\left\vert \alpha\right\rangle },
\label{argument of inner product of nearby states}
\end{equation}
where $\langle\hat{O}\rangle_{\left\vert \alpha\right\rangle }$ is the
expectation value 
\begin{equation}
\langle\hat{O}\rangle_{\left\vert \alpha\right\rangle }=\left\langle
\alpha\right\vert \hat{O}\left\vert \alpha\right\rangle
\end{equation}
is the expectation value of observable $\hat{O}$ in the initial state $%
\left\vert \alpha\right\rangle $. We can also compute the shift of the
expectation value of the position observable of the particle of the
measuring system between the initial and final states. Let $\{\left\vert
o_{j}\right\rangle \},\ (j=0,1,\ldots N-1)$ be a complete set of eigenkets
of observable $\hat{O}$. The final state of the composite system can be
described by the following pure density matrix:%
\begin{align}
\hat{\rho}_{|\Psi(F)\rangle}&=|\Psi_{(F)}\rangle\langle\Psi_{(F)}|  \notag \\
&=|o_{j}\rangle\langle o^{k}|\otimes\alpha^{j}\hat{V}_{\lambda o_{j}}^{\dag
}|\phi_{(I)}\rangle\langle\phi_{(I)}|\hat{V}_{\lambda o_{k}}\alpha_{k}
\end{align}
By taking the partial trace of the $W_{(S)}$ system, we arrive at the
following mixed state that describes the measuring system at instant $t_{F}$:%
\begin{equation}
\hat{\rho}_{\left\vert \Psi(F)\right\rangle }^{(M)}=\underset{j}{\sum}%
|\alpha_{j}|^{2}\hat{V}_{\lambda o_{j}}^{\dag}\left\vert
\phi_{(I)}\right\rangle \left\langle \phi_{(I)}\right\vert \hat{V}_{\lambda
o_{j}}
\end{equation}
The \textit{ensemble} expectation value of position is then%
\begin{equation*}
\lbrack\hat{Q}]_{\hat{\rho}_{\left\vert \Psi(F)\right\rangle }^{(M)}}=tr(%
\hat{\rho}_{\left\vert \Psi(F)\right\rangle }^{(M)}\hat{Q})=\langle\hat {Q}%
\rangle_{|\phi_{(I)}\rangle}+\lambda\langle\hat{O}\rangle_{\left\vert
\alpha\right\rangle }
\end{equation*}

A geometric interpretation of this von Neumann's pre-measurement can be
presented in the following way: Let $\left\vert \psi(t)\right\rangle $ be
the curve of normalized state vectors in $W^{(N+1)}$ given by the unitary
evolution generated by a given Hamiltonian $\hat{H}(t)$. The Schr\"{o}dinger
equation implies a relation between $\left\vert \psi(t)\right\rangle $ and $%
\left\vert \psi(t+dt)\right\rangle $ given by:%
\begin{equation*}
\left\vert d\psi(t)\right\rangle =\left\vert \psi(t+dt)\right\rangle
-\left\vert \psi(t)\right\rangle =-i\hat{H}\left\vert \psi(t)\right\rangle dt
\end{equation*}
The above equation together with \eqref{distance element in CP(N)} lead to a
very elegant relation for the squared distance between two infinitesimally
nearby projection of state vectors connected by the unitary evolution over $%
\mathbb{CP}(N)$ \cite{aharonovanandan1987}:%
\begin{align}
ds^{2}(\mathbb{CP}(N))&=[\left\langle \psi(t)\right\vert \hat{H}%
^{2}\left\vert \psi(t)\right\rangle -\left\langle \psi(t)\right\vert \hat{H}%
\left\vert \psi(t)\right\rangle ^{2}]dt^{2}  \notag \\
&=\delta_{\left\vert \psi(t)\right\rangle }^{2}E
\label{dinamica de uma variacao infinitesimal sobre CP(N)}
\end{align}
The above equation means that the speed of the projection over $\mathbb{CP}%
(N)$ equals the instantaneous energy uncertainty 
\begin{equation}
\frac{ds}{dt}=\delta_{\left\vert \psi(t)\right\rangle }E
\end{equation}

A beautiful geometric derivation of the time-energy uncertainty relation
that follows directly from this equation can be found in \cite%
{aharonovanandan1987}. (See also \cite{loboribeiroribeiro} for a pedagogical
discussion of this result related to the adiabatic theorem in quantum
mechanics).

Back to our discussion of the interaction between the systems $W^{(S)}$ and $%
W^{(\infty)}$, note that the expression $\left\vert A(y)\right\rangle
=e^{-i\lambda y\hat{O}}\left\vert \alpha\right\rangle $ is formally
equivalent to the unitary time evolution equation $\left\vert
\psi(t)\right\rangle =e^{-i\hat{H}t}\left\vert \psi(0)\right\rangle $ which
is clearly a solution of the Schr\"{o}dinger equation with a
time-independent Hamiltonian. A formal analogy between the two distinct
physical processes is exemplified by the association below: %\[%
%\begin{array}
%[c]{cc}%
%\left\vert \psi (t)\right\rangle  & \left\vert A(t)\right\rangle \\
%\left\vert \psi (t)\right\rangle  & \left\vert \alpha\right\rangle
%=\left\vert \psi (0)\right\rangle \\
%t & y\\
%\hat{H} & \lambda \hat{O}%
%\end{array}
%\]%
\begin{equation*}
\begin{array}{ll}
|\psi(t)\rangle & \mapsto|A(y)\rangle \\ 
|\psi(0)\rangle & \mapsto|\alpha\rangle=|A(0)\rangle \\ 
t & \mapsto y \\ 
\hat{H} & \mapsto\lambda\hat{O}.%
\end{array}%
\end{equation*}
Looking at subsystem $W^{(S)}$ and regarding $y$ as an external parameter
(just like the time variable for the unitary time evolution) we may write
the analog of \eqref{dinamica de uma variacao infinitesimal sobre CP(N)} in $%
\mathbb{CP}(N)\subset W^{(S)}$:%
\begin{align}
ds^{2}&=[\left\langle A(y)\right\vert \hat{O}^{2}\left\vert
A(t)\right\rangle -\left\langle A(y)\right\vert \hat{O}\left\vert
A(t)\right\rangle ^{2}]\lambda^{2}dy^{2}  \notag \\
&=[\left\langle \alpha\right\vert \hat{O}^{2}\left\vert \alpha\right\rangle
-\left\langle \alpha\right\vert \hat{O}\left\vert \alpha\right\rangle
^{2}]\lambda^{2}dy^{2}
\end{align}
Comparing this result with equations \eqref{distance element in CP(N)} and %
\eqref{argument of inner product of nearby states} we can immediately see
the geometric interpretation for the expectation value $\left\langle
\alpha\right\vert \hat{O}\left\vert \alpha\right\rangle $ in terms of the $%
U(1)$ fiber bundle structure as one can easily infer from the pictorial
representation in Figure \ref{figphasedifference}.
\begin{figure}[H]%
\centering
\includegraphics[
natheight=1.760800in,
natwidth=2.187100in,
height=1.7979in,
width=2.2278in
]%
{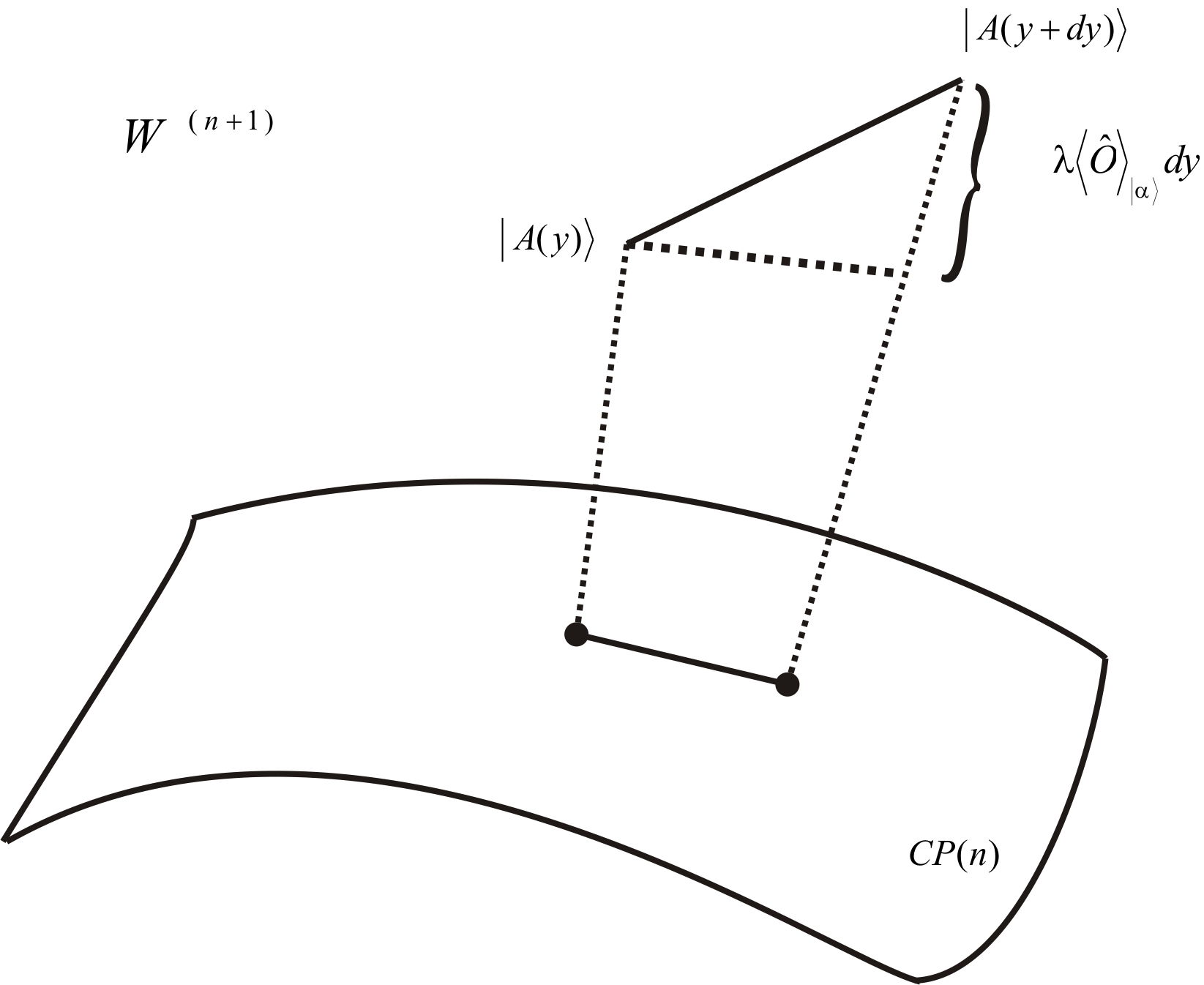}
\caption{{Phase difference between  $|A(y)\rangle$\ and \ $|A(y+dy)\rangle$.}}
\label{figphasedifference}
\end{figure}

For the case of a weak measurement, we choose again the Hamiltonian given by
equation \eqref{hamiltonian for weak measurement}. Given the initial
unentangled state $|\Psi _{(I)}\rangle =|\alpha \rangle \otimes |\phi
_{(I)}\rangle $ at $t_{0}$, the evolution of the system is described as%
\begin{align}
& |\Psi _{(F)}\rangle =\int_{-\infty }^{+\infty }dy\left\vert
A(y)\right\rangle \otimes \left\vert p(y)\right\rangle \widetilde{\phi }%
_{(I)}(y)\text{, with}  \notag \\
& \left\vert A(y)\right\rangle =e^{-i\lambda y\hat{O}}\left\vert \alpha
\right\rangle  \notag
\end{align}%
The global geometric phase related to the infinitesimal geodesic triangle
formed by the projections of $\left\vert A(y)\right\rangle $, $\left\vert
A(y+dy)\right\rangle $ and the post-selected state $\left\vert \beta
\right\rangle $ on $\mathbb{CP}(N)$ (see Figure \ref{figglobalgeomphase}) is
clearly given by%
\begin{equation}
\Theta =\arg (\left\langle A(y)\right\vert \beta \rangle \left\langle \beta
\right\vert A(y+dy)\rangle \left\langle A(y+dy)\right\vert A(y)\rangle )
\end{equation}%
\begin{figure}[H]
\centering
\includegraphics[
natheight=1.760800in,
natwidth=2.312500in,
height=1.7979in,
width=2.3531in
]{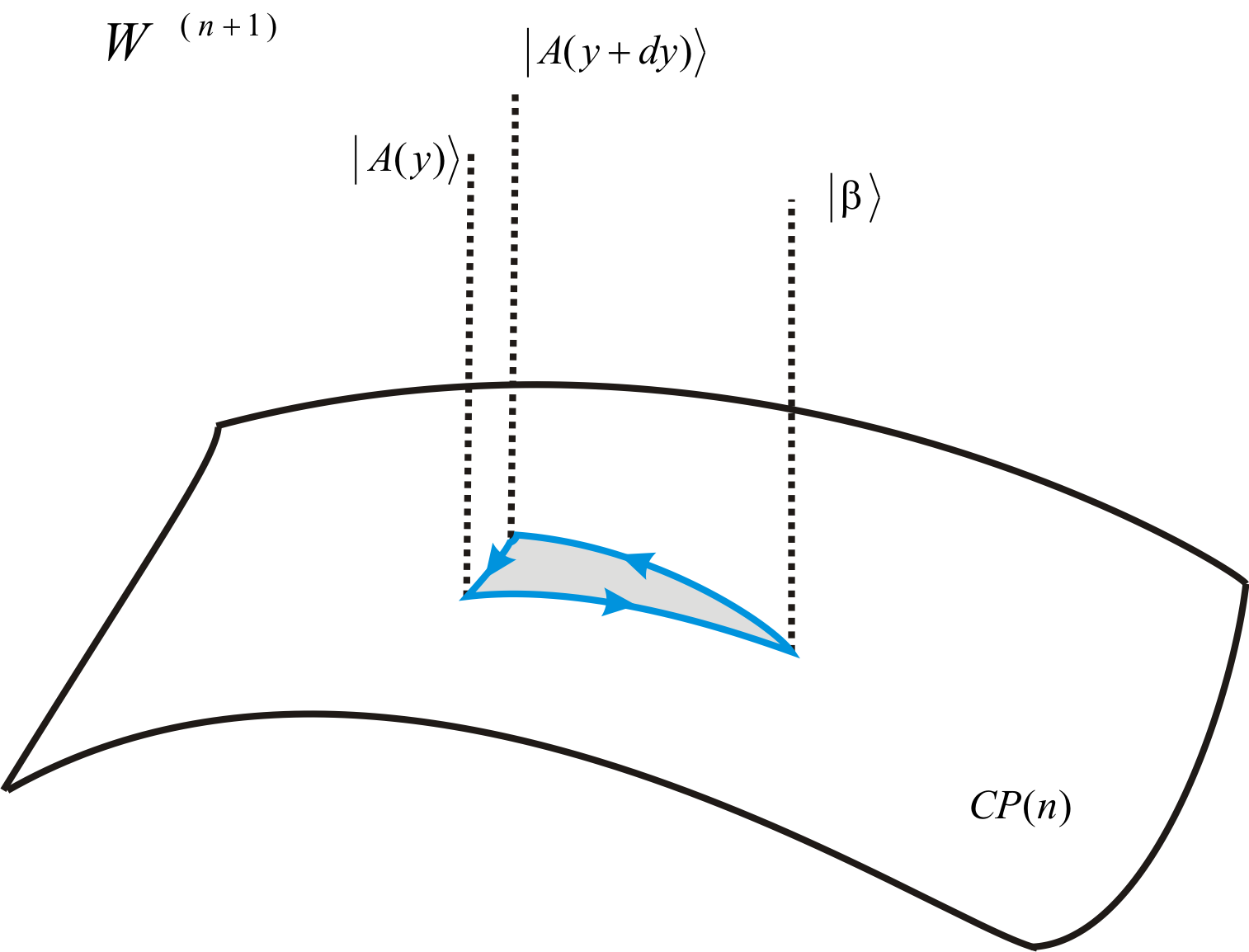}
\caption{{Representation of global geometric phase.}}
\label{figglobalgeomphase}
\end{figure}
%\begin{figure}[th]
%\centering
%\includegraphics[
%natheight=1.760800in,
%natwidth=2.312500in,
%height=1.7979in,
%width=2.3531in
%]{areavarrida.png}\caption{{Representation of global geometric phase}}%
%\label{figglobalgeomphase}%
%\end{figure}
%$=-\epsilon \lbrack \operatorname{Re}(O_{w})-\langle \hat{O}\rangle _{\left\vert
%\alpha \right\rangle }]dy$

\section{The Geometry of Deterministic Measurements}

As is well-known, the collapse postulate of Quantum Mechanics implies that,
in general, the measurement of a quantum state causes a disturbance of the
state as the state ``jumps" to an eigenstate of the observable that is being
measured in a stochastic manner. Yet, if the state is \textit{already} an
eigenstate of the observable, then the state is left untouched. This is what
is behind the notion of ``deterministic" experiments. We call a measurement,
a deterministic experiment when we measure only variables for which the
state of the system under investigation is an eigenstate. In other words,
for any state $\left\vert \psi\right\rangle $ it is possible to ask the
following question: ``What is the set of Hermitian operators $A_{\left\vert
\psi\right\rangle }$ for which $\left\vert \psi\right\rangle $ is an
eigenstate?" That is, which satisfy: 
\begin{equation}
A_{\left\vert \psi\right\rangle }=\{\hat{A}_{i}\,\,\text{such\thinspace
\thinspace that}\,\,\hat{A}_{i}|\psi\rangle=a_{i}|\psi\rangle\,,a_{i}\in%
\mathbf{%
%TCIMACRO{\U{211d} }%
%BeginExpansion
\mathbb{R}
%EndExpansion
}\}
\end{equation}
One can think this as kind of a dual question to the more familiar inquire
which asks ``What are the eigenstates of a given operator?" The measurements
of such operators $A_{\left\vert \psi\right\rangle }$ would not lead to any
wave function collapse, since the wave function is initially an eigenstate
of the operator being measured. The results are completely predictable and
so the experiments are \textit{deterministic} in this sense. Given $%
\left\vert \psi\right\rangle $, one can characterize mathematically in a
very precise way the \textit{deterministic set of operators} (DSO) $%
A_{\left\vert \psi\right\rangle }$. In fact, let $\hat{A}_{i},\hat{A}_{j}\in
A_{\left\vert \psi\right\rangle }$, then the set $A_{\left\vert \psi
\right\rangle }$ is \textit{closed} in the sense that:

\begin{enumerate}
\item $[\hat{A}_{i},\hat{A}_{j}]\in A_{\left\vert \psi\right\rangle }$

\item $\alpha\hat{A}_{i}+\beta\hat{A}_{j}\in A_{\left\vert \psi\right\rangle
}$ with $\alpha,\beta\in%
%TCIMACRO{\U{211d} }%
%BeginExpansion
\mathbb{R}
%EndExpansion
$

\item $\hat{A}_{i}\hat{A}_{j}$ $\in A_{\left\vert \psi\right\rangle }$
\end{enumerate}

Note that $[\hat{A}_{i},\hat{A}_{j}]$ is a special deterministic operator in
the sense that it annihilates $\left\vert \psi\right\rangle $, $[\hat{A}_{i},%
\hat{A}_{j}]\left\vert \psi\right\rangle =0$. This implies that $[\hat {A}%
_{i},\hat{A}_{j}]$ must be proportional to the projector $\hat{\Pi }%
_{\left\vert \psi\right\rangle }^{\perp}=\hat{I}-\hat{\Pi}_{\left\vert
\psi\right\rangle }=\hat{I}-\left\vert \psi\right\rangle \left\langle
\psi\right\vert $ that projects vectors to the subspace that is \textit{%
orthogonal} to $\left\vert \psi\right\rangle $.

Note that the projector operators $\hat{\Pi}_{\left\vert \psi\right\rangle }$
and $\hat{\Pi}_{\left\vert \psi\right\rangle }^{\perp}$ are clearly \textit{%
idempotent}, that is $\hat{\Pi}_{\left\vert \psi\right\rangle }^{2}=\hat{\Pi}%
_{\left\vert \psi\right\rangle }$ and $(\hat{\Pi }_{\left\vert
\psi\right\rangle }^{\perp})^{2}=\hat{\Pi}_{\left\vert
\psi\right\rangle }^{\perp}$.

Also, if we know the Hamiltonian $\hat{H}$ of the system, then we know the
set $A_{\left\vert \psi(t)\right\rangle }$ for each instant of time, where $%
|\psi(t)\rangle$ is a solution of Schr\"{o}dinger's equation $\hat{H}%
|\psi(t)\rangle=i|\dot{\psi}(t)\rangle$.

If we have a Hilbert space of dimension $n$ then we can choose an
orthonormal basis such that the state vector $|\psi\rangle$ can be
represented by the $n\times1$ column vector below (a unitary transformation
can always bring us to such a basis) together with any orthogonal vector $%
|\psi_{\bot}\rangle$%
\begin{equation}
|\psi\rangle=\left( 
\begin{array}{ccccc}
1 & 0 & 0 & \ldots & 0%
\end{array}
\right)^{T}\text{, and\quad}|\psi_{\bot}\rangle=\left( 
\begin{array}{ccccc}
0 & a^{1} & a^{2} & \ldots & a^{n}%
\end{array}
\right) ^{T}  \label{basis for psi}
\end{equation}

The Hermitian operators that operate on a $n$-di-\ mensional space form an 
\textit{real} $n^{2}$-dimensional algebra. The DSO that act upon this space
(in the above basis) can be represented by

\begin{equation}
\left( 
\begin{array}{ccccc}
a_{1}^{1} & 0 & 0 & \ldots & 0 \\ 
0 & a_{2}^{2} & a_{3}^{2} & \ldots & a_{n}^{2} \\ 
0 & a_{2}^{3} & a_{3}^{3} & \ldots & a_{3}^{3} \\ 
\vdots & \vdots & \vdots & \ddots & \vdots \\ 
0 & a_{2}^{n} & a_{3}^{n} & \ldots & a_{n}^{n}%
\end{array}
\right)  \label{matrix form for DSO}
\end{equation}
with $a_{j}^{i}=\bar{a}_{i}^{j}$, so that the dimension of the DSO space is
clearly $(n-1)^{2}+1$.

\subsection{The Completely Uncertain Operators}

In addition to the set $A_{\left\vert \psi\right\rangle }$ of DSO, there is
also a set $B_{\left\vert \psi\right\rangle }$ of operators whose results
are \textit{completely} \textit{uncertain}. In fact, given $\left\vert
\psi\right\rangle $, we can always decompose any operator $\hat{C}$ in the
following unique way: 
\begin{equation}
\hat{C}\equiv\hat{A}_{\left\vert \psi\right\rangle }^{(C)}+\hat{B}%
_{\left\vert \psi\right\rangle }^{(C)}
\end{equation}
where 
\begin{align}
&\hat{A}_{\left\vert \psi\right\rangle }^{(C)}=\hat{\Pi}_{\left\vert
\psi\right\rangle }\hat{C}\hat{\Pi}_{\left\vert \psi\right\rangle }+\hat{\Pi 
}_{\left\vert \psi\right\rangle }^{\perp}\hat{C}\hat{\Pi}_{\left\vert
\psi\right\rangle }^{\perp}\text{, and}  \notag \\
&\hat{B}_{\psi}^{(C)}=\hat{\Pi}_{\left\vert \psi\right\rangle }^{\perp}\hat{C%
}\hat{\Pi}_{\left\vert \psi\right\rangle }+\hat{\Pi}_{\left\vert
\psi\right\rangle }\hat{C}\hat{\Pi }_{\left\vert \psi\right\rangle }^{\perp}
\end{align}

In the same basis of \eqref{matrix form for DSO}, the \textit{completely
uncertain operators} (CUO) have the following matrix form: 
\begin{equation}
\hat{B}\left\vert \psi\right\rangle =\left( 
\begin{array}{ccccc}
0 & b_{2}^{1} & b_{3}^{1} & \ldots & b_{n}^{1} \\ 
b_{1}^{2} & 0 & 0 & \ldots & 0 \\ 
b_{1}^{3} & 0 & 0 & \ldots & 0 \\ 
\vdots & \vdots & \vdots & \ddots & \vdots \\ 
b_{n}^{3} & 0 & 0 & \ldots & 0%
\end{array}
\right) \left( 
\begin{array}{c}
1 \\ 
0 \\ 
0 \\ 
\vdots \\ 
0%
\end{array}
\right) =\left( 
\begin{array}{c}
0 \\ 
b_{1}^{2} \\ 
b_{1}^{3} \\ 
\vdots \\ 
b_{n}^{3}%
\end{array}
\right) =\lambda\left\vert \psi_{\bot}\right\rangle
\label{matrix form for  CUO}
\end{equation}
Note that the uncertain operator $\hat{B}$ takes the state $\left\vert
\psi\right\rangle $ to an orthogonal normalized state $|\psi_{\bot}\rangle$
(up to a normalizing factor $\lambda$) and that the dimension of the CUO
subspace of operator subspace is clearly $2(n-1)$. In this way, we see that
the $n^{2}$-dimensional space of Hermitian operators is decomposed into the 
\textit{direct orthogonal sum} of the DSO and CUO subspaces: 
\begin{equation}
\dim(DSO\oplus CUO)=n^{2}=\dim(DSO)+\dim(CUO)
\end{equation}%
\begin{figure}[H]%
\centering
\includegraphics[scale=.3]{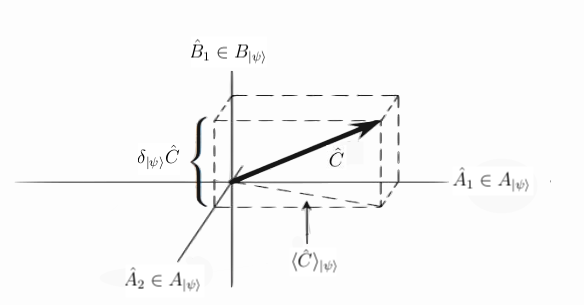}%
\caption{Pictorial representation in Operator Space.}%
\end{figure}
%\FRAME{ftbpFU}{6.1445in}{3.2223in}{0pt}{\Qcb{Pictorial representation in
%Operator Space}}{}{Figure}{\special{ language "Scientific Word"; type
%"GRAPHIC"; maintain-aspect-ratio TRUE; display "USEDEF"; valid_file "T";
%width 6.1445in; height 3.2223in; depth 0pt; original-width 6.0831in;
%original-height 3.1773in; cropleft "0"; croptop "1"; cropright "1";
%cropbottom "0"; tempfilename 'MX3V1B0D.wmf';tempfile-properties "PR";}}
Traditionally, it was believed that if a measurement interaction is weakened
so that there is no disturbance on the system, then no information could be
obtained. However, the advent of the concept of weak measurements and weak
values has changed the point of view on this issue quite dramatically. Not
only we have learned that there is indeed a gain of information, but it
turned-out to be quite a very important tool both theoretically and for
practical purposes \cite{Yutaka2012,dressel2014}.

It has been shown \cite{spie-nswm} that information can be obtained even
though not a single particle (in an ensemble) is disturbed. To begin to
introduce this point, let us consider a general theorem for any vector in
Hilbert space:

\begin{theorem}
Let $W$ be a Hilbert space and an arbitrary non-null normalized vector
$|\psi\rangle\in W$ and also let $\hat{A}$ be an arbitrary Hermitian operator
that acts upon $W$, then:%
\begin{equation}
\hat{A}|\psi\rangle=\langle\hat{A}\rangle_{|\psi\rangle}|\psi\rangle
+\delta_{|\psi\rangle}\hat{A}|\psi_{\perp}\rangle
\end{equation}
where $\langle\hat{A}\rangle_{|\psi\rangle}=\left\langle \psi\right\vert
\hat{A}|\psi\rangle$ is the expectation value of observable $\hat{A}$ in state
$|\psi\rangle$, $\delta_{|\psi\rangle}^{2}\hat{A}=\left\langle \psi\right\vert
\hat{A}^{2}|\psi\rangle-\left\langle \psi\right\vert \hat{A}|\psi\rangle^{2}$
is the squared uncertainty of observable $\hat{A}$ in state $|\psi\rangle$, and
$|\psi_{\perp}\rangle$ is a vector that belongs to the subspace of $W$ that is
orthogonal to $|\psi\rangle$.
\end{theorem}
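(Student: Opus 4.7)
The proof I would give is essentially a one-line calculation built on the orthogonal projector decomposition that the paper has already introduced via $\hat{\Pi}_{|\psi\rangle}=|\psi\rangle\langle\psi|$ and $\hat{\Pi}_{|\psi\rangle}^{\perp}=\hat{I}-|\psi\rangle\langle\psi|$. First I would write the trivial identity $\hat{A}|\psi\rangle = \hat{\Pi}_{|\psi\rangle}\hat{A}|\psi\rangle + \hat{\Pi}_{|\psi\rangle}^{\perp}\hat{A}|\psi\rangle$, and observe that the first term is
\begin{equation*}
\hat{\Pi}_{|\psi\rangle}\hat{A}|\psi\rangle = |\psi\rangle\langle\psi|\hat{A}|\psi\rangle = \langle \hat{A}\rangle_{|\psi\rangle}|\psi\rangle,
\end{equation*}
which is already the first summand in the claim.

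Next I would turn to the orthogonal piece $|\chi\rangle := \hat{\Pi}_{|\psi\rangle}^{\perp}\hat{A}|\psi\rangle$, which by construction lives in the orthogonal complement of $|\psi\rangle$. I would compute its squared norm using the idempotence of $\hat{\Pi}_{|\psi\rangle}^{\perp}$ and the Hermiticity of $\hat{A}$:
\begin{equation*}
\langle\chi|\chi\rangle = \langle\psi|\hat{A}\hat{\Pi}_{|\psi\rangle}^{\perp}\hat{A}|\psi\rangle = \langle\psi|\hat{A}^{2}|\psi\rangle - \langle\psi|\hat{A}|\psi\rangle^{2} = \delta_{|\psi\rangle}^{2}\hat{A}.
\end{equation*}
Assuming $\delta_{|\psi\rangle}\hat{A}\neq 0$, I would then \emph{define} the unit vector $|\psi_{\perp}\rangle := |\chi\rangle/\delta_{|\psi\rangle}\hat{A}$, which lies in the orthogonal complement of $|\psi\rangle$, and immediately obtain $\hat{\Pi}_{|\psi\rangle}^{\perp}\hat{A}|\psi\rangle = \delta_{|\psi\rangle}\hat{A}\,|\psi_{\perp}\rangle$, closing the identity.

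The only subtlety worth addressing explicitly is the degenerate case $\delta_{|\psi\rangle}\hat{A}=0$, i.e., when $|\psi\rangle$ is an eigenstate of $\hat{A}$ (a deterministic operator in the language of the preceding subsection). In that case $|\chi\rangle=0$, so the second summand vanishes and the statement holds for \emph{any} choice of orthogonal vector $|\psi_{\perp}\rangle$. I would simply note this as a remark so that the definition of $|\psi_{\perp}\rangle$ is well-posed in all cases. I expect no genuine obstacle here; the theorem is really a geometric reformulation of the variance, and the main value is that it isolates the \emph{direction} in which $\hat{A}$ kicks the state out of the ray $\mathbb{C}|\psi\rangle$, tying the uncertainty $\delta_{|\psi\rangle}\hat{A}$ to the norm of the CUO-component of $\hat{A}|\psi\rangle$ in the DSO/CUO decomposition just discussed.
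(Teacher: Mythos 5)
Your proof is correct and follows essentially the same route as the paper: project $\hat{A}|\psi\rangle$ onto $|\psi\rangle$ and onto its orthogonal complement, then identify the norm of the orthogonal component $(\hat{A}-\langle\hat{A}\rangle_{|\psi\rangle}\hat{I})|\psi\rangle$ with the uncertainty $\delta_{|\psi\rangle}\hat{A}$. Your explicit treatment of the degenerate case $\delta_{|\psi\rangle}\hat{A}=0$ is a small but worthwhile addition that the paper's proof omits.
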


\begin{proof}
Left multiplication by $\langle\psi|$ yields the first term. By using that
$\hat{\Pi}_{\left\vert \psi\right\rangle }^{\perp}=|\psi_{\perp}\rangle
\langle\psi_{\perp}|=\hat{I}-\hat{\Pi}_{\left\vert \psi\right\rangle }=\hat
{I}-|\psi\rangle\langle\psi|$ and also by evaluating $|(\hat{A}-\langle\hat
{A}\rangle_{|\psi\rangle}\hat{I})|\psi\rangle|^{2}=\delta_{|\psi\rangle}%
^{2}\hat{A}$ yields the second term.
\end{proof}
\begin{figure}[H]
\centering
\includegraphics[scale=.3]{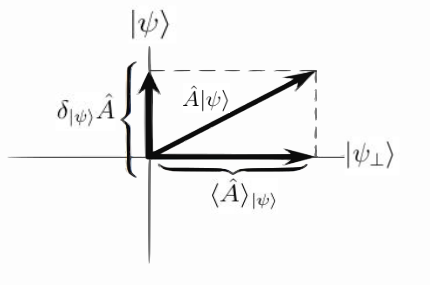}
\caption{Pictorial Representation in Hilbert Space.}
\end{figure}
%EndExpansion

Using this theorem, it is easy to see that in the basis of eq. 
\eqref{basis
for psi}, the diagonal elements of \eqref{matrix form for DSO} correspond to
the averages of the observables, e.g. $a_{1}^{1}=\langle\hat {A}%
\rangle_{\left\vert \psi\right\rangle }$ for state $\left\vert
\psi\right\rangle $, which again can be measured without uncertainty.

By normalizing the vector in the second member of eq. 
\eqref{matrix form for
CUO}, one has for its square modulus 
\begin{equation}
\lambda^{2}=|b_{1}^{2}|^{2}+|b_{1}^{3}|^{2}+\cdot\cdot+|b_{1}^{n}|^{2}
\end{equation}

From the above theorem, it is also easy to see that the normalizing factor $%
\lambda$ in eq. \eqref{matrix form for CUO} is nothing else but the
uncertainty of $\hat{B}$: 
\begin{equation}
\lambda=\delta_{\left\vert \psi\right\rangle }\hat{B}
\end{equation}

\subsection{Deterministic and Partially Deterministic Protocols}

\subsubsection{The Case of a Single qubit}

Consider a quantum system represented by a single qubit as in the spin of a
electron for instance. The best way to ``visualize" this system is to
represent its states on the Bloch sphere. Suppose that the north and south
``poles" of the sphere are represented by the orthonormal basis $\left\{
\left\vert u_{0}\right\rangle ,\left\vert u_{1}\right\rangle \right\} $ and
let 
\begin{equation}
\left\vert \psi_{\alpha}\right\rangle =\frac{1}{\sqrt{2}}\left( \left\vert
u_{0}\right\rangle +e^{i\alpha}\left\vert u_{1}\right\rangle \right)
\label{arbitrary state on the equator}
\end{equation}
be an arbitrary state represented on the ``equator" of the sphere as in
shown below: 
\begin{figure}[H]
\begin{center}
\includegraphics[scale=.25]{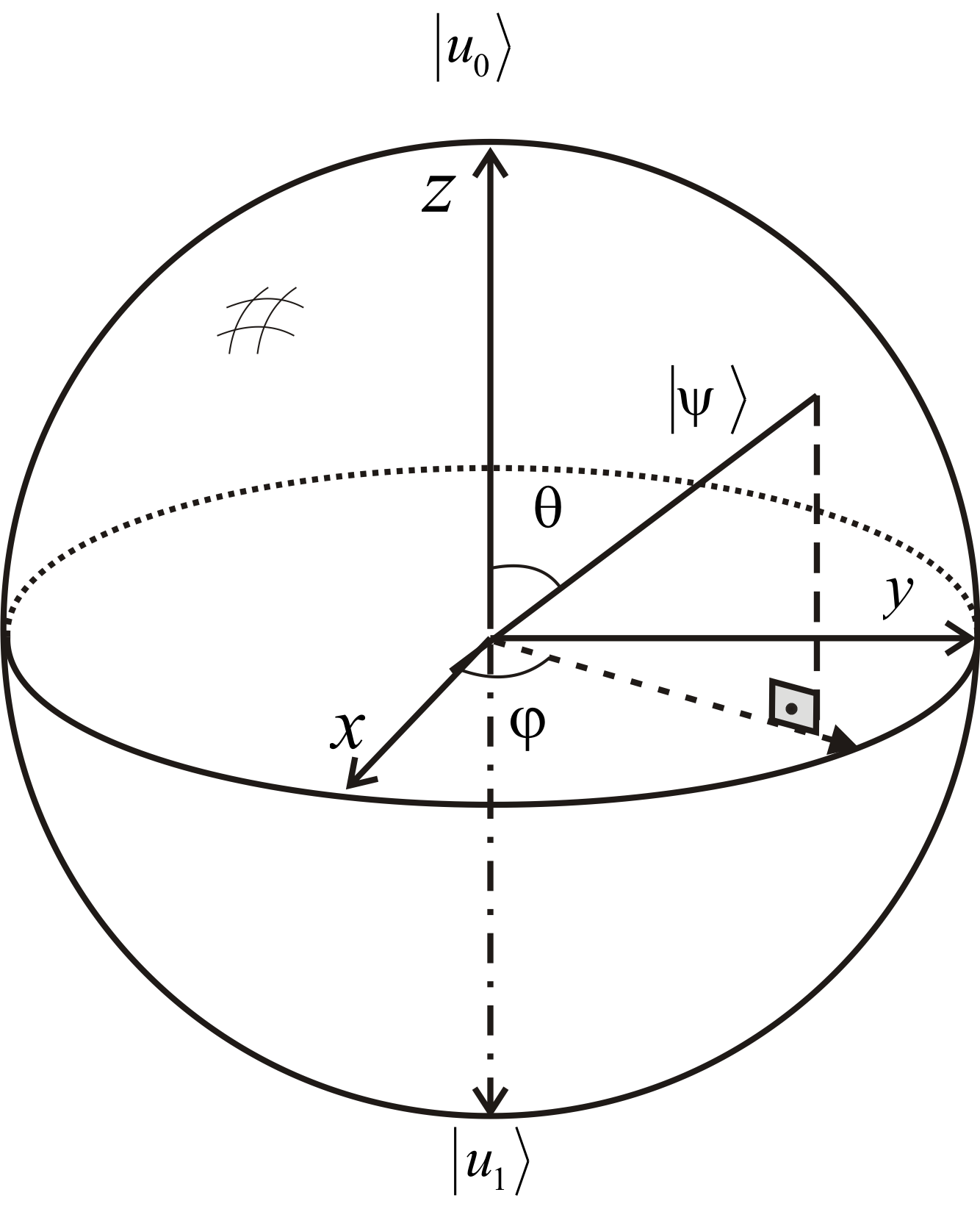}
\end{center}
\caption{Bloch sphere}
\end{figure}

Let us introduce now a game between two parties, Alice and Bob. Alice
chooses one of the two phases: $\alpha=0$ or $\alpha=\pi$ so that she has
the following two states at her disposal: 
\begin{equation}
\left\vert \pm\right\rangle =\frac{1}{\sqrt{2}}\left( \left\vert
u_{0}\right\rangle \pm\left\vert u_{1}\right\rangle \right)
\label{x+ and x- states}
\end{equation}
She delivers one of them to Bob and his task is to discover which one. All
he has to do is to measure the $\hat{\sigma}_{1}$ observable: 
\begin{equation}
\langle\hat{\sigma}_{1}\rangle_{\left\vert \pm\right\rangle }=\left\langle
\pm\right\vert \hat{\sigma}_{1}\left\vert \pm\right\rangle =\pm1
\end{equation}
Thus, this is an example of a deterministic measurement because Bob has the
previous knowledge that the state is an eigenstate of $\hat{\sigma}_{1}$ and
so with one single measurement he discovers what was Alice's choice and one
classical bit has been transmitted between them. Suppose now that Alice
chooses an arbitrary phase $\alpha$ and Bob must find out \textit{what} is
its value. In this case, Bob must measure a great number of equally prepared 
$\left\vert \psi_{\alpha}\right\rangle $ states so that he can find the
average 
\begin{equation}
\langle\hat{\sigma}_{1}\rangle_{\left\vert \psi_{\alpha}\right\rangle
}=\left\langle \psi_{\alpha}\right\vert \hat{\sigma}_{1}\left\vert
\psi_{\alpha}\right\rangle =\cos\alpha
\end{equation}
After a sufficiently large number of measurements, he can find $\cos\alpha$.
Since $\cos\alpha=\cos\left( 2\pi-\alpha\right) $ there is still an
ambiguity between these two possible states, but a few more measurements in
the direction $\vec{\sigma}\cdot\hat{n}_{\alpha}$ where $\hat{n}%
_{\alpha}=\cos\alpha\hat{\imath}+\sin\alpha\hat{\jmath}$ solves the problem
with ease. Thus, Bob can determine the phase $\alpha$ with arbitrary
precision given a sufficiently large ensemble of equally prepared states $%
\left\vert \psi_{\alpha}\right\rangle $.

\subsection{The Measurement of Interference for a Single Particle}

Consider the quantum mechanic unidimensional motion of a \textit{single}
particle. Suppose that the particle in the instant $t=0$ is described as a
sum of two highly concentrated ``lumps" of the wave function at two
macroscopically separated points. For instance, consider that at $t=0$, the
single particle has just emerged from the interaction with a two-slit
apparatus (with length $L$ between the slits). It is reasonable to expect
that at this particular moment, the wave-function in the \textit{position}
basis is indeed highly localized around each one of the two slits. Let us
consider the degree of freedom along the apparatus as the $y$ direction and
the direction of the incident particle beam as the $x$-direction as shown in
the figure \ref{twoslitfigure}. 
\begin{figure}[H]
\begin{center}
\includegraphics[scale=.25]{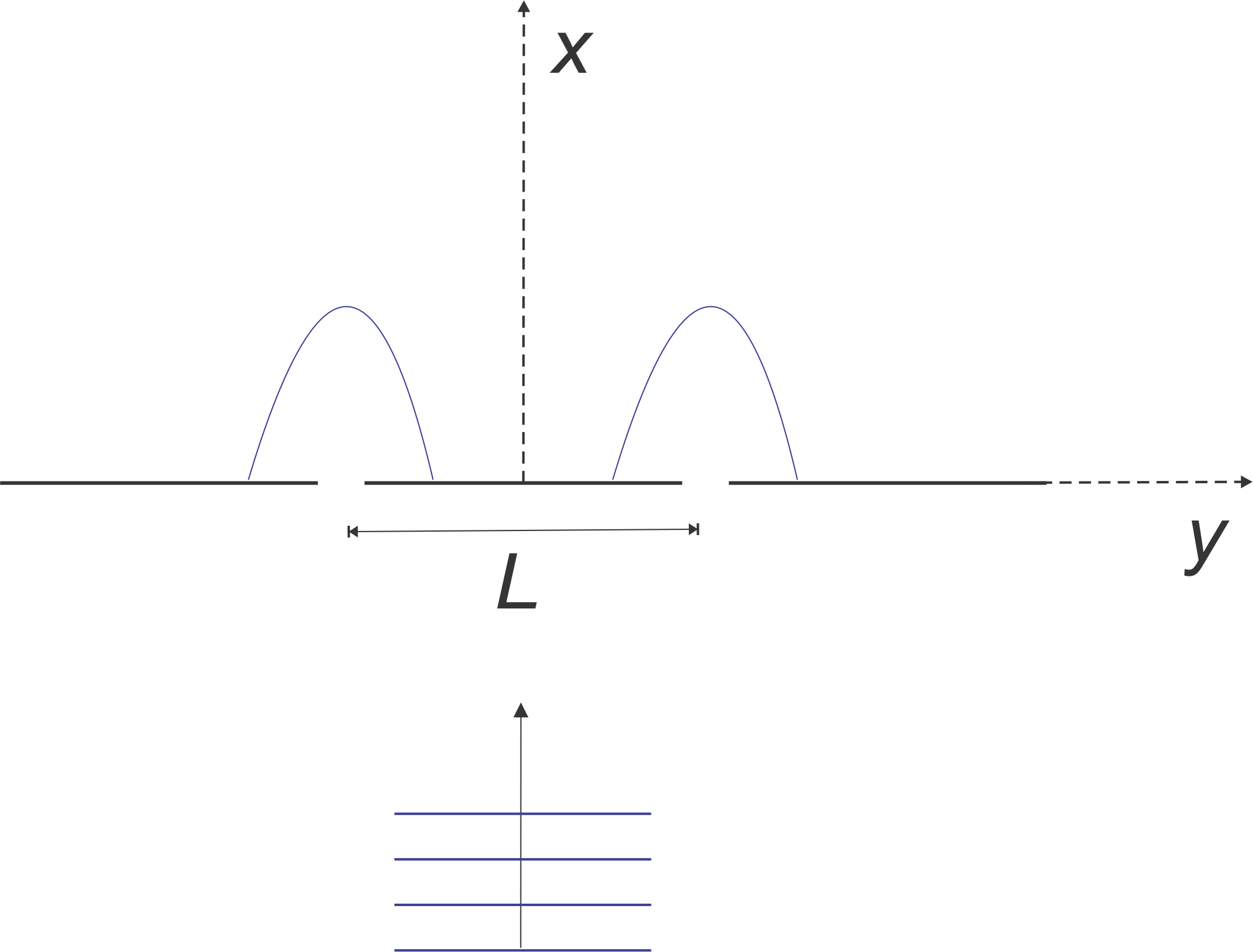}
\end{center}
\caption{Two-Slit}
\label{twoslitfigure}
\end{figure}

Let the particle's wave-function at $t=0$ be given by the following Schr\"{o}%
dinger-Cat like state as a linear combination of two macroscopically
separated coherent states (\ref{def. of coherent states})%
\begin{align}
&\left\vert \varphi_{1}(0)\right\rangle =|p=0,q=-L/2\rangle\text{, and} 
\notag \\
&\left\vert \varphi_{2}(0)\right\rangle =|p=0,q=+L/2\rangle
\end{align}
where 
\begin{equation}
|p,q\rangle=\hat{D}[p,q]|0\rangle=e^{-iqp/2}\hat{U}_{p}\hat{V}_{q}^{\dag
}|0\rangle  \label{definition of a coherent state 2}
\end{equation}
We are additionally supposing that the distance $L$ is much larger than the
spread of the Gaussian wave functions (which in this case is given by $%
\delta=1/\sqrt{2}$) so that the overlap between $\left\vert \varphi
_{1}(0)\right\rangle $ and $\left\vert \varphi_{2}(0)\right\rangle $ is 
\textit{negligible}. Indeed, because of (\ref{overlap of coherent states}),
we have that $\left\vert \left\langle \varphi_{2}(0)\right\vert \varphi
_{1}(0)\rangle\right\vert ^{2}=e^{-L^{2}/2}$. Suppose now that we take $L=10$%
, then \newline
$|\langle\varphi_{2}(0)|\varphi_{1}(0)\rangle|^{2}\simeq2\times10^{-22}$
which implies that for all practical purposes the two states can indeed be
considered \textit{orthogonal} to each other. A general global normalized
state of the particle can then approximately be written as the following
linear combination 
\begin{equation}
\left\vert \varphi_{\alpha}(0)\right\rangle \approx\frac{1}{\sqrt{2}}\left(
|0,-L/2\rangle+e^{i\alpha}|0,+L/2\rangle\right)  \label{arbitrary cat state}
\end{equation}
Note that the above equation actually describes a \textit{family} of states
parametrized by the relative phase $\alpha$. The question here is the same
that we asked in the previous subsection for the electron's spin: How can we
measure the relative phase? Note that no \textit{local} measurements of the
separate wave functions 
\begin{align}
&\langle q(y)\left\vert \varphi_{1}(0)\right\rangle =\pi^{-1/4}e^{-\frac{1}{2%
}\left( y+L/2\right) ^{2}}\text{, or}  \notag \\
&\langle q(y)\left\vert \varphi_{1}(0)\right\rangle =\pi^{-1/4}e^{-\frac{1}{2%
}\left( y-L/2\right) ^{2}}
\end{align}
can possibly detect the phase, since local phases are meaningless in quantum
mechanics as only relative phases have any physical meaning and are
measurable. Let us again initially allow Alice to choose between one of the
following states (with $\alpha=0$ or $\alpha=\pi$) at $t=0$: 
\begin{equation}
\left\vert \varphi_{\pm}(0)\right\rangle =\frac{1}{\sqrt{2}}\left(
|0,-L/2\rangle\pm|0,+L/2\rangle\right)
\end{equation}
The global wave-function in the position basis is then%
\begin{equation}
\langle q(x)\left\vert \varphi_{\pm}(0)\right\rangle =\frac{\pi^{-1/4}}{%
\sqrt{2}}\left( e^{-\frac{1}{2}\left( x+L/2\right) ^{2}}\pm e^{-\frac {1}{2}%
\left( x-L/2\right) ^{2}}\right)
\end{equation}

We have plotted in figure \ref{initialstatecat} the probability
distributions $P_{\pm}(0)=\left\vert \langle q(x)\left\vert
\varphi_{\pm}(0)\right\rangle \right\vert ^{2}$ for both global states $%
\left\vert \varphi_{\pm}(0)\right\rangle $ at $t=0$ and for $L=10$. 
\begin{figure}[H]
\begin{center}
\includegraphics[scale=.4]{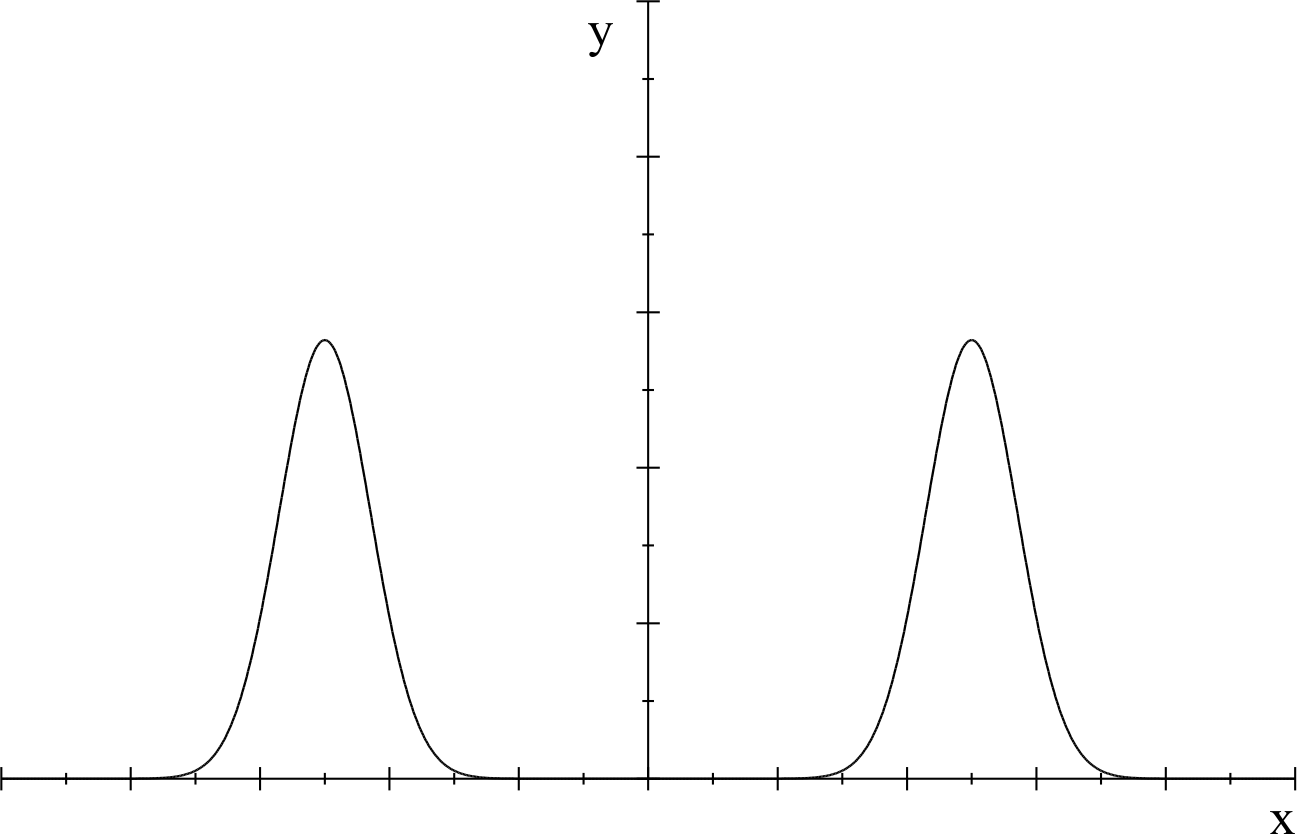}
\end{center}
\caption{{Initial Schr\"{o}dinger-Cat state}}
\label{initialstatecat}
\end{figure}

To measure the relative phase, one would need to measure the whole global
wave function non-locally. Each lump of the probability distribution would
have to be measured locally (at $t=0$) and the information would have to be
later compared. Yet, there is an easier method: one can ``drive" both
packets towards each other with a suitable Hamiltonian. Indeed, if we choose
the Hamiltonian as \eqref{generator of rotation in phase space} then the
time evolution of a coherent state is trivial: Up to a phase, a coherent
state \textit{remains} a coherent state, imitating perfectly the motion in
phase space of a classical harmonic oscillator. The best way to see this is
through (\ref{rotation in Q.phase space with fractional fourier transform}):
The time evolution of each ket of the Shr\"{o}dinger-cat superposition can
now be easily computed as a dispersion-less \textit{rotation} in phase space
with unity frequency (up to an overall physically inessential phase): 
\begin{align}
\left\vert \varphi_{1}(t)\right\rangle & =e^{-it\left( \hat{N}+\frac{1}{2}%
\hat{I}\right) }|0,-\frac{L}{2}\rangle=e^{-i\frac{t}{2}}|\frac{L}{2}\sin t,-%
\frac{L}{2}\cos t\rangle  \notag  \label{motion of a coherent state} \\
\left\vert \varphi_{2}(t)\right\rangle & =e^{-it\left( \hat{N}+\frac{1}{2}%
\hat{I}\right) }|0,\frac{L}{2}\rangle=e^{-i\frac{t}{2}}|-\frac{L}{2}\sin t,%
\frac{L}{2}\cos t\rangle
\end{align}
after a time $t=\pi/2$, \textit{both} possible global state vectors are
given by%
\begin{equation}
\left\vert \varphi_{\pm}(\pi/2)\right\rangle =\frac{1}{\sqrt{2}}e^{-i\frac {%
\pi}{4}}\left( |+\frac{L}{2},0\rangle\pm|-\frac{L}{2},0\rangle\right)
\end{equation}
where again $e^{-i\frac{\pi}{4}}$ is an unimportant overall phase factor.
The motion in phase space is depicted in figure \ref{phasespacecat} 
\begin{figure}[H]
\begin{center}
\includegraphics[scale=.2]{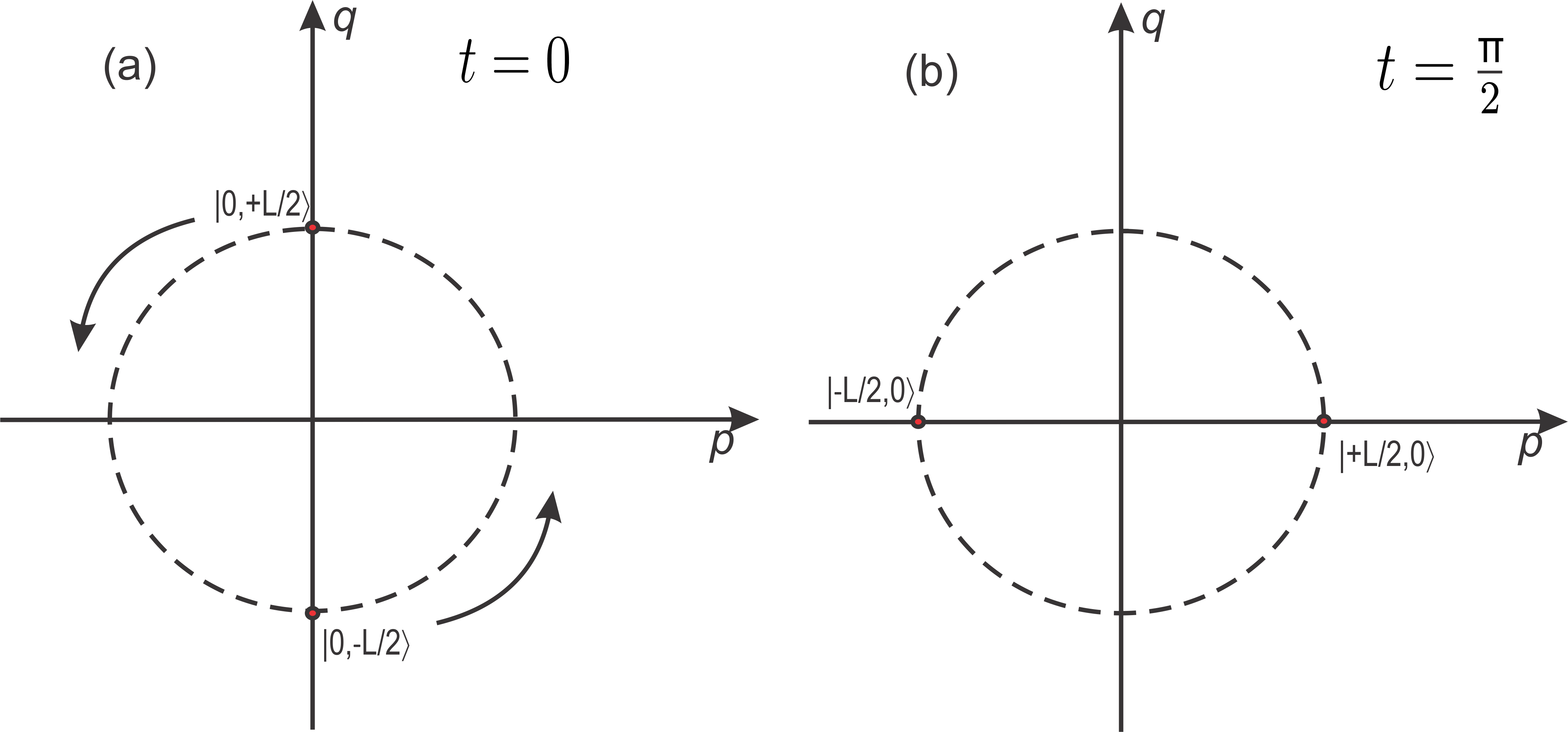}
\end{center}
\caption{{State evolution in Phase Space}}
\label{phasespacecat}
\end{figure}
In the position basis these state vectors can be written as 
\begin{equation}
\langle q(x)\left\vert \varphi_{+}(\pi/2)\right\rangle =\sqrt{2}\pi
^{-1/4}e^{-i\frac{\pi}{4}}\cos\left( \frac{Lx}{2}\right) e^{-\frac{1}{2}%
x^{2}}
\end{equation}
and 
\begin{equation}
\langle q(x)\left\vert \varphi_{-}(\pi/2)\right\rangle =\sqrt{2}i\pi
^{-1/4}e^{-i\frac{\pi}{4}}\sin\left( \frac{Lx}{2}\right) e^{-\frac{1}{2}%
x^{2}}
\end{equation}
We plot below the probability distribution $P_{+}(x)=\left\vert \langle
q(x)\left\vert \varphi_{+}(\pi/2)\right\rangle \right\vert ^{2}=\frac{2}{%
\sqrt{\pi}}\cos^{2}\left( 5x\right) e^{-x^{2}}$ for $t=\pi/2$: 
\begin{figure}[H]
\begin{center}
\includegraphics[scale=.5]{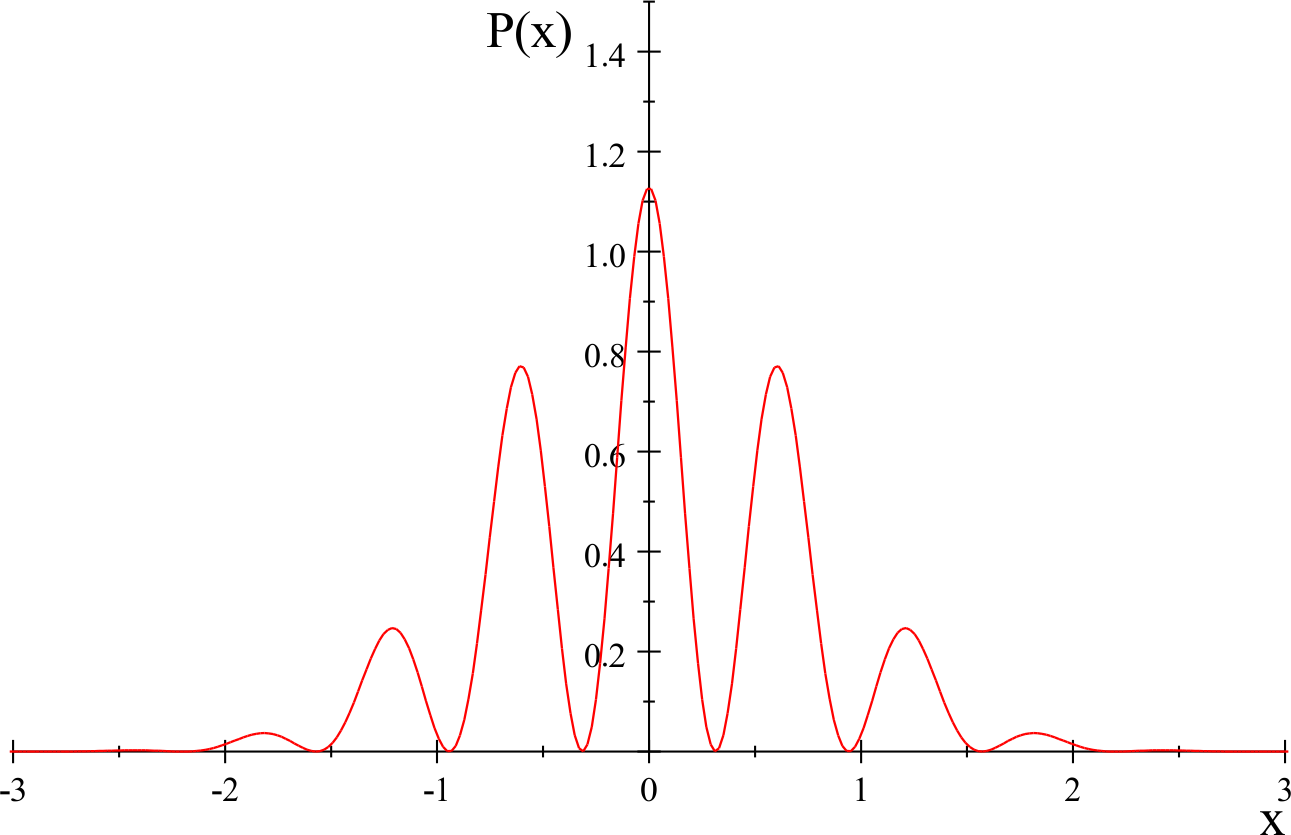}
\end{center}
\caption{$P_{+}(x)$ for $t=\protect\pi/2$ and $L=10$}
\end{figure}
We also plot below the probability distribution \newline
$P_{-}(x)=\left\vert \langle q(x)\left\vert \varphi_{-}(\pi/2)\right\rangle
\right\vert ^{2}=\frac{2}{\sqrt{\pi}}\sin^{2}\left( 5x\right) e^{-x^{2}}$
for $t=\pi/2$: 
\begin{figure}[H]
\begin{center}
\includegraphics[scale=.5]{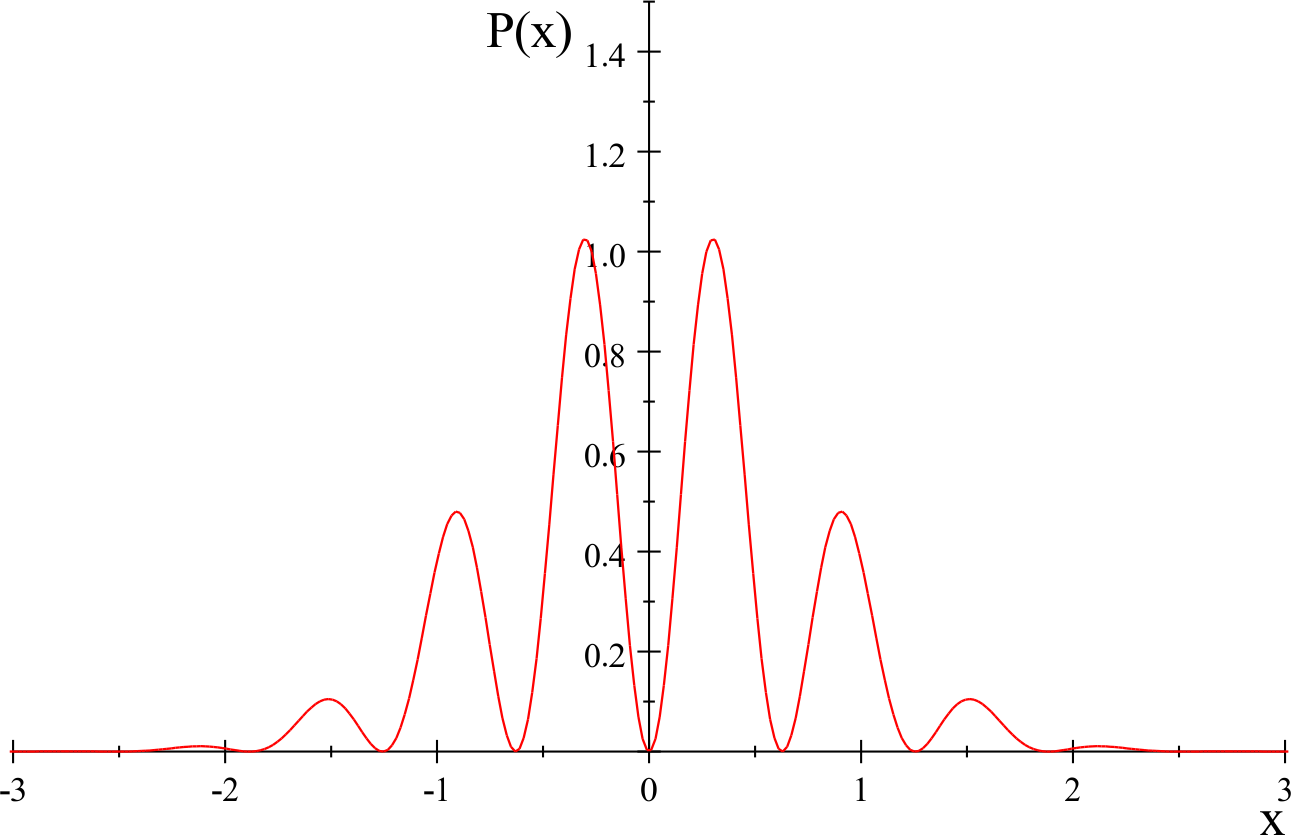}
\end{center}
\caption{$P_{-}(x)$ for $t=\protect\pi/2$ and $L=10$}
\end{figure}

Now that there is a superposition between the two pieces of wave-function,
it is easy to see the interference pattern in both cases and so one could
hope to distinguish them through local measurements. One possibility is to
use the fact that the patterns are somewhat ``complementary" in relation to
each other in the sense that the maximum for the probability density for one
pattern is the minimum for the other and vice-versa. For instance, suppose
Bob sets up particle detectors at positions $x_{n}=n\pi/5$ (for $%
n=0,\pm1,\pm2,\pm3$) at instant $t=\pi/2$ and where the \textit{precision}
of the detectors is given by $x_{n}\pm\Delta$ with $\Delta\approx0.2$. One
can estimate numerically that the probability of one of the detectors to
``click" is about $93\%$ for state $\left\vert \varphi_{+}\right\rangle $
and $35\%$ for state $\left\vert \varphi_{-}\right\rangle $. This means that
this method is \textit{unreliable} for a single measurement to distinguish
one state from the other. Bob would need to carry out a number of
repetitions of this procedure for a set of equally prepared states by Alice
in order to distinguish the states with confidence. For this reason and also
because Bob has to perform a strong position measurement which will collapse
the wave-function, we choose to coin this method as a \textit{partially}
deterministic measurement.

Suppose now that Alice prepares the cat state with an arbitrary phase $%
\alpha $\ as in (\ref{arbitrary cat state}). After a time $t=\pi /2$, the
state vector becomes 
\begin{align}
& \left\vert \varphi _{\pm }(\pi /2)\right\rangle =\frac{1}{\sqrt{2}}e^{-i%
\frac{\pi }{4}}\left( |+\frac{L}{2},0\rangle \pm e^{i\alpha }|-\frac{L}{2}%
,0\rangle \right)   \notag \\
& =\frac{1}{\sqrt{2}}e^{-i\left( \frac{\pi -2\alpha }{4}\right) }\left(
e^{-i\alpha /2}|+\frac{L}{2},0\rangle \pm e^{i\alpha /2}|-\frac{L}{2}%
,0\rangle \right) 
\end{align}%
and the position basis wave-function can be written as%
\begin{align}
& \langle q(x)\left\vert \varphi _{+}(\pi /2)\right\rangle =  \notag \\
& =\sqrt{2}\pi ^{-1/4}e^{-i\left( \frac{\pi -2\alpha }{4}\right) }\cos \left[
\frac{1}{2}\left( Lx-\alpha \right) \right] e^{-x^{2}/2}
\end{align}%
\begin{align}
& \langle q(x)\left\vert \varphi _{-}(\pi /2)\right\rangle =  \notag \\
& =i\sqrt{2}\pi ^{-1/4}e^{-i\left( \frac{\pi -2\alpha }{4}\right) }\sin %
\left[ \frac{1}{2}\left( Lx-\alpha \right) \right] e^{-x^{2}/2}
\end{align}%
If Alice provides Bob with a sufficiently large ensemble of equally prepared
states, he can find the value of $\cos \alpha $ (or $\sin \alpha $) with
arbitrary precision by measuring the particles position with a detector
localized in $x=0$. Again we have a \textit{partially} deterministic
measurement of phase $\alpha $. It is important to note that, though one has
in principle an infinite dimensional Hilbert space corresponding to the 1D
motion of a quantum particle, since one is actually measuring a phase
difference between two orthogonal states, in a sense one is effectively
restrained to a two-level system (a qubit). Where can one actually find this
qubit? The answer lies within the modular variable concept that we review in
the next section.

\section{Aharonov's Modular Variables and Schwinger's Formalism for Finite
Quantum Mechanics}

Though the space translation operator $\hat{V}_{L}=e^{i\hat{P}L}$ is not
hermitean and therefore it is \textit{not} a genuine quantum mechanical
observable, one can consider the \textquotedblleft phase" of such operator
which is exactly the modular momentum. It is \textit{modular} because it is
clearly defined up to a value $p+\frac{2n\pi }{L}$ with $n\in Z$.
Analogously, we can define a \textit{modular position} variable as the phase
of the momentum space translator $\hat{U}_{\frac{2\pi }{L}}=e^{\frac{2\pi i}{%
L}\hat{Q}}$ (with modular position given up to a value $q+nL$). It follows
immediately from (\ref{continuous weyl commutation relation}) that $\hat{V}%
_{L}$ and $\hat{U}_{\frac{2\pi }{L}}$ \textit{commute}, so there are states
that are simultaneously eigenkets \textit{both} of modular momentum and
modular position. For an apparatus formed by a lattice of a very large
number $N$ of slits (with period $L$), one expects that $\phi (x)$ should be
an $L$-periodic function and in this case, (\ref{dynamical non-local
equation}) implies that the modular momentum is \textit{exactly} conserved.
Consider the paradigmatic experiment of diffraction of a particle through
such an apparatus with a large set of $N$ equidistant slits as in Figure \ref%
{fignslit}. 
\begin{figure}[H]
\centering
\includegraphics[
natheight=1.760800in,
natwidth=3.146200in,
height=1.7979in,
width=3.1903in
]%
{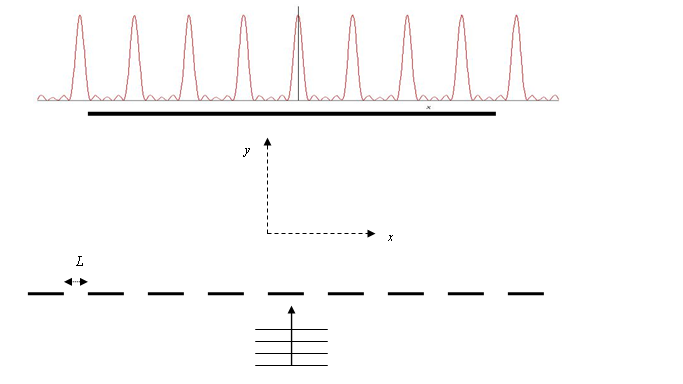}%
\caption{n-slit interference experiment.}\label{fignslit}
\end{figure}
%\FRAME{ftbpFU}{3.1903in}{1.7979in}{0pt}{\Qcb{Figure 9 - N-slit
%interference experiment}}{}{Figure}{\special{language "Scientific Word";type
%"GRAPHIC";maintain-aspect-ratio TRUE;display "USEDEF";valid_file "T";width
%3.1903in;height 1.7979in;depth 0pt;original-width 3.1462in;original-height
%1.7608in;cropleft "0";croptop "1";cropright "1";cropbottom "0";tempfilename
%'N39IV800.wmf';tempfile-properties "PR";}}

One can model the interaction of the particle with the slits through the
following Hamiltonian in the $x$ direction:%
\begin{equation}
\hat{H}(t)=\frac{1}{2m}\hat{P}^{2}+V(\hat{Q})\delta (t)\text{, with}\quad V(%
\hat{Q}+L)=V(\hat{Q})
\end{equation}%
where the particle \textquotedblleft hits the screen\textquotedblright\ at $%
t=0$. The fundamental physical assumptions here are that the interaction of
the particle happens so fast that the unitary time evolution is given by $%
\hat{U}(t)\approx e^{-iV(\hat{Q})}$. By expanding this function in a Fourier
series one gets%
\begin{equation}
e^{-iV(\hat{Q})}=\sum\limits_{n\in Z}c_{n}e^{\frac{2\pi in}{L}\hat{Q}%
}=\sum\limits_{n\in Z}c_{n}\hat{U}_{\frac{2\pi n}{L}}
\end{equation}%
The initial state-vector (in the $x$ direction) of the particle is an
eigenstate of momentum with zero momentum $\left\vert \psi
_{(I)}\right\rangle =\left\vert p(0)\right\rangle $ so the state just after
the interaction becomes $\left\vert \psi _{(F)}\right\rangle =\hat{U}%
(t)\left\vert p(0)\right\rangle $. The final state is then given by%
\begin{equation}
|\psi _{(F)}\rangle =\sum\limits_{n\in Z}c_{n}\hat{U}_{\frac{2\pi n}{L}%
}\left\vert p(0)\right\rangle =\sum\limits_{n\in Z}c_{n}|p(\frac{2\pi n}{L}%
)\rangle 
\end{equation}%
The resulting state has indeed the remarkable property of being an
eigenstate \textit{both} of $\hat{U}_{2\pi /L}$ and $\hat{V}_{L}$. That $%
\left\vert \psi _{(F)}\right\rangle $ is indeed an eigenstate of $\hat{V}_{L}
$ follows directly from \eqref{eigenvalue eq. for translators}. That the
same state \textit{must also} necessarily be an eigenstate of $\hat{U}_{2\pi
/L}$ follows\ from the Weyl relation \eqref{continuous weyl commutation
relation}. Since $\hat{U}_{2\pi /L}$ and $\hat{V}_{L}$ commute and are also
unitary, their eigenvalues are necessarily complex phases. This why Aharonov
and collaborators coined these phases as \textit{modular variables}. A phase
space description of such a state is given by Figure \ref{figexamplemod}. 
\begin{figure}[H]
\centering
\includegraphics[
natheight=1.760800in,
natwidth=2.572800in,
height=1.7979in,
width=2.6143in
]%
{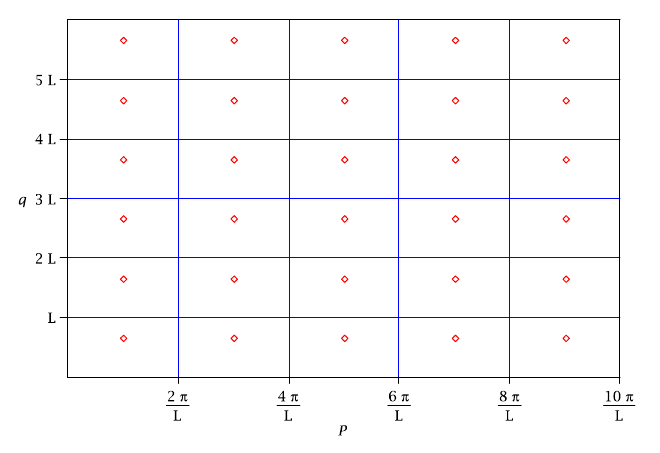}%
\caption{{State with $q\ mod=2L/3$ and $p mod=\pi/L$.}}\label{figexamplemod}
\end{figure}
%\FRAME{ftbpFU}{2.6143in}{1.7979in}{0pt}{\Qcb{{\protect\tiny Figure\ 10\ -\
%state\ with\ q\ mod=2L/3\ and\ p\ mod=\protect\pi /L}}}{}{Figure}{\special{
%language "Scientific Word"; type "GRAPHIC"; maintain-aspect-ratio TRUE;
%display "USEDEF"; valid_file "T"; width 2.6143in; height 1.7979in; depth
%0pt; original-width 2.5728in; original-height 1.7608in; cropleft "0";
%croptop "1"; cropright "1"; cropbottom "0"; tempfilename
%'N39IV801.wmf';tempfile-properties "PR";}} 
This means that for the state represented above, one has that, in each cell,
it is represented by an exact point with sharp values of modular position
and momentum but there is a complete uncertainty about \textit{which} cell
it belongs to. This is a fundamental feature of the modular variable
description. States with this peculiar mathematical structure have been also
described by Zak to study systems with periodic symmetry in quantum
mechanics \cite{zak1967,zak1968}. We may call them Aharonov-Zak states (AZ).
Notice that the equation of motion (\ref{dynamical non-local equation}) of
the modular momentum is clearly non-local and one can perceive that this
kind of \textquotedblleft dynamical non-locality" can be naturally expressed
within the Heisenberg picture. It would be almost impossible to capture this
idea within the Schr\"{o}dinger picture. Still one may address the question
of how to describe explicitly the eigenkets of modular variables in a Schr%
\"{o}dinger-like picture where one pays primary attention to the motion of
the kets instead of the observables. The quantum vector space of the states
of modular variables must be a finite-dimensional subspace (or at least a
discrete subspace) of the infinite dimensional (continuous) space of states
of the quantum motion of the particle in the $x$ direction. One possible
approach to this description was given in \cite{aharonovpendletonpetersen}
which we briefly review below.

\subsection{Aharonov%
%TCIMACRO{\U{b4}}%
%BeginExpansion
\'{}%
%EndExpansion
s Modular qubit}

Consider the two-slit apparatus mentioned above. In this case, one should
expect that the modular variables form a two-dimensional space (a spin $1/2$
algebra or a qubit algebra). One may define spin-like operators:%
\begin{align}
\hat{\sigma}_{3}& =\frac{1}{2i}(\hat{U}_{\frac{\pi }{L}}-\hat{U}_{\frac{\pi 
}{L}}^{\dagger })  \label{Aharonov modular spin operators} \\
\hat{\sigma}_{1}& =\frac{1}{2}(\hat{V}_{L}+\hat{V}_{L}^{\dagger })-\frac{1}{2%
}(\hat{V}_{L}-\hat{V}_{L}^{\dagger })\hat{\sigma}_{3}  \notag \\
\hat{\sigma}_{2}& =-\frac{i}{2}(\hat{V}_{L}-\hat{V}_{L}^{\dagger })+\frac{i}{%
2}(\hat{V}_{L}+\hat{V}_{L}^{\dagger })\hat{\sigma}_{3}  \notag
\end{align}%
acting upon a \textit{two-dimensional} subspace of vectors given by $%
W^{(2)}=\left\{ |q(+L/2)\rangle ,|q(-L/2)\rangle \right\} $ from the \textit{%
infinite-dimensional} (continuous) space $W^{(\infty )}=\left\{ |q(x)\rangle
,x\in 
%TCIMACRO{\U{211d} }%
%BeginExpansion
\mathbb{R}
%EndExpansion
\right\} $. By using (\ref{continuous weyl commutation relation}) one can
indeed check that the above operators obey the usual algebra for the Pauli
matrices. 
\begin{equation}
\lbrack \hat{\sigma}_{j},\hat{\sigma}_{k}]=2i\hat{\sigma}_{l}\varepsilon
_{jk}^{l}\qquad \text{and\qquad }\left\{ \hat{\sigma}_{j},\hat{\sigma}%
_{k}\right\} =2\hat{I}\delta _{jk}  \label{spin algebra}
\end{equation}%
when the operators defined in (\ref{Aharonov modular spin operators}) are
restricted to act upon $W^{(2)}$%
\begin{equation}
\hat{\sigma}_{3}|q(\pm L/2)\rangle =\pm |q(\pm L/2)\rangle
\end{equation}%
so that the two sharp position eigenstates at \textit{each slit} generate
the modular qubit and they also diagonalize the $\hat{\sigma}_{3}$ operator.
In this way, one can say that the \textit{full} infinite-dimensional space $%
W^{(\infty )}$ is the \textit{direct sum} $W^{(\infty )}=W^{(\infty \prime
)}\oplus W^{(2)}$ of the Aharonov qubit $W^{(2)}$ with the (also
infinite-dimensional space) $W^{(\infty \prime )}$=$\{|q(x)\rangle $, $x\in 
%TCIMACRO{\U{211d} }%
%BeginExpansion
\mathbb{R}
%EndExpansion
-\left\{ \pm L/2\}\right\} $. We propose below a \textit{different} approach
to construct explicitly a qudit for modular variables based on Schwinger's
finite quantum kinematics.

\subsection{The Schwinger-based Approach to the Modular Variable qubit}

(The main references for this subsection are \cite{schwinger}, \cite%
{schwinger1960})

\subsubsection{Schwinger's Quantum Kinematics:}

Let $W%
%TCIMACRO{\U{b4}}%
%BeginExpansion
{\acute{}}%
%EndExpansion
^{(N)}$ be an $N$-dimensional quantum space generated by an orthonormal
basis $\{\left\vert u_{k}\right\rangle \}$, $(k=0,1,\ldots N-1)$ which means
that 
\begin{equation}
\vert u_{k}\rangle\langle u^{k}\vert=\hat{I}\qquad\text{and}\qquad\langle
u^{j}\vert u_{k}\rangle=\delta_{k}^{j}
\end{equation}
These are considered to be finite position states. We also define a unitary
translation operator $\hat{V}$ that acts cyclically over this basis:%
\begin{equation}
\hat{V}\left\vert u_{k}\right\rangle =\left\vert u_{k-1}\right\rangle
\end{equation}
The cyclicity implies that the following periodic boundary condition must be
obeyed:%
\begin{equation*}
\left\vert u_{k+N}\right\rangle =\left\vert u_{k}\right\rangle
\end{equation*}
Which means that%
\begin{equation}
\hat{V}^{N}=\hat{I}
\end{equation}
so that the eigenvalues of $\hat{V}$ consists of the $N^{th}$ roots of unity:%
\begin{equation}
\hat{V}\left\vert v_{k}\right\rangle =v_{k}\left\vert v_{k}\right\rangle
\qquad\text{with}\qquad(v_{k})^{N}=1
\end{equation}
The set $\{\left\vert v_{k}\right\rangle \},(k=0,1,2,\ldots N-1)$ is also an
orthonormal basis of $W%
%TCIMACRO{\U{b4}}%
%BeginExpansion
{\acute{}}%
%EndExpansion
^{(N)}$ (the finite set of momentum states) and the N distinct eigenvalues
are explicitly given by%
\begin{equation}
v_{k}=e^{\frac{2\pi i}{N}k}
\end{equation}
With a convenient choice of phase, one can show that 
\begin{equation*}
\langle u^{j}\vert v_{k}\rangle=\frac{1}{\sqrt{N}}e^{\frac{2\pi i}{N}jk}=%
\frac{1}{\sqrt{N}}v_{k}^{j},
\end{equation*}
which is a finite analog of the plane wave function. It is not difficult to
convince oneself of the following property:%
\begin{equation}
\frac{1}{N}\sum \limits_{l=0}^{N-1}e^{\frac{2\pi i}{N}(j-k)l}=\delta_{jk}
\end{equation}

One may then define a unitary translator $\hat{U}$ that acts cyclically upon
the momentum basis analogously as was carried out for the $\hat{V}$ operator:%
\begin{equation*}
\hat{U}\left\vert v_{k}\right\rangle =\left\vert v_{k+1}\right\rangle
\end{equation*}%
The same analysis applies now to the $\hat{U}$ operator and amazingly, the
eigenstates of $\hat{U}$ can be shown to coincide with the original finite
set of position basis with the same spectrum of $\hat{V}$:%
\begin{equation*}
\hat{U}\left\vert u_{k}\right\rangle =v_{k}\left\vert u_{k}\right\rangle
\end{equation*}%
The finite index set $j,k=0,1,2,\ldots ,N-1$ takes values in the finite ring 
$Z_{N}$ of $mod\ N$ integers. When $N=p$ is a prime number, $Z_{p}$ has the
structure of a \textit{finite field}. One distinguished property that is not
difficult to derive is the well-known Weyl commutation relations between
powers of the unitary translator operators which is a finite analog of 
\eqref{continuous weyl
commutation relation}. 
\begin{equation}
\hat{V}^{j}\hat{U}^{k}=v^{jk}\hat{U}^{k}\hat{V}^{j}
\end{equation}

In the next subsection, we carry out the continuous limit of Schwinger's
finite structure in order to present our proposal of a finite analogue of
Aharonov's modular variable concept and we also discuss the concept of 
\textit{pseudo-degrees of freedom} \cite{lobonemes1995}, an idea that we
believe to be essential in order to capture the true nature of the modular
variables discrete Hilbert space.

\subsection{Heuristic Continuum Limit}

The implementation of the ``continuum heuristic limit'' (when the
dimensionality of the quantum spaces approach infinity) can be performed in
two distinct manners: one symmetric and the other non-symmetric between the
position and momentum states. First, we briefly outline below, the symmetric
case and following this, we present the non-symmetric limit which we shall
use to discuss the modular variable concept.

\subsubsection{The Symmetric Continuum Limit}

Let the dimension $N$ of the quantum space be an odd number (with no loss of
generality) and let us re-scale in equal footing the finite position and
momentum states as%
\begin{equation}
\left\vert q(x_{j})\right\rangle =\left( \frac{N}{2\pi}\right)
^{1/4}\left\vert u_{j}\right\rangle\text{ and }\left\vert
p(y_{k})\right\rangle =\left( \frac{N}{2\pi}\right) ^{1/4}\left\vert
v_{k}\right\rangle
\end{equation}
with%
\begin{equation}
x_{j}=\left( \frac{2\pi}{N}\right) ^{1/2}j\qquad\text{and}\qquad
y_{k}=\left( \frac{2\pi}{N}\right) ^{1/2}k
\end{equation}
so that the discrete indices are disposed symmetrically in relation to zero:%
\begin{align}
&j,k=-\frac{(N-1)}{2},\ldots,+\frac{(N-1)}{2} \quad\text{ and}  \notag \\
&\Delta x_{j}=\Delta y_{k}=\left( \frac{2\pi}{N}\right) ^{1/2}
\end{align}
We may write the completeness relations for both basis as%
\begin{align}
\hat{I}&=\sum \limits_{j=-\frac{(N-1)}{2}}^{+\frac{(N-1)}{2}}\left( \frac{%
2\pi}{N}\right) ^{1/2}\left\vert q(x_{j})\right\rangle \left\langle
q(x_{j})\right\vert  \notag \\
&= \sum \limits_{k=-\frac{(N-1)}{2}}^{+\frac{(N-1)}{2}}\left( \frac{2\pi}{N}%
\right) ^{1/2}\left\vert p(y_{k})\right\rangle \left\langle
p(y_{k})\right\vert
\end{align}
One can give a natural heuristic interpretation of the $N\rightarrow\infty$
limit for the above equation as%
\begin{equation}
\hat{I}=\int \limits_{-\infty}^{+\infty}dx\left\vert q(x)\right\rangle
\left\langle q(x)\right\vert =\int \limits_{-\infty}^{+\infty}dy\left\vert
p(y)\right\rangle \left\langle p(y))\right\vert
\end{equation}
The generalized orthonormalization relations show that the new defined basis
is formed by singular state-vectors with \textquotedblleft infinite
norm\textquotedblright:%
\begin{align}
\left\langle q(x_{j})\right\vert q(x_{k})\rangle&=\left\langle
p(x_{j})\right\vert p(x_{k})\rangle  \notag \\
&=\underset{N\rightarrow\infty}{lim}\left( \frac{N}{2\pi}\right)
^{1/2}\delta_{jk}\rightarrow\delta(x_{j}-x_{k})
\end{align}
The above equation also serves as a heuristic-based definition of the Dirac
delta function as a continuous limit of the discrete Kronecker delta. Yet
the overlap of elements from the distinct basis gives us the well-known
plane-wave basis as expected:%
\begin{equation}
\left\langle q(x_{j})\right\vert p(y_{k})\rangle=\frac{1}{\sqrt{2\pi}}%
e^{ix_{j}y_{k}}\qquad(\hbar=1)  \label{plane wave function 3}
\end{equation}

\subsubsection{The Non-Symmetric Continuum Limit}

We introduce now a different scaling for the variables of position and
momentum with a given $\xi\in%
%TCIMACRO{\U{211d} }%
%BeginExpansion
\mathbb{R}
%EndExpansion
$:%
\begin{equation}
\left\vert q(x_{j})\right\rangle =\left( \frac{N}{\xi}\right) ^{1/
2}\left\vert u_{j}\right\rangle \text{ and }\left\vert p(y_{k})\right\rangle
=\left( \frac{\xi}{2\pi}\right) ^{1/ 2}\left\vert v_{k}\right\rangle
\end{equation}
with%
\begin{equation}
x_{j}=\frac{\xi}{N}j\qquad\text{and}\qquad y_{k}=\frac{2\pi}{\xi}k
\end{equation}
so that 
\begin{equation}
\Delta x_{j}=\frac{\xi}{N}\rightarrow0\qquad\text{for}\qquad N\rightarrow
\infty
\end{equation}
Only the position states become singular and the $x_{j}$ variable takes
value in a bounded quasi-continuum set so that the resolution of identity
can be written as%
\begin{equation}
\hat{I}_{\xi}=\int \limits_{-\xi/ 2}^{+\xi/ 2}dx\left\vert q(x)\right\rangle
\left\langle q(x)\right\vert =\frac{2\pi}{\xi}\sum
\limits_{k=-\infty}^{+\infty}\left\vert p(y_{k})\right\rangle \left\langle
p(y_{k})\right\vert
\end{equation}
Note that the momentum states continue to be of finite norm and their sum is
taken over the enumerable but discrete set of integers $k\in Z$. The
identity operator $\hat{I}_{\xi}$ can be thought as the projection operator
on the subspace of periodic functions with period $\xi$. The overlap between
position and momentum states is again given by the usual plane-wave function %
\eqref{plane wave function 3}. Note also that we could have reversed the
above procedure by choosing from the beginning the opposite scaling for the
position and momentum states. In this case, the momentum eigenkets would
form a continuous bounded set of singular state-vectors and the position
eigenvectors would form an enumerable infinity of finite norm kets.

\subsubsection{Pseudo Degrees of Freedom and the finite analog of the
Aharonov-Zak states}

In the $x^{1}-x^{2}$ plane, one can easily visualize the translations of the
ket $\left\vert q(\vec{x})\right\rangle $ acted repeatedly upon with $\hat {V%
}_{\vec{\xi}}$ as in Figure \ref{figtwodegreesfreedom}, where the resulting
position kets can be represented on a straight line in the plane that
contains point $\vec{x}$ but with slope given by the $\vec{\xi}$ direction.%
\begin{figure}[H]
\centering
\includegraphics[
natheight=1.760800in,
natwidth=2.405900in,
height=1.7979in,
width=2.4474in
]%
{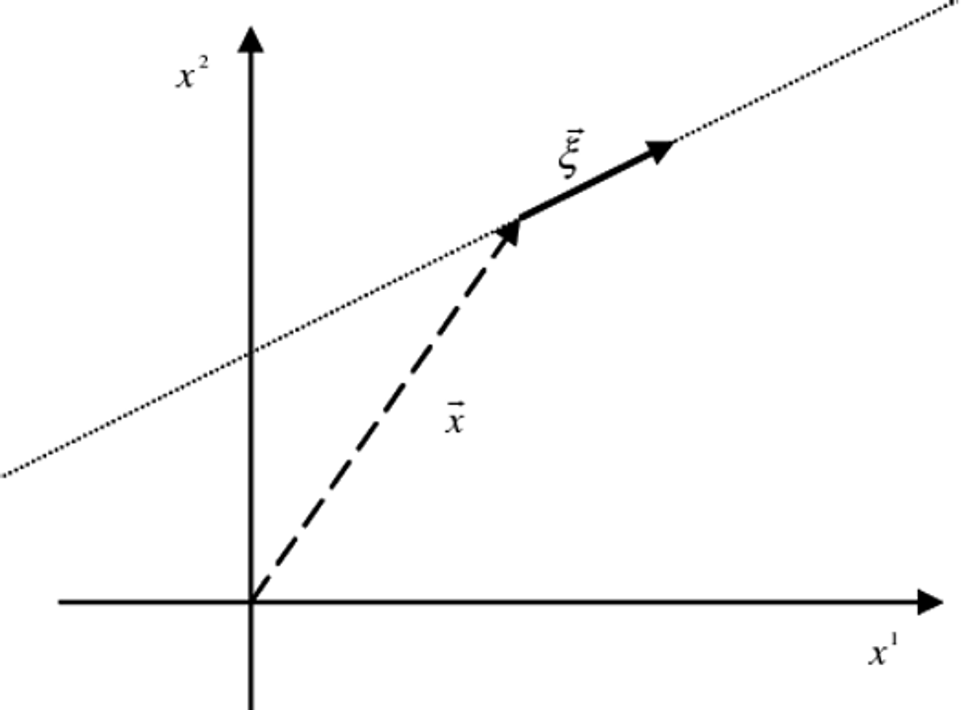}%
\caption{Two degrees of freedom.}\label{figtwodegreesfreedom}
\end{figure}
%\FRAME{ftbpFU}{2.4474in}{1.7979in}{0pt}{\Qcb{Figure 5- Two degrees of freedom%
%}}{}{Figure}{\special{ language "Scientific Word"; type "GRAPHIC";
%maintain-aspect-ratio TRUE; display "USEDEF"; valid_file "T"; width
%2.4474in; height 1.7979in; depth 0pt; original-width 2.4059in;
%original-height 1.7608in; cropleft "0"; croptop "1"; cropright "1";
%cropbottom "0"; tempfilename 'MX3V1B05.wmf';tempfile-properties "PR";}} 
Of course, to reach an arbitrary point in the plane, one needs at least two
linear independent directions. This is precisely what one means when it is
said that the plane is two-dimensional. But things for finite quantum spaces
are not quite so simple. Let us consider first a 4-dimensional system given
by the product of two 2-dimensional spaces (two qubits) $W^{(4)}=W^{(2)}%
\otimes W^{(2)}$ (it is important to notice here that one must not confuse
the dimension of space, the so called degree of freedom with the
dimensionality of the quantum vector spaces). We shall discard in the
following discussion, the indices that indicate dimensionality to eliminate
excessive notation. So let $\{\left\vert u_{0}\right\rangle ,\left\vert
u_{0}\right\rangle \}$ be the position basis for each individual qubit space
so that computational (unentangled) basis of the tensor product spaces is $%
\{\left\vert u_{0}\right\rangle \otimes\left\vert u_{0}\right\rangle
,\left\vert u_{0}\right\rangle \otimes\left\vert u_{1}\right\rangle
,\left\vert u_{1}\right\rangle \otimes\left\vert u_{0}\right\rangle
,\left\vert u_{1}\right\rangle \otimes\left\vert u_{1}\right\rangle \}$. One
may represent such finite 2-space as the discrete set formed by the four
points depicted in Figure \ref{finite2qubits}: 
\begin{figure}[H]%
\centering
\includegraphics[
natheight=1.760800in,
natwidth=1.823000in,
height=1.7979in,
width=1.8602in
]%
{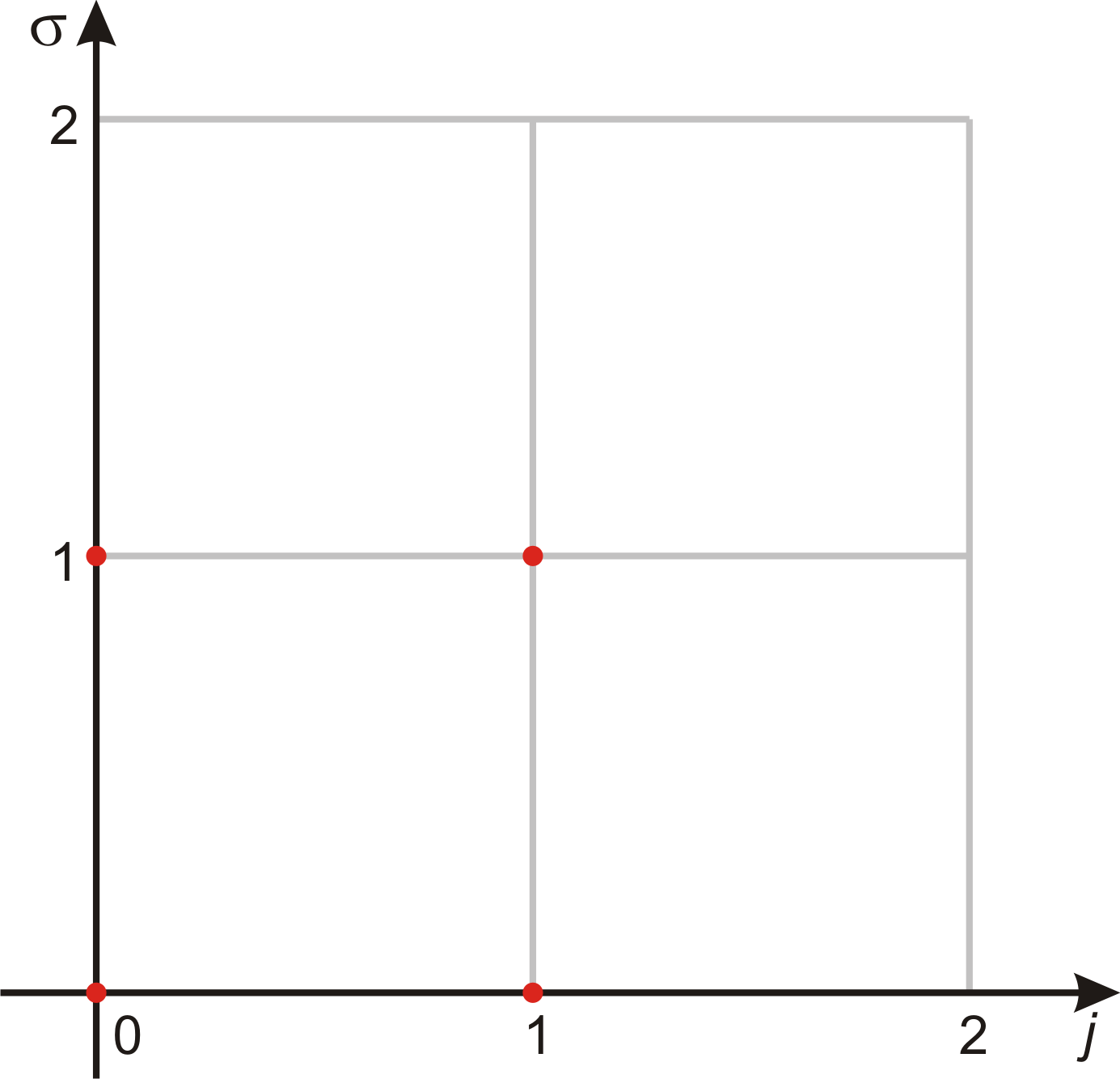}%
\caption{Finite 2-space for 2 qubits.}\label{finite2qubits}
\end{figure}
%\FRAME{ftbpFU}{1.8602in}{%
%1.7979in}{0pt}{\Qcb{Figure 6 - Finite 2-space for 2 qubits}}{}{Figure}{%
%\special{ language "Scientific Word"; type "GRAPHIC"; maintain-aspect-ratio
%TRUE; display "USEDEF"; valid_file "T"; width 1.8602in; height 1.7979in;
%depth 0pt; original-width 1.823in; original-height 1.7608in; cropleft "0";
%croptop "1"; cropright "1"; cropbottom "0"; tempfilename
%'MX3V1B06.wmf';tempfile-properties "PR";}}

One may even construct distinct ``straight lines'' in this discrete
two-dimensional space acting upon the computational basis $\{\left\vert
u_{j}\right\rangle \otimes\left\vert u_{k}\right\rangle \}$ with the $\hat {V%
}\otimes\hat{V}$ operator as shown in Figure \ref{figlinesdiscrete}.
\begin{figure}[H]%
\centering
\includegraphics[
natheight=1.760800in,
natwidth=1.833400in,
height=1.7979in,
width=1.8715in
]%
{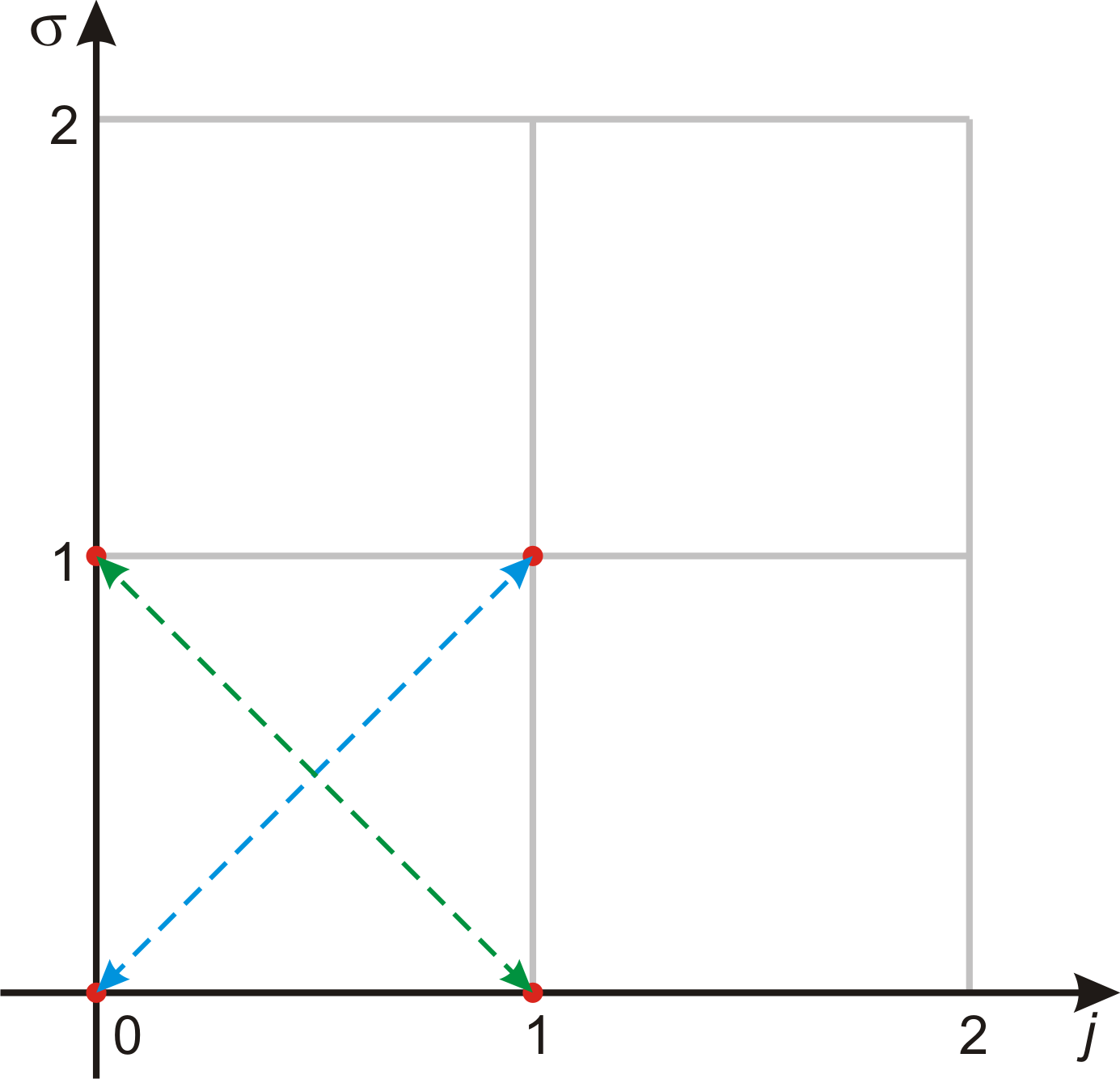}%
\caption{{Discrete parallel lines (0, 0); (1, 1) and (0,1);(1,0).}}\label{figlinesdiscrete}
\end{figure}
%\FRAME{%
%ftbpFU}{1.8715in}{1.7979in}{0pt}{\Qcb{{\protect\tiny Fig. Discrete parallel
%lines (0, 0); (1, 1) and (0,1); (1,0)}}}{}{Figure}{\special{ language
%"Scientific Word"; type "GRAPHIC"; maintain-aspect-ratio TRUE; display
%"USEDEF"; valid_file "T"; width 1.8715in; height 1.7979in; depth 0pt;
%original-width 1.8334in; original-height 1.7608in; cropleft "0"; croptop
%"1"; cropright "1"; cropbottom "0"; tempfilename
%'MX3V1B07.wmf';tempfile-properties "PR";}} 
Each of the two parallel ``straight lines'' are geometric invariants of the discrete 2-plane under
the action of $\hat{V}\otimes\hat{V}$.

Consider now, a six-dimensional quantum space $W^{(6)}$ = $W^{(2)}\otimes
W^{(3)}$ given by the product of a qubit and a qutrit with finite position
basis given by respectively $\{\left\vert u_{0}\right\rangle ,\left\vert
u_{1}\right\rangle \}$ and $\{\left\vert u_{0}\right\rangle ,\left\vert
u_{1}\right\rangle ,\left\vert u_{2}\right\rangle \}$. In this case, the
fact the dimensions of the individual are \textit{coprime} means that the
action of the $\hat{V}\otimes\hat{V}$ operator on the product basis $%
\{\left\vert u_{j}\right\rangle \otimes\left\vert u_{\sigma}\right\rangle
\},\ j=0,1$ and $\sigma=0,1,2$ can be identified with the action of $\hat{V}%
^{(6)}=\hat {V}^{(2)}\otimes\hat{V}^{(3)}$ on the same basis relabeled as $%
\{\left\vert u_{0}\right\rangle ,\left\vert u_{1}\right\rangle ,\left\vert
u_{2}\right\rangle ,\left\vert u_{3}\right\rangle ,\left\vert
u_{4}\right\rangle ,\left\vert u_{5}\right\rangle \}$. One can start with
the $\left\vert u_{0}\right\rangle \otimes\left\vert u_{0}\right\rangle $
state and cover the whole space with one single line as shown in Figure \ref%
{figsingleline}.
\begin{figure}[H]%
\centering
\includegraphics[
natheight=1.760800in,
natwidth=2.166400in,
height=1.7979in,
width=2.2061in
]%
{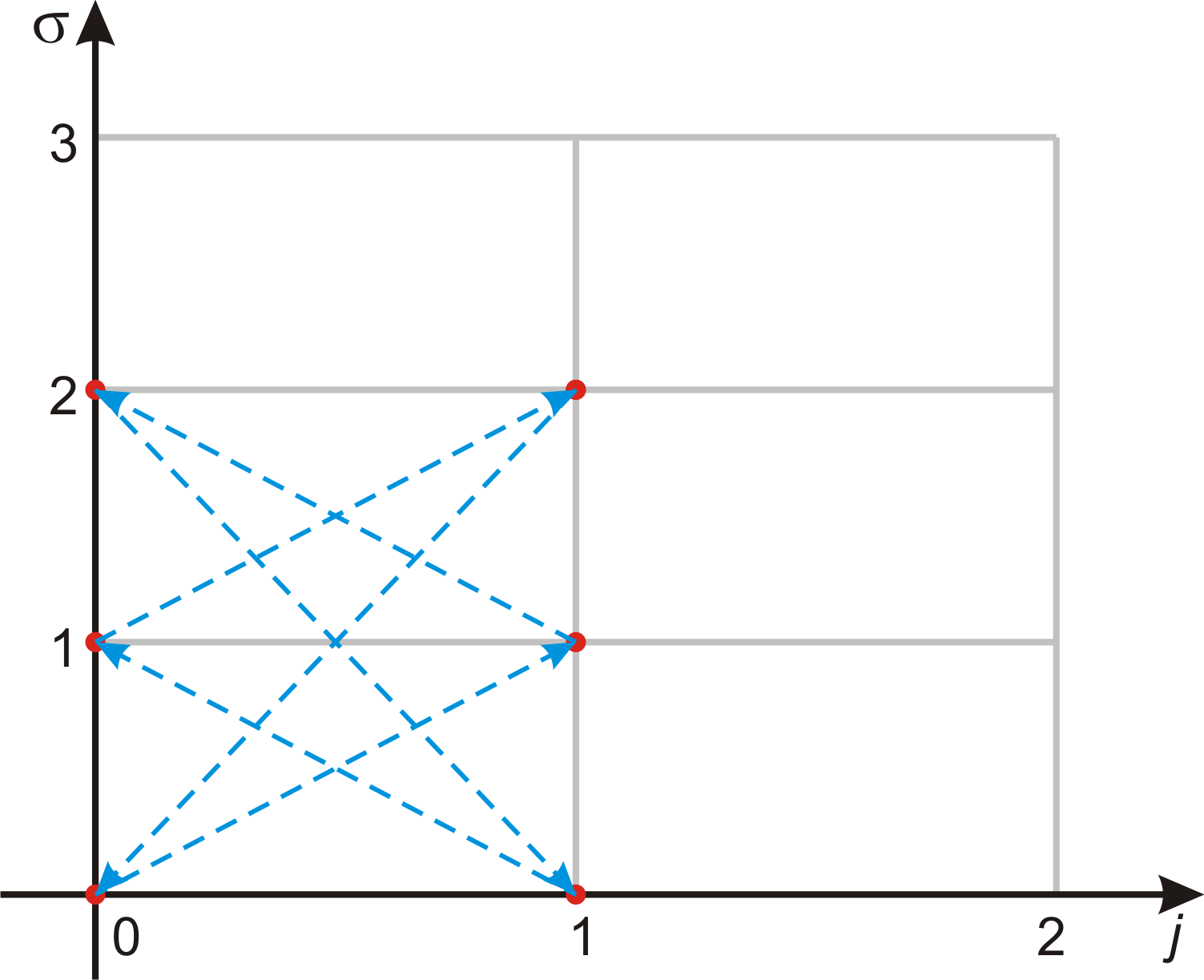}%
\caption{{A single line covers the whole space.}}\label{figsingleline}
\end{figure}
%\FRAME{ftbpFU}{2.2061in}{1.7979in}{0pt}{\Qcb{{\protect\tiny %
%Fig. 8 - A single line covers the whole space}}}{}{Figure}{\special{
%language "Scientific Word"; type "GRAPHIC"; maintain-aspect-ratio TRUE;
%display "USEDEF"; valid_file "T"; width 2.2061in; height 1.7979in; depth
%0pt; original-width 2.1664in; original-height 1.7608in; cropleft "0";
%croptop "1"; cropright "1"; cropbottom "0"; tempfilename
%'MX3V1B08.wmf';tempfile-properties "PR";}}

This reduction of two degrees of freedom to only one single effective degree
of freedom is a general fact for all product spaces when the dimensions of
the factor spaces are co-prime. This fact follows from elementary number
theory and it can be shown that when $gcd(N_{a},N_{b})=1$ for quantum spaces 
$W^{(N_{a})}$ and $W^{(N_{b})}$ then one may say that the two degrees of
freedom are actually pseudo-degrees of freedom because one can associate 
\textit{only one effective} single degree of freedom to the system in this
sense. In the case above, it is easy to see that one of the pseudo degrees
of freedom is nothing else but a factorizing period of the larger space. In
this way we can either interpret the above finite six-dimensional position
space as three periods of two or as two periods of three. All this is rather
obvious and elementary, but surprisingly this is precisely the kind of
mathematical structure that we propose to be behind Aharonov's concept of
modular variables. We can give a precise mathematical description of a
finite analogue of this phenomenon in terms of the pseudo-degrees of
freedom: consider $W^{(N)}=W^{(N_{a})}\otimes W^{(N_{b})}$ as the state
space for a quantum mechanical system with $gcd(N_{a},N_{b})=1$. We can then
offer an interpretation for this single degree of freedom of $W^{(N)}$ as a
degree composed of \textquotedblleft$N_{b}$ periods of size $N_{a}$%
\textquotedblright\ (or vice-versa). In fact, we may define the following
state of $W^{(N)}$:%
\begin{equation*}
\left\vert j_{a},\sigma_{b}^{(N)}\right\rangle =\left\vert
v_{j_{a}}{}^{(N_{a})}\right\rangle \otimes\left\vert
u_{\sigma_{b}}{}^{(N_{b})}\right\rangle
\end{equation*}
This state is simultaneously an eigenstate of finite momentum of $%
W^{(N_{a})} $ and finite position of $W^{(N_{b})}$and clearly represents a
finite analogue of the state represented in Figure 10. They are also
simultaneous eigenstates of the (obviously commuting operators since they
act on different spaces) unitary operators $\hat{V}^{(N_{a})}\otimes\hat{I}%
^{(N_{b})}$ and $\hat {I}^{(N_{a})}\otimes\hat{U}^{(N_{b})}$. Let us
illustrate this again with an example of our \textquotedblleft toy
six-dimensional\textquotedblright\ case: Let the state $\left\vert
1,2^{(6)}\right\rangle =\left\vert v_{1}{}^{(2)}\right\rangle
\otimes\left\vert u_{2}{}^{(3)}\right\rangle $ be represented by the phase
space plot (figure \ref{figexamplemod2}).
\begin{figure}[H]%
\centering
\includegraphics[
natheight=1.760800in,
natwidth=2.156000in,
height=1.7979in,
width=2.1958in
]%
{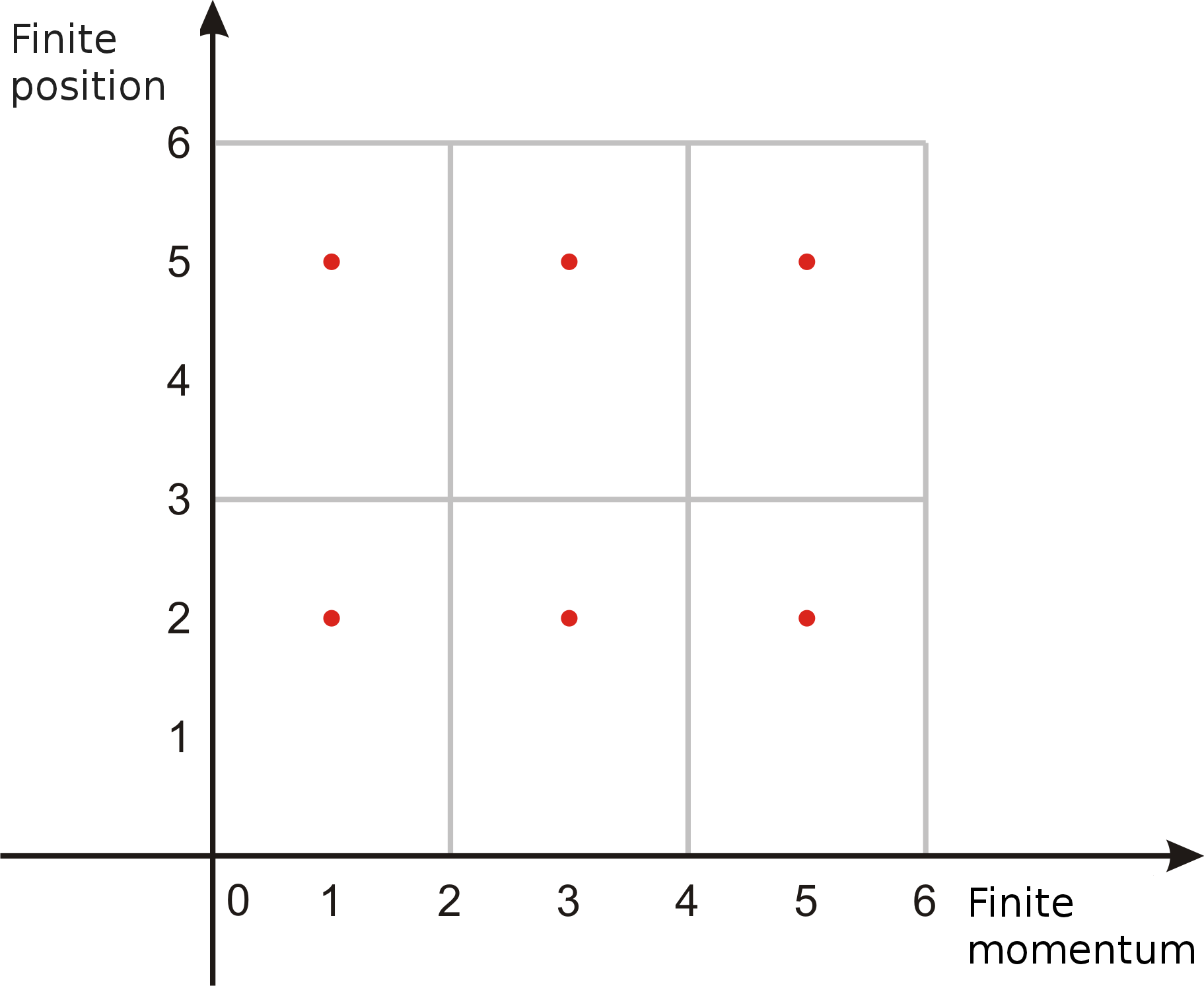}%
\caption{State $\left\vert 1,2^{(6)}\right\rangle \in W^{\left(6\right)  }$.}\label{figexamplemod2}
\end{figure}
%\FRAME{ftbpFU}{2.1958in}{1.7979in}{%
%0pt}{\Qcb{Figure 11 - state $\left\vert 1,2^{(6)}\right\rangle \in W^{\left(
%6\right) }$}}{}{Figure}{\special{ language "Scientific Word"; type
%"GRAPHIC"; maintain-aspect-ratio TRUE; display "USEDEF"; valid_file "T";
%width 2.1958in; height 1.7979in; depth 0pt; original-width 2.156in;
%original-height 1.7608in; cropleft "0"; croptop "1"; cropright "1";
%cropbottom "0"; tempfilename 'MX3V1B0B.wmf';tempfile-properties "PR";}}
States with this peculiar mathematical structure have been also described by
Zak to study systems with periodic symmetry in Quantum Mechanics \cite%
{zak1967,zak1968}. We may call them Aharonov-Zak states (AZ). The particular
AZ state obtained in the $n$-slit (with large enough $n$) can be thought as
obtained by an ideal projective measurement performed by the slit apparatus
on the second modular variable subspace of the particle:%
\begin{align}
&|v_{0}^{(N_{a})}\rangle\otimes|v_{0}^{(N_{b})}\rangle \overset{projective
``space" measurement}{\underset{in\ the\ \sec ond\ subspace}{\Longrightarrow}%
}  \notag \\
&|v_{0}^{(N_{a})}\rangle\otimes|u_{0}^{(N_{b})}\rangle =\frac{1}{\sqrt{N_{b}}%
}\sum
\limits_{\sigma_{b}=0}^{N_{b}-1}|v_{0}^{(N_{a})}\rangle\otimes|v_{%
\sigma_{b}}^{(N_{b})}\rangle
\end{align}

The continuum limit of the AZ state can be constructed through the
non-symmetric limit discussed in a previous subsection. The only care that
must be taken is that, given the two subspaces, the opposite limit must be
taken for each subspace. That is, if one chooses to make the momentum basis
of the first subspace go to infinity as a bounded continuum, then for the
second subspace it is the position basis that must become a bounded
continuum set and vice-versa.

\section{Concluding Remarks}

We presented a compact review of some phase-space formalisms of quantum
mechanics, with also a more intrinsic and coordinate independent notation,
in order to discuss some of its most outstanding and up-to-date
applications: namely the theory of modular variables and the concept of weak
values developed by Yakir Aharonov and his many collaborators in the last
forty years. The theory of non-relativistic Quantum Mechanics was created,
or discovered, back in the 1920's mainly by Heisenberg and Schr\"{o}dinger,
but it is fair enough to say that a more modern and unified approach to the
subject was introduced by Dirac and Jordan with their (intrinsic)
Transformation Theory. In his famous text book on Quantum Mechanics \cite%
{Dirac}, Dirac introduced his well-known bra and ket notation and a view
that even Einstein, who was, as is well known, very critical towards the
general quantum physical world-view, considered the most elegant
presentation of the theory at that time \cite{Farmelo}. One characteristic
of this formulation is that the observables of position and momentum are
truly treated equally so that an intrinsic phase-space approach seems a
natural course to be taken. In fact, we may distinguish at least two
different quantum mechanical approaches to the structure of the quantum
phase space: The Weyl-Wigner formalism and the advent of the theory of
coherent states. 
%The Weyl-Wigner formalism has been used in many
%applications ranging from the discussion of the classical/quantum mechanical
%transition and quantum chaos to signal analysis \cite{gutzwiller,Namias}.
%The coherent state formalism had a profound impact on quantum optics and,
%during the course of time, has found applications in diverse areas such as
%geometric quantization, wavelet and harmonic analysis \cite%
%{combescure,sudarshan2006}. 
We have used these phase-space formalisms of Quantum Mechanics in order to describe two major insights due to Aharonov
and his many collaborators over the last four decades: The concept of 
\textit{weak values} and those of \textit{modular variables}. The quantum
mechanical phase space approach is particularly well suited to decribe the
concept of weak value as complex number-valued element of reality where both
the real and imaginary parts have equal physical status in the theory. The
modular variable is a concept that is slightly more difficult to grasp
mathematically. It gives rise to dynamical non-local effects which seem to
be intrinsically different from the kinematic non-local effects of entangled
states like EPR pairs of particles.

Though there is no doubts about the fact that the structure of modular
variables is much easier to understand intuitively within the Heisenberg
picture, the problem of \textit{how} to describe the finite dimensional
Hilbert space where the modular variables actually ``live in" is a quite different and
subtle issue. It is very important to notice that the Schwinger-like
approach that we present here for the construction of the modular variable
qubits (or qudits) is different from the original path chosen by Aharonov
and collaborators. In their original approach, the whole state space is a \textit{%
direct sum} between the modular variable qudit space and the
``rest of the quantum space" while in the Schwinger-based
approach the whole state space is the \textit{tensor product} between the qudit
space and \textquotedblleft the rest of the quantum space". This difference
is \textit{not} just one of academic nature. One must remember the important
fact that the rule of how one \textit{compounds} two distinct quantum
systems is described mathematically by the tensor product between each
subsystem \textit{instead} of their direct sum. This important fact is what
is behind one of the most fundamental differences between classical and
quantum physics, the phenomenon of quantum \textit{entanglement} which is
the basis of modern quantum information theory. One may speculate that this
fact could lead to some possible experimental procedure in order to decide
this important issue. One could envisage the use of the quantum mechanical
``pseudo-degrees of freedom" (as the periodicity of a lattice of slits) as a 
\textit{new resource} for manipulation and storage of qubits with possible
new applications in the field of quantum information. We also hope that this
may help to shed some light on the physical differences between dynamic
non-locality and that of kinematic non-locality in quantum physics.

%\section{Section title}
%\label{sec:1}
%Text with citations \cite{RefB} and \cite{RefJ}.
%\subsection{Subsection title}
%\label{sec:2}
%as required. Don't forget to give each section
%and subsection a unique label (see Sect.~\ref{sec:1}).
%\paragraph{Paragraph headings} Use paragraph headings as needed.
%\begin{equation}
%a^2+b^2=c^2
%\end{equation}
%
%% For one-column wide figures use
%%\begin{figure}
%%% Use the relevant command to insert your figure file.
%%% For example, with the graphicx package use
%%  \includegraphics{example.eps}
%%% figure caption is below the figure
%%\caption{Please write your figure caption here}
%%\label{fig:1}       % Give a unique label
%%\end{figure}
%%
%% For two-column wide figures use
%%\begin{figure*}
%%% Use the relevant command to insert your figure file.
%%% For example, with the graphicx package use
%%  \includegraphics[width=0.75\textwidth]{example.eps}
%%% figure caption is below the figure
%%\caption{Please write your figure caption here}
%%\label{fig:2}       % Give a unique label
%%\end{figure*}
%%
%% For tables use
%\begin{table}
%% table caption is above the table
%\caption{Please write your table caption here}
%\label{tab:1}       % Give a unique label
%% For LaTeX tables use
%\begin{tabular}{lll}
%\hline\noalign{\smallskip}
%first & second & third  \\
%\noalign{\smallskip}\hline\noalign{\smallskip}
%number & number & number \\
%number & number & number \\
%\noalign{\smallskip}\hline
%\end{tabular}
%\end{table}

\begin{acknowledgements}
A. C. Lobo and C. A. Ribeiro acknowledge financial support from Conselho Nacional de Desenvolvimento
Científico e Tecnológico (CNPq), and we also gratefully acknowledge the support of Chapman University and the Institute for Quantum
Studies. This work has been supported in part by the Israel Science Foundation Grant No. 1125/10.
%A. C. Lobo and C. A. Ribeiro acknowledge financial support from Conselho
%Nacional de Desenvolvimento Cient\'{i}fico e Tecnol\'{o}gico (CNPq). This work has been supported in part by the Israel Science Foundation Grant No. 1125/10
\end{acknowledgements}

% BibTeX users please use one of
%\bibliographystyle{spbasic}      % basic style, author-year citations
%\bibliographystyle{spmpsci}      % mathematics and physical sciences
% APS-like style for physics
%\bibliography{}   % name your BibTeX data base
%{\normalsize %\bibliographystyle{plain}
%\bibliographystyle{spphys}
%\bibliography{references}
%}

 %% Non-BibTeX users please use
%\begin{thebibliography}{}
%%
%% and use \bibitem to create references. Consult the Instructions
%% for authors for reference list style.
%%
%\bibitem{RefJ}
%% Format for Journal Reference
%Author, Article title, Journal, Volume, page numbers (year)
%% Format for books
%\bibitem{RefB}
%Author, Book title, page numbers. Publisher, place (year)
%% etc
%\end{thebibliography}

\end{document}